\documentclass[11pt]{article}
\usepackage[margin=1in]{geometry}
\usepackage[T1]{fontenc}
\usepackage{lmodern,microtype}
\usepackage{amsmath,amssymb,amsthm,mathtools,bm}
\usepackage{graphicx,booktabs,array,tikz,pgfplots}
\usetikzlibrary{arrows.meta,positioning,calc,fit,backgrounds,decorations.pathreplacing}
\pgfplotsset{compat=1.18}
\usepackage{caption,placeins,etoolbox}
\AtBeginEnvironment{thebibliography}{\raggedright}
\usepackage[colorlinks=true,linkcolor=blue!45!black,citecolor=blue!45!black,urlcolor=blue!45!black,bookmarksnumbered=true]{hyperref}
\usepackage[nameinlink,noabbrev]{cleveref}
\numberwithin{equation}{section}
\newtheorem{theorem}{Theorem}[section]
\newtheorem{lemma}[theorem]{Lemma}
\newtheorem{proposition}[theorem]{Proposition}
\newtheorem{corollary}[theorem]{Corollary}
\theoremstyle{definition}
\theoremstyle{remark}\newtheorem{remark}[theorem]{Remark}
\DeclareMathOperator{\Tr}{Tr}
\DeclareMathOperator{\id}{id}
\DeclareMathOperator{\supp}{supp}
\DeclareMathOperator{\diag}{diag}

\newcommand{\N}{\mathcal N}\newcommand{\M}{\mathcal M}
\newcommand{\EF}{E_{\mathrm F}}\newcommand{\UF}{U_{\mathrm F}}\newcommand{\VF}{V_{\mathrm F}^{\rightarrow}}\newcommand{\Rone}{R_{\rightarrow}}\newcommand{\Rtwo}{R_{\leftrightarrow}}
\newcommand{\Q}{\mathcal Q}\newcommand{\Msg}{W}\newcommand{\Est}{\widehat{W}}\newcommand{\eps}{\varepsilon}
\newcommand{\ket}[1]{|#1\rangle}\newcommand{\bra}[1]{\langle#1|}\newcommand{\proj}[1]{|#1\rangle\!\langle#1|}
\newcommand{\Cfb}{C_{\leftarrow}}\newcommand{\ent}{\mathsf H}
\newcommand{\cD}{\mathcal D}\newcommand{\cA}{\mathcal A}\newcommand{\cE}{\mathcal E}\newcommand{\cF}{\mathcal F}
\tikzset{op/.style={draw,rounded corners=2pt,fill=white,minimum height=.68cm,inner sep=5pt,align=center},
 qwire/.style={-{Latex[length=2mm]},thick},cwire/.style={-{Latex[length=2mm]},thick,dashed}}
\allowdisplaybreaks[2]
\title{Entanglement-of-formation potentials for feedback-assisted classical communication and randomness distillation}
\author{Mark M. Wilde\thanks{Email: \href{mailto:wilde@cornell.edu}{wilde@cornell.edu}}\\
\textit{School of Electrical and Computer Engineering, Cornell University}\\
\textit{Ithaca, New York 14850, USA}}
\date{October 2, 2026}
\hypersetup{pdftitle={Entanglement-of-formation potentials for feedback-assisted classical communication and randomness distillation},pdfauthor={Mark M. Wilde}}
\begin{document}
\maketitle
\begin{abstract}
We develop entanglement-of-formation potentials for feedback-assisted classical communication and one-way randomness distillation. For a finite-dimensional quantum channel without initial shared entanglement, an exact amortization theorem bounds the increase of message information plus entanglement remaining across the communicating laboratories. This proves a classical-feedback capacity bound using a mixed-convex-roof expression previously studied by Winter and Yang in the context of potential capacities. We determine the capacity of flagged mixtures of noiseless and entanglement-breaking channels, improve a published depolarizing-channel converse, and evaluate the new bound exactly for the amplitude damping channel. A state analogue gives a finite-blocklength upper bound on the net shared randomness obtainable with one-way classical communication. Amplitude-damping Choi states provide exact benchmarks, while a full-rank generalized amplitude-damping Choi state demonstrates a useful improvement where we do not evaluate the operational rate exactly. We also distinguish one-way from two-way randomness distillation: a two-round weak-measurement protocol strictly exceeds the exact one-way rate of every nontrivial qubit isotropic state. The protocol preserves quantum side information during its first decoding and extracts additional randomness in the reverse direction. This shows why the one-way state potential is not a general two-way converse. Together, these results establish a common entanglement-based approach to sharper communication and randomness bounds, while revealing how retained quantum correlations enable two-way distillation to surpass one-way limits.
\end{abstract}

\setcounter{tocdepth}{2}
\tableofcontents

\section{Introduction}\label{sec:intro}
\subsection{Background and motivation}
Classical communication plays different roles in two basic quantum Shannon-theoretic tasks. In channel coding, a sender Alice holds a message before a protocol begins and uses a noisy quantum channel to communicate it to a receiver Bob. Noiseless classical feedback allows Bob to report measurement outcomes before later inputs are prepared. In randomness distillation, the parties instead begin with copies of a correlated bipartite state and seek nearly uniform shared random bits. Classical communication helps them agree on these bits, but its length must be subtracted from their gross yield; otherwise communication alone would trivialize the task~\cite{DW04,LRWW23,NW24}. Both tasks require understanding how local measurements convert quantum correlations into useful classical information.
Feedback-assisted channel coding introduces an additional complication absent from ordinary ensemble coding: an early transmission may establish entanglement between the laboratories, and a later input may be correlated with Bob's retained memory~\cite{Bowen04,BN05,Wilde17}. For example, Alice can first distribute one share of a Bell state and then use dense coding~\cite{BW92}. The two noiseless qubit transmissions reveal zero and two message bits, respectively. A converse that bounds each transmission's message-information increase by one bit would therefore fail, even though the total communication respects the two-qubit dimension bound. Figure~\ref{fig:protocol-overview} illustrates the adaptive structure without the bookkeeping needed for its analysis.

The static randomness distillation problem has a related distinction. An instrument can extract a classical variable while leaving a quantum system that remains correlated with the other party. With communication in only one direction, the optimal asymptotic rate is characterized by a regularized measurement-induced Holevo information~\cite{DW04}. This characterization is exact but is not generally a tractable numerical formula. A computable or certifiably evaluable one-way upper bound is therefore valuable even when the underlying channel-coding problem has no known closed solution. With two-way communication, the parties can alternate extraction and reconciliation; the final shared variable need not be fixed at the beginning.
These observations motivate an accounting method that distinguishes classical information already extracted from quantum correlations available for later use. Entanglement of formation~\cite{BDSW96} is a natural candidate for the latter term because it is expressed as a convex roof and cannot increase on average under local instruments. The challenge is to prove an amortized channel estimate for arbitrary mixed memory states, and to determine which part of the same construction remains valid for state distillation. The different resource conventions make a direct identification of the two operational tasks inappropriate.
\begin{figure}
 \centering\begin{tikzpicture}[x=1cm,y=1cm,font=\small]
\draw[-{Latex},thin] (1.0,4.35)--node[above,font=\footnotesize]{time}(12.1,4.35);
\node[anchor=east] at (.0,3.1) {Alice};
\node[anchor=east] at (.0,0) {Bob};
\node[op,minimum width=2.35cm] (a1) at (1.7,3.1) {Prepare input};
\node[op,minimum width=2.6cm] (a2) at (7.5,3.1) {Choose next input};
\node[op,minimum width=2.4cm] (b1) at (4.4,0) {Process output};
\node[op,minimum width=2.7cm] (b2) at (10.3,0) {Process / decode};
\node[op,minimum width=.85cm] (n1) at (3.05,1.55) {$\N$};
\node[op,minimum width=.85cm] (n2) at (8.9,1.55) {$\N$};
\draw[qwire] (.15,3.1)--(a1.west);
\draw[qwire] (a1.east)--node[above=5pt,font=\footnotesize]{retained memory}(a2.west);
\draw[qwire] (a2.east)--(12.15,3.1);
\draw[qwire] (.15,0)--(b1.west);
\draw[qwire] (b1.east)--node[below=5pt,font=\footnotesize]{retained memory}(b2.west);
\draw[cwire] (b2.east)--node[above]{$\widehat m$}(12.45,0);
\draw[qwire] (a1.south)--(n1.north);
\draw[qwire] (n1.south)--(b1.north);
\draw[qwire] (a2.south)--(n2.north);
\draw[qwire] (n2.south)--(b2.north);
\draw[cwire] (b1.north)--node[sloped,above=4pt,font=\footnotesize]{classical feedback}(a2.south west);
\node[font=\footnotesize] at (1.7,3.85) {message $m$};
\node[font=\footnotesize] at (7.5,3.85) {$m$ and previous outcomes};
\node[font=\footnotesize,align=center] at (6.2,-.9) {Forward systems move down and right; feedback moves up and right.};
\end{tikzpicture}
 \caption{Classical communication with receiver-to-sender feedback. Alice's later input depends on Bob's earlier classical outcome. Each party may retain quantum memory, and earlier forward transmissions may entangle those memories. Time flows from left to right. Forward quantum systems move diagonally down and right, while the classical feedback moves up and right. Solid arrows indicate quantum systems and dashed arrows classical communication. No initial shared entanglement or free forward classical communication is supplied. Figure~\ref{fig:protocol} later adds the state and potential labels used in the converse.}
 \label{fig:protocol-overview}
\end{figure}

\subsection{Summary of results}
\subsubsection{Classical-feedback-assisted classical communication}

For a channel $\N\colon A\to B$ and an input state $\rho_A$, let $f_\N(\rho)$ denote the entanglement of formation between a purifying reference and the channel output. We use the following channel quantity:
\begin{equation}
 \UF(\N)\coloneqq\sup_{(p_x,\rho^x)_x}
 \left\{S(\N(\bar\rho))-\sum_xp_xS(\N(\rho^x))
 +\sum_xp_x f_\N(\rho^x)\right\},\qquad
 \bar\rho\coloneqq\sum_xp_x\rho^x,
 \label{eq:intro-UF}
\end{equation}
where $S$ denotes the von Neumann entropy.
The ensemble states need not be pure. The first two terms form an output Holevo information, while the last term measures entanglement left after transmitting locally prepared purifications. Section~\ref{sec:model} defines the quantities precisely and gives a finite cardinality bound for the optimization.

Our first main result is the feedback-capacity converse
\begin{equation}
 \Cfb(\N)\leq\UF(\N).
 \label{eq:intro-bound}
\end{equation}
Here $\Cfb$ is the ordinary vanishing-error classical capacity with arbitrary noiseless classical feedback, arbitrary local quantum memories, and no initial shared entanglement. More precisely, every $n$-use code with $M$ equiprobable messages and average error $\eps$ satisfies
\begin{equation}
 (1-\eps)\log_2M\leq n\UF(\N)+h_2(\eps).
 \label{eq:intro-finite}
\end{equation}
This is a weak converse, not an exact finite-error formula or a strong-converse statement.

The proof uses the potential $I(\Msg;B'T)+\EF(A';B'|\Msg T)$, where $\Msg$ is the message register, $T$ is the shared transcript, and $A'$ and $B'$ are the current memories of Alice and Bob. This potential has its roots in the conditional-entropy accounting of Eq.~(17) of~\cite{DQSW19}. Entanglement of formation agrees with that entropy term on conditionally pure laboratory states and can be smaller for mixed states. The pointwise comparison alone is not a converse; the mixed-state amortization proof supplies the needed channel estimate. We prove that local operations and classical feedback cannot increase the new potential, and that one use of $\N$ increases it by at most $\UF(\N)$. The proof first treats pure message-conditioned inputs and then refines mixed states into pure ensembles while retaining the information associated with the refinement.

The channel applications illustrate two complementary uses of this accounting. First, a flagged mixture of a noiseless channel and an entanglement-breaking channel has an exactly determined feedback-assisted capacity. For the qubit depolarizing channel, this simulation gives an analytic improvement over the published $\Upsilon$ bound in~\cite{DKQSW23}, while direct evaluation of $\UF$ is tighter still. At depolarizing probability $p=1/2$, $\UF\approx0.1889430634$ bits per use, close to the unassisted capacity $0.1887218755$ and far below the published $\Upsilon$ value $0.4056390622$. Second, for an amplitude damping channel, we obtain the exact channel-bound evaluation
\begin{equation}
 \UF(\cA_\gamma)=h_2\!\left(\frac{1+\sqrt\gamma}{2}\right).
 \label{eq:intro-AD}
\end{equation}
A fixed measurement of a mathematical Stinespring environment proves the scalar convexity needed for this formula. We compare it with the unassisted classical capacity, whose known single-use formula was proved additive by Tang, Zhu, Bai, and Wang~\cite{GF05,KSW20,TZBW26}, the semidefinite-programming bound $C_\beta$~\cite{WXD18}, and the maximum-output-entropy bound~\cite{DQSW19}.

\subsubsection{Randomness distillation assisted by local operations and classical communication}

The second main theme of this paper is randomness distillation. The task asks how many bits of nearly uniform shared randomness Alice and Bob can create from many copies of a bipartite state after subtracting the classical communication used to create them~\cite{DW04,LRWW23,NW24}. We introduce the one-way state quantity as a static counterpart of \eqref{eq:intro-UF}:
\begin{equation}
 \VF(\rho_{AB})\coloneqq
 \sup_{\mathcal M_{A\to XA'}}
 \left\{I(X;B)_\omega+\EF(A';B|X)_\omega\right\},
 \label{eq:intro-VF}
\end{equation}
where $\omega_{XA'B}\coloneqq(\mathcal M_{A\to XA'}\otimes\id_B)(\rho_{AB})$ and $X$ is classical. If $\psi_{ABE}$ purifies $\rho_{AB}$, then
\begin{equation}
 \VF(\rho_{AB})=S(B)_\rho-G_E(\rho_{BE}),
 \label{eq:intro-VF-roof}
\end{equation}
with $G_E$ the mixed convex roof of the classical correlation obtained by measuring $E$. A weighted decoder test gives a finite-blocklength converse for the net randomness yield. Combining it with the entanglement-of-formation lower bound of~\cite{YHHS05,BEHY11} and Koashi--Winter duality gives, in particular,
\begin{equation}
 \Rone(\rho_{AB})\leq\VF(\rho_{AB}).
 \label{eq:intro-random-upper}
\end{equation}
This is an upper bound on the one-way LOCC-assisted distillable randomness. It is not asserted to upper bound unrestricted LOCC-assisted randomness distillation.

Qubit isotropic states provide a bridge between the static and dynamical parts of the paper. For
$\rho^{(p)}_{AB}\coloneqq(1-p)\Phi_{AB}+pI_{AB}/4$, the exact one-way rate is equal to
\begin{equation}
 \Rone(\rho^{(p)})=1-h_2(p/2),
 \label{eq:intro-isotropic-oneway}
\end{equation}
and covariance implies that
\begin{equation}
 \VF(\rho^{(p)})=\UF(\cD_p).
 \label{eq:intro-state-channel}
\end{equation}
Thus the same entanglement-of-formation penalty appears in the isotropic state and depolarizing channel. The one-way isotropic evaluation is a direct corollary of Devetak--Winter and King~\cite{DW04,King03}; we use it as a benchmark and do not present it as a new independent coding theorem.

Choi states of amplitude-damping channels demonstrate the value of the one-way bound beyond that symmetric benchmark. For ordinary amplitude damping, the state potential equals the Choi entanglement of formation, and a recent support-additivity theorem~\cite{TZBW26} supplies exact operational rates for comparison. More importantly, for the full-rank generalized amplitude-damping Choi state $\rho^\star$ specified in Section~\ref{sec:damping-states}, we prove that
\begin{equation}
 0.43226906<\Rone(\rho^\star)\leq\VF(\rho^\star)<0.45252360.
 \label{eq:intro-thermal-bracket}
\end{equation}
Both computable state-$\Upsilon$ restrictions from~\cite{LRWW23} have values exceeding $0.66600769$ on this state. An affine lower bound on the complete input penalty, verified with outward interval arithmetic, controls every instrument rather than only the instruments sampled numerically. We do not evaluate $\Rone(\rho^\star)$ exactly, compare with the unrestricted bilinear $\Gamma$ optimization, or claim an improved two-way or strong-converse bound. A small white-noise perturbation gives a second explicit state with the same type of improvement.

The two-way problem behaves differently. We construct a two-round LOCC protocol in which Bob first performs a weak $Z$ measurement, compresses its outcome using Alice's quantum systems as side information, and Alice decodes gently enough to preserve the residual state. Alice then measures in the complementary basis and sends a compressed description of her outcome, which Bob decodes using his retained quantum output. For measurement strength $s\in[0,1]$, the net achievable rate is
\begin{equation}
 \begin{aligned}
 r_2(p,s)\coloneqq{}&1-h_2\!\left(\frac{1+(1-p)s}{2}\right)
 +h_2\!\left(\frac{1+s}{2}\right)\\
 &-h_2\!\left(\frac{1+\sqrt{(1-p)^2+[1-(1-p)^2]s^2}}{2}\right).
 \end{aligned}
 \label{eq:intro-two-round}
\end{equation}
We prove that for every $0<p<1$ there exists $s>0$ such that
\begin{equation}
 r_2(p,s)>1-h_2(p/2)=\Rone(\rho^{(p)}).
 \label{eq:intro-two-way-separation}
\end{equation}
For example, $s=4/5$ and $p=1/2$ give $0.2114185019\ldots$ bits per copy. The protocol therefore improves the explicit isotropic-state lower bounds used in~\cite{LRWW23,NW24}. It also shows that the straightforward two-way extension of~\eqref{eq:intro-random-upper} is false: for $2/3\leq p<1$, $\VF(\rho^{(p)})=1-h_2(p/2)$ while the two-round protocol is strictly larger. This obstruction identifies the essential difficulty for a sharper two-way converse: the final common random variable can be assembled by alternating classical extraction from residual quantum correlations in both directions.

Finally, $\UF$ in \eqref{eq:intro-UF} is monotone under output processing, convex in the channel, and subadditive under tensor products. Independent parallel feedback codes give the regularized channel bound $\Cfb(\N)\leq\inf_{k\in\mathbb{N}} k^{-1}\UF(\N^{\otimes k})$. The randomness-distillation results suggest a corresponding future program for state potentials that account for successive rounds of classical communication in both directions, with the communication cost subtracted from the randomness yield.

\subsection{Relation to prior work}
Bowen studied quantum feedback and its connection to entanglement-assisted communication~\cite{Bowen04}. Bowen and Nagarajan established no-increase results for classical feedback, including for entanglement-breaking channels~\cite{BN05}. Ref.~\cite{DW18} proved a strong converse and exponent bounds for feedback-assisted communication over entanglement-breaking channels. In contrast to these no-increase results, Smith and Smolin constructed random phase-coupling channels for which classical feedback gives an unbounded additive advantage as the input dimension grows~\cite{SS09}. For arbitrarily large $d$, these channels have unassisted classical capacity at most two bits per use, whereas a three-use protocol with classical feedback achieves at least $(2/3)\log_2d$ bits per use. The protocol first establishes entanglement and then uses it for dense coding, directly illustrating the need to account for entanglement retained between transmissions. The construction also admits finite-dimensional outputs by replacing the continuous random-unitary ensemble with a finite unitary two-design. Quantum feedback, by contrast, permits a backward quantum channel and must be distinguished from the classical feedback considered here~\cite{CMW16}.

Ding \textit{et al}.~obtained a maximum-output-entropy upper bound and applied it to erasure and bosonic channels~\cite{DQSW19}. Ding \textit{et al}.~subsequently developed strong-converse bounds for forward classical communication over bipartite channels~\cite{DKQSW23}. The depolarizing comparison in this paper uses the $\Upsilon$ curve in Figure~6 of that work. Its underlying semidefinite comparison set and channel divergences originate in~\cite{WXD18,WFT19,FF21}. The smaller numbers obtained here bound ordinary capacity; they do not automatically improve those strong-converse thresholds.

The expression $\UF$ equals the mixed-convex-roof expression in Winter and Yang's Theorem~10 on potential Holevo capacity~\cite{WY16}, based on the environment-correlation construction of Brand\~ao, Eisert, Horodecki, and Yang~\cite{BEHY11}. Koashi--Winter duality~\cite{KW04}, with the classical correlation of Henderson and Vedral~\cite{HV01}, gives the identification. Our channel contribution is the exact amortization argument for the actual-laboratory potential and its use in unrestricted classical-feedback protocols.

The one-way randomness-distillation task was solved by Devetak and Winter~\cite{DW04}. Lami \textit{et al}.~introduced strong-converse upper bounds that also apply to general LOCC-assisted distillation~\cite{LRWW23}; Nuradha \textit{et al.}~subsequently developed fidelity-smoothed one-shot and second-order versions~\cite{NW24}. The mixed-convex-roof quantity $G_E$ used here already appeared in~\cite{BEHY11,WY16}, and its lower-bound relation to entanglement cost is precisely what turns it into a one-way randomness upper bound. Our two-round construction instead belongs to the achievability side. Its coding ingredients are classical data compression with quantum side information~\cite{DW03} and gentle decoding; related multi-user common-randomness protocols and simultaneous quantum-side-information decoding were developed by Salek and Winter~\cite{SW22}. We do not claim that the general idea of retaining quantum outputs after a local instrument is new. The specific weak-measurement specialization, its closed rate formula, and its strict isotropic-state separation are the results established here.

Flagged extensions are an established technique for bounding channel capacities~\cite{SS08,SSW08,KFG22}. Our flagged-channel theorem concerns classical communication with feedback and explicitly accounts for entanglement created by noiseless outcomes. For depolarizing channels, King's theorem determines the unassisted classical capacity~\cite{King03}; the qubit calculation uses Wootters' formula~\cite{Wootters98} and concurrence factorization~\cite{Konrad08}. For amplitude damping, we compare with established Holevo and converse bounds~\cite{GF05,KSW20,WXD18}. Recent work on the amplitude damping channel is discussed where the corresponding quantities are used.

The damping examples use the generalized amplitude-damping convention and channel properties of Ref.~\cite{KSW20}. The scalar environment-correlation penalty and its convexification were derived by Fang~\cite{Fang26}; our state calculation retains the maximally mixed average input required by the Choi source. We therefore do not claim that scalar reduction as a separate new construction or an improvement over every fixed-input refinement of Fang's method. The ordinary damping rate is obtained by combining Koashi--Winter with the support theorem of Tang, Zhu, Bai, and Wang~\cite{KW04,TZBW26}. It is an exact benchmark, whereas the finite-temperature state provides the new certified application of the one-way converse. The two state-$\Upsilon$ restrictions used in the comparison are those of~\cite[Section~VI]{LRWW23}, not the channel-$\Upsilon$ quantity of~\cite{DKQSW23}.

\subsection{Organization}
The rest of our paper is organized as follows.
Sections~\ref{sec:model}--\ref{sec:structure} develop the communication model, the laboratory potential, its amortization theorem, and structural properties of the channel bound. Section~\ref{sec:flagged} proves the flagged-channel theorem. The channel applications then remain together: Section~\ref{sec:depolarizing} treats depolarizing noise and Section~\ref{sec:AD} evaluates amplitude damping. Section~\ref{sec:randomness} introduces the static task and one-way state bound, including a general Choi-state correspondence. Section~\ref{sec:damping-states} uses that correspondence for ordinary and generalized amplitude damping, with the latter supplying the principal effective one-way example. Sections~\ref{sec:isotropic} and~\ref{sec:twoway-randomness} present the isotropic benchmark and the two-round separation. Section~\ref{sec:numerics} describes the numerical checks, and Section~\ref{sec:conclusion} concludes. The appendices give concurrence calculations, the channel-$\Upsilon$ evaluation, the scalar depolarizing optimizer, reproducible certificates for the state comparisons, and a self-contained proof of the formation inequality used in the one-way converse.

\section{Notation, communication model, and channel bound}\label{sec:model}
We first distinguish the resources available in the communication task from the mathematical quantities used in its analysis. Quantum input and output systems are denoted by $A$ and $B$, while retained memories are denoted by $A'$ and $B'$. Subscripts index successive uses when necessary. No initial entanglement or free forward classical channel is supplied.

\subsection{States and information quantities}
We take all quantum systems to be finite dimensional throughout, and we take all logarithms to be binary unless written as $\ln$. The trace norm is denoted by $\left\|X\right\|_1$, and the operator norm by $\left\|X\right\|$. A quantum state is defined as a positive semidefinite operator of trace one. The set of states on $A$ is denoted by $\mathcal S(A)$, and $d_A$ denotes its Hilbert-space dimension. For a real number $x$, its positive part is defined as $x_+\coloneqq\max\{x,0\}$. A quantum channel is defined as a completely positive, trace-preserving linear map. A quantum instrument is defined as an indexed tuple of completely positive, trace-nonincreasing linear maps whose sum is a channel. A \emph{positive operator-valued measure} (POVM) is defined as an indexed tuple $(\Lambda_y)_y$ of positive semidefinite operators satisfying $\sum_y\Lambda_y=I$.

The von Neumann entropy of a state $\rho$ is defined as
\begin{equation}
 S(\rho)\coloneqq-\Tr[\rho\log_2\rho].
 \label{eq:entropy-def}
\end{equation}
For a multipartite state such as $\rho_{ABC}$, we write $S(A)_\rho\coloneqq S(\rho_A)$ to denote the entropy of the marginal state $\rho_A$. The quantum conditional entropy, quantum mutual information, and quantum conditional mutual information are defined, respectively, as
\begin{align}
 S(A|B)_\rho&\coloneqq S(AB)_\rho-S(B)_\rho,\\
 I(A;B)_\rho&\coloneqq S(A)_\rho+S(B)_\rho-S(AB)_\rho,\\
 I(A;B|C)_\rho&\coloneqq S(AC)_\rho+S(BC)_\rho-S(C)_\rho-S(ABC)_\rho.
 \label{eq:information-defs}
\end{align}
The Shannon entropy of a probability distribution and the binary entropy are defined as
\begin{equation}
 H((p_x)_x)\coloneqq-\sum_xp_x\log_2p_x,\qquad
 h_2(t)\coloneqq-t\log_2t-(1-t)\log_2(1-t),
 \label{eq:classical-entropy-defs}
\end{equation}
with $0\log_2 0\coloneqq0$ by continuity. Classical-register entropies are written as $H$ or $S$; their values agree. The fidelity of states, in the transition-probability convention of Uhlmann~\cite{Uhlmann76}, is defined as
\begin{equation}
 F(\rho,\sigma)\coloneqq\left\|\sqrt\rho\sqrt\sigma\right\|_1^2.
 \label{eq:fidelity-def}
\end{equation}
It equals the maximum squared overlap of their purifications. We consistently use this convention, including in every finite-error distillation criterion. The Fuchs--van de Graaf inequalities~\cite{FvdG99} give
\begin{equation}
 1-\sqrt{F(\rho,\sigma)}
 \leq\frac12\left\|\rho-\sigma\right\|_1
 \leq\sqrt{1-F(\rho,\sigma)}.
 \label{eq:fidelity-trace-distance}
\end{equation}
Thus fidelity at least $1-\eps$ implies normalized trace distance at most $\sqrt\eps$.

The quantum relative entropy of Umegaki~\cite{Umegaki62} is defined as
\begin{equation}
 D(\rho\|\sigma)\coloneqq
 \begin{cases}
 \Tr[\rho(\log_2\rho-\log_2\sigma)],&\supp\rho\subseteq\supp\sigma,\\
 +\infty,&\text{otherwise}.
 \end{cases}
 \label{eq:relative-entropy-def}
\end{equation}
The same expression is used for a positive semidefinite, not necessarily normalized second argument when explicitly required in the channel and state comparison programs. Its data-processing inequality~\cite{Lindblad75,Uhlmann77} and joint convexity state that
\begin{align}
 D(\rho\|\sigma)&\geq D(\N(\rho)\|\N(\sigma)),\\
 \sum_xp_xD(\rho^x\|\sigma^x)
 &\geq D\!\left(\sum_xp_x\rho^x\middle\|\sum_xp_x\sigma^x\right).
 \label{eq:relative-entropy-properties}
\end{align}
We also use strong subadditivity, equivalently $I(A;B|C)\geq0$, and its consequence that conditional entropy is concave~\cite{LR73}. These properties are reviewed in~\cite{Wilde17}.

For an ensemble $(p_x,\omega^x)_x$, its average state is defined as $\bar\omega\coloneqq\sum_xp_x\omega^x$, and its Holevo information is defined as
\begin{equation}
 \chi((p_x,\omega^x)_x)\coloneqq S(\bar\omega)-\sum_xp_xS(\omega^x)
 =\sum_xp_xD(\omega^x\|\bar\omega).
 \label{eq:Holevo}
\end{equation}
The channel Holevo information is defined as
\begin{equation}
 \chi(\N)\coloneqq\sup_{(p_x,\rho^x)_x}\chi((p_x,\N(\rho^x))_x).
 \label{eq:channel-Holevo}
\end{equation}
It is an achievable unassisted rate~\cite{Holevo98,SW97}; its regularization equals the unassisted classical capacity, denoted by $C(\N)$. No general additivity assumption is made.

\subsection{Entanglement of formation}
The second term of our potential measures entanglement that is actually shared by the laboratories. For a state $\rho_{AB}$, its entanglement of formation is defined as the convex-roof average of pure-state entanglement~\cite{BDSW96}:
\begin{equation}
 \EF(A;B)_\rho\coloneqq
 \inf\left\{\sum_zq_zS(A)_{\phi^z}:\rho_{AB}=\sum_zq_z\phi^z_{AB},\ \phi^z_{AB}\text{ pure}\right\}.
 \label{eq:EF}
\end{equation}
In finite dimensions the infimum is attained, and $\EF$ is continuous~\cite{Nielsen00}. It is invariant under local isometries, convex, and nonincreasing on average under local instruments~\cite{Vidal00}. On a pure state, it equals either marginal entropy. For an arbitrary state,
\begin{equation}
 0\leq\EF(A;B)_\rho\leq\min\{S(A)_\rho,S(B)_\rho\}.
 \label{eq:EF-entropy}
\end{equation}
For example, every pure-state decomposition gives $\sum_zq_zS(A)_{\phi^z}\leq S(A)_\rho$ by entropy concavity. Taking the infimum gives the first marginal upper bound, and exchanging the systems gives the second.

A readable classical flag permits an exact rather than merely convex upper bound. Define the flagged state
\begin{equation}
 \sigma_{ABY}\coloneqq\sum_yq_y\rho^y_{AB}\otimes\proj y_Y.
 \label{eq:flagged-state}
\end{equation}
Then
\begin{equation}
 \EF(A;BY)_\sigma=\sum_yq_y\EF(A;B)_{\rho^y}.
 \label{eq:EF-flags}
\end{equation}
To see the upper bound, apply convexity and invariance under appending a local pure state:
\begin{equation}
 \EF(A;BY)_\sigma\leq\sum_yq_y\EF(A;BY)_{\rho^y\otimes\proj y}
 =\sum_yq_y\EF(A;B)_{\rho^y}.
\end{equation}
For the reverse inequality, Bob measures $Y$ in its displayed orthonormal basis. Outcome $y$ has probability $q_y$ and leaves $\rho^y_{AB}$, so strong monotonicity of $\EF$ gives the reverse inequality. Thus the flag does not hide entanglement by mixing states whose labels Bob can distinguish. This identity is the main tool for evaluating the flagged simulations in Section~\ref{sec:flagged}.

For a classical--quantum state
\begin{equation}
\rho_{\Msg T A'B'}\coloneqq\sum_{m,t}p(m,t)\proj m_{\Msg}\otimes\proj t_T\otimes\rho^{m,t}_{A'B'},
\end{equation}
the conditional entanglement of formation is defined as
\begin{equation}
 \EF(A';B'|\Msg T)_\rho\coloneqq\sum_{m,t}p(m,t)\EF(A';B')_{\rho^{m,t}}.
 \label{eq:conditional-EF}
\end{equation}
Here conditioning means an ordinary average over classical labels, not a quantum conditional extension of $\EF$. When only one classical register is present, the same definition is used with the other register trivial. For instance, it vanishes when every conditional laboratory state is separable, even when those laboratories have classical correlations. The corresponding entropy identity
$S(B'|\Msg T)_\rho=\sum_{m,t}p(m,t)S(B')_{\rho^{m,t}}$
will connect our potential to~\cite[Eq.~(17)]{DQSW19}.

\subsection{The feedback-assisted task}\label{subsec:task}
We now define the protocols and the error criterion before taking a capacity limit. The same model will be used in the channel-use inequality and in the simulation arguments. Distinguishing the message from its classical register avoids confusing a uniform message alphabet with a posterior distribution after feedback.

Alice receives a message $m\in[M]\coloneqq\{1,\ldots,M\}$, recorded in the classical register $\Msg$. The symbols $W$ and $\widehat W$ denote the message and estimate registers; $M$ denotes only the number of messages. Unless conditioning is stated, the message is uniform. She and Bob use $\N\colon A\to B$ exactly $n$ times. At use $i$, Alice prepares $A_i$ while retaining $A_i'$, and Bob receives $B_i$ while holding his previous memory $B_i'$. Bob can apply an instrument to $B_iB_i'$ and report its outcome. After the intervening local operations and feedback, the parties hold the systems needed for the next use. The final estimate is denoted by $\widehat m$, and its classical register by $\Est$. Figure~\ref{fig:protocol} depicts three uses and the pre- and post-channel states that enter the converse. Its encoder $\mathcal L_i^{m,t}\colon A_{i-1}'\to A_i'A_i$ includes all of Alice's local processing between uses; the dependence on the controls is shown above each block. Here $A_0'$ is her initial local memory. Bob's instrument after use $i<n$ maps $B_i'B_i$ to his next memory $B_{i+1}'$ and a classical outcome.

There is no initial shared entanglement, free forward classical communication, or backward quantum communication. Bob may send a message-independent initial random seed. Initially shared separable correlations cause no additional difficulty: revealing a label in their separable decomposition can only enlarge the resources and leaves product states conditioned on that independent label. All memories and feedback alphabets may be arbitrarily large but finite. The schedule contains exactly $n$ forward uses and is fixed independently of the message, so timing is not an additional resource.

For a protocol $\mathcal P$, define its individual, average, and maximum error probabilities as
\begin{align}
 \eps_m(\mathcal P)&\coloneqq\Pr(\widehat m\neq m|m),\\
 \eps_{\rm av}(\mathcal P)&\coloneqq\frac1M\sum_{m=1}^M \eps_m(\mathcal P),\\
 \eps_{\max}(\mathcal P)&\coloneqq\max_{m\in[M]}\eps_m(\mathcal P).
 \label{eq:code-errors}
\end{align}
The notation in the first line means conditioning on the true message having the displayed value. We call $\mathcal P$ an $(n,M,\eps)$ protocol when $\eps_{\max}(\mathcal P)\leq\eps$. Following~\cite[Eqs.~(283)--(284)]{DKQSW23}, the $n$-shot feedback-assisted classical capacity in bits per use is defined as
\begin{equation}
 C_{\leftarrow}^{n,\eps}(\N)\coloneqq
 \sup\left\{\frac1n\log_2M:M\in\mathbb N,\ \text{there exists an }(n,M,\eps)\text{ protocol}\right\},
 \label{eq:n-shot-capacity}
\end{equation}
and the ordinary feedback-assisted classical capacity is defined as
\begin{equation}
 \Cfb(\N)\coloneqq\inf_{\eps\in(0,1)}\liminf_{n\to\infty}C_{\leftarrow}^{n,\eps}(\N).
 \label{eq:capacity-definition}
\end{equation}
In particular, the finite-error quantity is a rate, not a total number of bits. Definitions without feedback are obtained by removing the backward communication resource.

In this feedback model the average-error version of~\eqref{eq:n-shot-capacity} has the same value. Given an average-error code, Bob chooses a uniform cyclic shift $s\in\{0,\ldots,M-1\}$ and sends it initially to Alice. They run that code on label $1+((m-1+s)\bmod M)$ and undo the shift at the end. Each original message now has error exactly $\eps_{\rm av}(\mathcal P)$, averaged over the shared seed. No free forward communication has been introduced. We prove the finite-blocklength converse whenever the average error is at most the prescribed tolerance, so it also bounds the above maximum-error capacity. We do not identify average and maximum error for arbitrary no-feedback finite-blocklength codes.

\subsection{The quantity \texorpdfstring{$\UF$}{U-F}}\label{subsec:UF}
A single channel use can both convey message information and generate entanglement from a locally prepared purification. We combine these two contributions in a channel quantity before proving that it bounds a use within an adaptive protocol. The reference introduced here is a mathematical purification system; it is not transmitted for free.

For an input state $\rho_A$, choose a purification $\phi^\rho_{RA}$ with $R\simeq A$. Define
\begin{equation}
 \omega^\rho_{RB}\coloneqq(\id_R\otimes\N_{A\to B})(\phi^\rho_{RA}),\qquad
 f_\N(\rho)\coloneqq\EF(R;B)_{\omega^\rho}.
 \label{eq:fN}
\end{equation}
Different purifications are related by a local reference isometry, so the value does not depend on the choice. The entanglement-of-formation channel bound is defined as
\begin{equation}
 \UF(\N)\coloneqq\sup_{(p_x,\rho^x)_x}
 \left\{\chi((p_x,\N(\rho^x))_x)+\sum_xp_xf_\N(\rho^x)\right\},
 \label{eq:UF}
\end{equation}
where $x$ is a generic ensemble index and each state $\rho^x$ may be mixed. Its role differs from the physical message $m$, though an ensemble can be chosen by a classical control. For a pure input state the reference can be trivial and $f_\N(\rho^x)=0$. Restricting~\eqref{eq:UF} to pure ensembles gives $\chi(\N)$, but that restriction is not valid for the converse: a mixed input may be entangled with Alice's retained memory. Even a one-element ensemble can give a positive value through $f_\N$ alone.

The optimization has a finite alphabet bound. This also ensures that it is a maximum, not only a supremum.
\begin{proposition}[Cardinality and attainment]\label{prop:cardinality}
The supremum in~\eqref{eq:UF} is attained by an ensemble with at most $d_A^2$ members.
\end{proposition}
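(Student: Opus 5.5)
Rewrite the objective as an affine functional of the ensemble's distribution over input states, then apply a Carathéodory (Fenchel--Eggleston) argument followed by compactness. For an ensemble $(p_x,\rho^x)_x$ the objective in~\eqref{eq:UF} equals
\begin{equation}
 S\Big(\N\big(\textstyle\sum_xp_x\rho^x\big)\Big)+\sum_xp_x g(\rho^x),\qquad g(\rho)\coloneqq f_\N(\rho)-S(\N(\rho)).
\end{equation}
So it depends on the ensemble only through the average input $\bar\rho$ and the average of the single-state function $g$. The map $\rho\mapsto\N(\rho)$ is continuous. The map $\rho\mapsto f_\N(\rho)$ is continuous because $\EF$ is continuous (as recalled after~\eqref{eq:EF}), and because a purification $\phi^\rho_{RA}$ with $R\simeq A$ can be chosen continuously in $\rho$: take $\phi^\rho=(\sqrt\rho\otimes I)\Gamma$ with $\Gamma$ the unnormalized maximally entangled vector. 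Hence $g$ is continuous on the compact set $\mathcal S(A)$.

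\textbf{Reduction step.} Consider the continuous map $\Phi\colon\mathcal S(A)\to\mathbb R^{d_A^2}$, $\rho\mapsto(\text{real coordinates of }\rho,\ g(\rho))$. The set $\mathcal S(A)$ has real dimension $d_A^2-1$ after fixing the trace, so $\Phi$ takes values in a $d_A^2$-dimensional affine space. Fix any finite ensemble; the point $\sum_xp_x\Phi(\rho^x)$ lies in the convex hull of $\Phi(\mathcal S(A))$. By Carathéodory's theorem it is a convex combination of at most $d_A^2+1$ points of $\Phi(\mathcal S(A))$. The new ensemble has the same $\bar\rho$ and the same average of $g$, hence the same objective value.

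To improve to $d_A^2$ members, use the Fenchel--Eggleston strengthening. $\mathcal S(A)$ is connected, so $\Phi(\mathcal S(A))$ is a connected set in a $d_A^2$-dimensional space, and every point of its convex hull is a convex combination of at most $d_A^2$ of its points. As a fallback, one can instead keep only $\bar\rho$ and maximize the average of $g$. That average is maximized at a boundary point of the convex hull, where the supporting hyperplane removes one point. This step, getting exactly $d_A^2$ rather than $d_A^2+1$, is the only delicate one, and the connectedness version is the cleanest way to handle it. Note also that countable or general ensembles in the supremum reduce to finite ones by the same hull argument, if the definition is read that broadly.

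\textbf{Attainment.} Take a maximizing sequence of ensembles. By the reduction step, each may be assumed to have exactly $d_A^2$ members, padding with zero weights if needed. These ensembles lie in the compact set $(\Delta_{d_A^2}\times\mathcal S(A)^{d_A^2})$. Since the objective is a continuous function there, a convergent subsequence attains the supremum. This shows the supremum is a maximum and is attained with at most $d_A^2$ members.
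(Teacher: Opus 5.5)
Your proposal is correct and matches the paper's proof essentially step for step. Both arguments use the same three ingredients:
\begin{itemize}
\item continuity of the single-state penalty, obtained from the canonical $\sqrt\rho$ purification together with continuity of $\EF$;
\item the Fenchel--Eggleston refinement of Carath\'eodory's theorem, applied to the compact connected graph in $\mathbb{R}^{d_A^2}$, which preserves both the average input and the average penalty;
\item zero-padding to $d_A^2$ members and compactness of the simplex times $\mathcal S(A)^{d_A^2}$, which give attainment.
\end{itemize}
Your $g$ is the negative of the paper's $g_\N$, which is immaterial.
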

\begin{proof}
Define $g_\N(\rho)\coloneqq S(\N(\rho))-f_\N(\rho)$. This is continuous: a canonical purification formed using $\sqrt\rho$ depends continuously on $\rho$, and finite-dimensional entanglement of formation is continuous~\cite{Nielsen00}. Choose affine real coordinates $v(\rho)\in\mathbb R^{d_A^2-1}$ for the trace-one Hermitian input matrices, so that $v(\sum_xp_x\rho^x)=\sum_xp_xv(\rho^x)$. The graph
\begin{equation}
 \mathcal K\coloneqq\{(v(\rho),g_\N(\rho)):\rho\in\mathcal S(A)\}
 \subseteq\mathbb R^{d_A^2}
\end{equation}
is compact and connected, because the state space is compact and connected and the displayed map is continuous. An ensemble determines the point
$\big(v(\bar\rho),\sum_xp_xg_\N(\rho^x)\big)$ in its convex hull, where $\bar\rho\coloneqq\sum_xp_x\rho^x$. The Fenchel--Eggleston refinement of Carath\'eodory's theorem~\cite{Eggleston58} represents every such point using at most $d_A^2$ points of $\mathcal K$. The new ensemble therefore preserves both the average input and the average penalty. Its objective $S(\N(\bar\rho))-\sum_xp_xg_\N(\rho^x)$ is unchanged.

Allow zero probabilities to pad each reduced ensemble to $d_A^2$ members. The resulting simplex times $\mathcal S(A)^{d_A^2}$ is compact, and the objective is continuous. It consequently attains its maximum. No purity restriction on the ensemble members was used.
\end{proof}

\section{A potential that is monotone under classical feedback}\label{sec:potential}
The quantity~\eqref{eq:UF} describes a transmission with no receiver memory. To use it in an adaptive protocol, we need an accounting rule that also allows correlations created by earlier transmissions. We now define that rule on an arbitrary state of the actual laboratories and compare it with its earlier entropy-based counterpart.

\subsection{Message information and retained entanglement}
Let Alice's current laboratory be $A'$, Bob's current laboratory be $B'$, and let the classical transcript $T$ be available to both. Define the state
\begin{equation}
 \rho_{\Msg T A'B'}\coloneqq\sum_{m,t}p(m,t)\proj m_{\Msg}\otimes\proj t_T\otimes\rho^{m,t}_{A'B'}.
 \label{eq:potential-state}
\end{equation}
The joint distribution is arbitrary; after feedback it need not factorize. The potential is defined as
\begin{equation}
 \Q(\rho)\coloneqq I(\Msg;B'T)_\rho+\EF(A';B'|\Msg T)_\rho.
 \label{eq:potential}
\end{equation}
The physical bipartition and the classical labels are understood whenever $\Q$ is evaluated. Just before a transmission Alice's laboratory is $A'A$, and immediately after it Bob's is $B'B$; these full laboratories replace $A'$ and $B'$ in~\eqref{eq:potential}.

In our notation, the quantity of~\cite[Eq.~(17)]{DQSW19} is
\begin{equation}
 \Q_S(\rho)\coloneqq I(\Msg;B'T)_\rho+S(B'|\Msg T)_\rho.
 \label{eq:old-potential}
\end{equation}
For each conditionally pure laboratory state, $\EF(A';B')_{\rho^{m,t}}=S(B')_{\rho^{m,t}}$, so $\Q=\Q_S$. In general,~\eqref{eq:EF-entropy} gives
\begin{equation}
 \Q(\rho)\leq\Q_S(\rho),\qquad
 \Q_S(\rho)-\Q(\rho)
 =\sum_{m,t}p(m,t)\big[S(B')_{\rho^{m,t}}-\EF(A';B')_{\rho^{m,t}}\big].
 \label{eq:potential-comparison}
\end{equation}
Conditional receiver entropy can include mixing correlated with an inaccessible environment, rather than entanglement held by Alice. The earlier proof deals with this through an enlarged purified description. Our potential instead assigns entanglement only to the actual laboratories and treats their mixed states directly. The pointwise inequality~\eqref{eq:potential-comparison} alone does not bound potential increments; the amortization theorem in the next section is the additional step that makes the improvement operationally valid.

The entanglement term is an accounting term, not a claim that every noisy ebit has a fixed future communication value. For an explicit example, take four equiprobable messages and two noiseless qubit transmissions. Before the first, Alice holds a Bell state locally, so the pair $(I(\Msg;B'),\EF(A';B'|\Msg))$ is $(0,0)$. Sending one share produces a message-independent Bell state across the laboratories, giving $(0,1)$. Alice then encodes the message by a Pauli operator on her share and sends it. Bob now holds four orthogonal Bell states, so $I(\Msg;B')=2$, while Alice's remaining system is trivial and the entanglement term is zero. Thus
\begin{equation}
 (I,\EF):\ (0,0)\longrightarrow(0,1)\longrightarrow(2,0),\qquad
 \Q:\ 0\longrightarrow1\longrightarrow2.
 \label{eq:dense-accounting}
\end{equation}
The second message-information increase is two bits, but each transmission increases the total potential by one~\cite{BW92}. Figure~\ref{fig:dense-coding} shows this transfer between the terms.

\begin{figure}[t]\centering\begin{tikzpicture}[x=1cm,y=1cm,font=\small]
\draw[-{Latex},thin] (.5,4.0)--node[above,font=\footnotesize]{time}(12.2,4.0);
\node[anchor=east] at (0,2.85) {Alice};
\node[anchor=east] at (0,0) {Bob};
\node[op,minimum width=2.05cm] (prep) at (1.65,2.85) {Prepare Bell state};
\node[op,minimum width=2.05cm] (pauli) at (7.0,2.85) {Encode $m$\\by a Pauli};
\node[op,minimum width=1.7cm] (hold) at (4.1,0) {Retain share};
\node[op,minimum width=2.2cm] (dec) at (10.4,0) {Bell measurement};
\draw[qwire] (prep.east)--node[above,font=\footnotesize]{retained share}(pauli.west);
\draw[qwire] (prep.south)--node[sloped,above=3pt,font=\footnotesize]{qubit 1}(hold.north);
\draw[qwire] (hold.east)--node[below,font=\footnotesize]{retained share}(dec.west);
\draw[qwire] (pauli.south)--node[sloped,above=3pt,font=\footnotesize]{qubit 2}(dec.north);
\draw[cwire] (dec.east)--(12.55,0);
\node[font=\footnotesize,anchor=east] at (12.55,.65) {$\widehat m=m$};
\node[font=\footnotesize,align=center] at (7.0,3.65) {$m\in\{1,2,3,4\}$};
\draw[thin] (.1,-.8)--(12.5,-.8);
\node[align=center] at (1.8,-1.6) {before transmission\\$(I,E_F)=(0,0)$\\$\Q=0$};
\node[align=center] at (6.2,-1.6) {after the first use\\$(I,E_F)=(0,1)$\\$\Q=1$};
\node[align=center] at (10.6,-1.6) {after the second use\\$(I,E_F)=(2,0)$\\$\Q=2$};
\end{tikzpicture}
\caption{The two contributions to the potential in a two-use noiseless example. Initially both Bell shares are at Alice. The first forward qubit leaves a shared Bell state but no message information at Bob. After Pauli encoding and the second forward qubit, all four Bell states are at Bob and can be distinguished perfectly. The entanglement remaining across the laboratories has been consumed. The forward transfers move diagonally down and right; the horizontal wire retains a Bell share locally.}\label{fig:dense-coding}\end{figure}

\begin{lemma}[Monotonicity under free operations]\label{lem:free}
The potential~\eqref{eq:potential} cannot increase under Alice's local channels, Bob's local channels, or receiver-to-sender classical feedback, including their compositions. It is initially equal to zero in the stated model, and $\Q(\rho)\geq I(\Msg;B'T)_\rho$.
\end{lemma}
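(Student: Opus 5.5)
I would treat the two terms of $\Q(\rho)=I(\Msg;B'T)_\rho+\EF(A';B'|\Msg T)_\rho$ separately for each type of free operation. Compositions then follow by iteration. I would first fix the convention that any classical outcome made public is appended to the transcript $T$, while an outcome kept private stays in the local laboratory as a classical register.

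\textbf{Alice's local channels.} Take $\mathcal L^{m,t}\colon A'\to A''$, which may depend on $m$ and $t$ because Alice knows both. The state of $\Msg T B'$ is unchanged, so $I(\Msg;B'T)$ is untouched. For each $(m,t)$, $\EF(A'';B')_{\mathcal L^{m,t}(\rho^{m,t})}\leq\EF(A';B')_{\rho^{m,t}}$, since $\EF$ is nonincreasing under local channels (the one-outcome case of the instrument monotonicity cited from~\cite{Vidal00}). Averaging over $p(m,t)$ gives the claim.

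\textbf{Bob's local channels and instruments with feedback.} This is the main case. Bob cannot depend on $m$, so his operation is an instrument $(\mathcal E^{t}_y)_y$ on $B'$, chosen according to $t$, whose outcome $y$ he broadcasts. Afterwards $T$ is replaced by $TY$ and $B'$ by $B''$.

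For the information term, I would write $I(\Msg;B''TY)=I(\Msg;T)+I(\Msg;B''Y|T)$. Conditioned on $T=t$, the map $B'\to B''Y$ given by $\sigma\mapsto\sum_y\mathcal E^t_y(\sigma)\otimes\proj y$ is a channel. Data processing then gives $I(\Msg;B''Y|T=t)\leq I(\Msg;B'|T=t)$, so this term does not increase. The only subtlety is that $Y$ now sits in the shared transcript rather than with Bob. This is harmless, because $I(\Msg;B''TY)$ does not depend on where the copy of $Y$ is kept.

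For the entanglement term, I would condition on $(m,t)$. The outcome $y$ occurs with probability $q_{y|m,t}=\Tr\mathcal E^t_y(\rho^{m,t}_{B'})$ and leaves the normalized state $\rho^{m,t,y}$. Strong monotonicity of $\EF$ under local instruments gives
\[
\sum_yq_{y|m,t}\EF(A';B'')_{\rho^{m,t,y}}\leq\EF(A';B')_{\rho^{m,t}}.
\]
The left-hand side, averaged over $(m,t)$, is exactly $\EF(A';B''|\Msg TY)$, because $p(m,t,y)=p(m,t)q_{y|m,t}$. A plain local channel of Bob is the one-outcome special case. Alice may then act on the received $y$, which is covered by the previous case with $t$ replaced by $(t,y)$.

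\textbf{Remaining claims.} Initially Bob's laboratory holds nothing correlated with $\Msg$, since his optional random seed is message-independent and is shared in $T$. Also, the joint state conditioned on $(m,t)$ is a product across the laboratories, because there is no initial entanglement; separable initial correlations reduce to this case by revealing their label in $T$. Hence both terms vanish, and $\Q=0$. Finally, $\Q(\rho)\geq I(\Msg;B'T)_\rho$ holds because $\EF\geq0$ by \eqref{eq:EF-entropy}.
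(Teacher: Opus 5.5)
Your proposal is correct and follows essentially the same route as the paper: trace preservation plus local monotonicity of $\EF$ for Alice's controlled channels, data processing plus strong monotonicity of $\EF$ under Bob's transcript-controlled instrument (with its outcome appended to the transcript), and product conditional states with a message-independent seed to get $\Q=0$ initially. The only cosmetic difference is that you split off $I(\Msg;T)$ by the chain rule and apply data processing conditionally on $T=t$, whereas the paper applies data processing directly to the controlled channel $B'T\to B''TY$.
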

\begin{proof}
The last claim follows term by term:
\begin{equation}
 \Q(\rho)-I(\Msg;B'T)_\rho
 =\sum_{m,t}p(m,t)\EF(A';B')_{\rho^{m,t}}\geq0.
\end{equation}
Let $T$ record an independent initial seed with distribution $(q(t))_t$. The initial joint state has the form
\begin{equation}
 \rho^0_{\Msg T A'B'}\coloneqq\frac1M\sum_{m,t}q(t)\proj m_{\Msg}\otimes\proj t_T
 \otimes\alpha_{A'}^{m,t}\otimes\beta_{B'}^t.
 \label{eq:initial-product-state}
\end{equation}
Its receiver marginal factorizes as
$\rho^0_{\Msg B'T}=\pi_{\Msg}\otimes\sum_tq(t)\beta_{B'}^t\otimes\proj t_T$, where $\pi_{\Msg}\coloneqq I_{\Msg}/M$.
Consequently $I(\Msg;B'T)_{\rho^0}=0$. Every conditional laboratory state is product, so its entanglement of formation is zero. Hence $\Q(\rho^0)=0$.

For Alice's local channel controlled by $m,t$, define
$\theta^{m,t}_{A''B'}\coloneqq(\mathcal L^{m,t}_{A'\to A''}\otimes\id_{B'})(\rho^{m,t}_{A'B'})$.
Trace preservation implies that
\begin{equation}
 \Tr_{A''}\theta^{m,t}_{A''B'}=\Tr_{A'}\rho^{m,t}_{A'B'}.
 \label{eq:Alice-no-signalling}
\end{equation}
For example, pairing either side with an arbitrary operator $O_{B'}$ gives the same value because $(\mathcal L^{m,t})^\dagger(I_{A''})=I_{A'}$. Thus the full marginal on $\Msg TB'$ is unchanged, and so is its mutual information. Local monotonicity also gives
\begin{equation}
 \sum_{m,t}p(m,t)\EF(A'';B')_{\theta^{m,t}}
 \leq\sum_{m,t}p(m,t)\EF(A';B')_{\rho^{m,t}}.
\end{equation}
This proves the claim for Alice, including private measurements whose records are retained in her memory.

For Bob's step, let $(\mathcal B_{y|t}:B'\to B'')_y$ be his transcript-controlled instrument. Define
$q(y|m,t)\coloneqq\Tr[(\id_{A'}\otimes\mathcal B_{y|t})(\rho^{m,t})]$
and the normalized conditional state $\theta^{m,t,y}_{A'B''}$ by dividing the latter output by $q(y|m,t)$ when it is positive. Terms of zero probability are omitted. The resulting joint state is defined as
\begin{equation}
 \theta_{\Msg T A'B''Y}\coloneqq\sum_{m,t,y}p(m,t)q(y|m,t)\proj m_{\Msg}\otimes\proj t_T\otimes\theta^{m,t,y}_{A'B''}\otimes\proj y_Y.
\end{equation}
The total operation on $B'T$ is a channel to $B''TY$, so data processing gives
\begin{equation}
 I(\Msg;B''TY)_\theta\leq I(\Msg;B'T)_\rho.
 \label{eq:free-MI}
\end{equation}
For each $m,t$, strong monotonicity gives
\begin{equation}
 \sum_yq(y|m,t)\EF(A';B'')_{\theta^{m,t,y}}
 \leq\EF(A';B')_{\rho^{m,t}}.
 \label{eq:free-EF}
\end{equation}
Multiply by $p(m,t)$ and sum to obtain the conditional-entanglement comparison with transcript $TY$. Bob keeps $Y$ and sends a copy to Alice; the copy is already accounted for by conditioning on $TY$. Adding this comparison to~\eqref{eq:free-MI} proves the claim. A deterministic Bob channel is the one-outcome case, and unreported outcomes can remain inside his quantum memory. Subsequent local channels are covered by the preceding argument.
\end{proof}

\subsection{What remains to be proved}
Free-operation monotonicity establishes how the potential behaves between channel uses, but does not yet bound a forward use. The input can be entangled with the receiver's memory, and its conditional state can be mixed. We therefore need a one-step theorem for an arbitrary state on both laboratories, not only a purification prepared at Alice. The next section first relates two entanglement cuts, then proves the channel increment for pure conditional states, and finally controls the extra information introduced by refining mixed states.

\section{Exact amortization and the feedback converse}\label{sec:amortization}
We now connect the single-use channel expression to the potential of an arbitrary adaptive protocol. The main issue is that the system purifying a channel input may be distributed between Alice and Bob. An elementary comparison of entanglement across two cuts handles this issue. A second argument extends the calculation from pure to mixed conditional inputs without making the virtual decomposition label available to either party.

\subsection{An entanglement relocation inequality}
At a channel output, let $A'$ be Alice's retained memory, $B'$ Bob's old memory, and $B$ the new output. The physical output entanglement is $\EF(A';B'B)$. For a pure conditional input, however, $A'B'$ purifies the transmitted system, so $f_\N$ involves $\EF(A'B';B)$. Figure~\ref{fig:relocation} displays the difference between these cuts. The following lemma compares them on the same state, without moving any physical system.

\begin{lemma}[Relocation inequality]\label{lem:relocation}
For an arbitrary state $\sigma_{A'B'B}$,
\begin{equation}
 \EF(A';B'B)_\sigma\leq\EF(A'B';B)_\sigma+S(B'|B)_\sigma.
 \label{eq:relocation}
\end{equation}
\end{lemma}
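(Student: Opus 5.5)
The plan is to use a single pure-state decomposition for both cuts. I would take an optimal decomposition for the cut $A'B'\!:\!B$, reuse it as a trial decomposition for the cut $A'\!:\!B'B$, and then pay the difference between the two cuts with a conditional entropy. Concavity of conditional entropy then returns that correction from the average over the decomposition to the state $\sigma$ itself. No physical system is moved; only the bipartition used to evaluate each pure component changes.

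\emph{Step 1.} Since the infimum in~\eqref{eq:EF} is attained in finite dimensions, fix a decomposition $\sigma_{A'B'B}=\sum_zq_z\phi^z_{A'B'B}$ into pure states with
\begin{equation}
 \EF(A'B';B)_\sigma=\sum_zq_zS(B)_{\phi^z}.
\end{equation}
If attainment were unavailable, an $\delta$-optimal decomposition would serve equally well, with $\delta\to0$ at the end.

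\emph{Step 2.} The same ensemble is also a pure-state decomposition of $\sigma$ relative to the cut $A'\!:\!B'B$. Each $\phi^z$ is pure, so $S(A')_{\phi^z}=S(B'B)_{\phi^z}$. Hence
\begin{equation}
 \EF(A';B'B)_\sigma\leq\sum_zq_zS(B'B)_{\phi^z}
 =\sum_zq_z\big[S(B)_{\phi^z}+S(B'|B)_{\phi^z}\big]
 =\EF(A'B';B)_\sigma+\sum_zq_zS(B'|B)_{\phi^z}.
\end{equation}

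\emph{Step 3.} The marginals satisfy $\sigma_{B'B}=\sum_zq_z\phi^z_{B'B}$. Concavity of conditional entropy, which is a consequence of strong subadditivity as noted in Section~\ref{sec:model}, therefore gives
\begin{equation}
 \sum_zq_zS(B'|B)_{\phi^z}\leq S(B'|B)_\sigma .
\end{equation}
Combining this with Step 2 yields~\eqref{eq:relocation}.

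I do not expect a genuine obstacle. The only point needing care is the direction of the concavity step: the average of the conditional entropies over the decomposition must be bounded \emph{above} by the conditional entropy of the mixture, and concavity gives exactly this. The correction term $S(B'|B)_\sigma$ may be negative, which is consistent with the inequality because it is applied only after averaging.
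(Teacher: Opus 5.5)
Your proof is correct and follows the same route as the paper. The paper also reuses a decomposition attaining $\EF(A'B';B)_\sigma$ as a trial decomposition for the cut $A'|B'B$, splits $S(B'B)=S(B)+S(B'|B)$ on each pure component, and finishes with concavity of conditional entropy, which it justifies via strong subadditivity.
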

\begin{proof}
Choose a decomposition $\sigma_{A'B'B}=\sum_zq_z\psi^z_{A'B'B}$ into pure states attaining $\EF(A'B';B)_\sigma$. The same decomposition is admissible, though not necessarily optimal, for $\EF(A';B'B)_\sigma$. Therefore,
\begin{align}
 \EF(A';B'B)_\sigma
 &\leq\sum_zq_zS(B'B)_{\psi^z}\\
 &=\sum_zq_zS(B)_{\psi^z}+\sum_zq_zS(B'|B)_{\psi^z}\\
 &\leq\EF(A'B';B)_\sigma+S(B'|B)_\sigma.
 \label{eq:relocation-proof}
\end{align}
The last inequality follows from concavity of conditional entropy. More explicitly, for $\zeta_{ZA'B'B}\coloneqq\sum_zq_z\proj z_Z\otimes\psi^z_{A'B'B}$, strong subadditivity gives $S(B'|BZ)_\zeta\leq S(B'|B)_\zeta=S(B'|B)_\sigma$. The left side is the required average conditional entropy. This argument remains valid when $S(B'|B)_\sigma$ is negative.
\end{proof}

\begin{figure}[t]\centering\begin{tikzpicture}[x=1cm,y=1cm,font=\small]
\node[font=\bfseries] at (2.5,2.2) {actual laboratory cut};
\node[font=\bfseries] at (9.8,2.2) {comparison cut};
\foreach \x/\name in {1/A',2.7/B',4.4/B,8.1/A',9.8/B',11.5/B}{
 \node[op,minimum width=1.1cm,minimum height=1cm] at (\x,.9) {$\name$};}
\draw[dashed,thick] (1.85,-.1)--(1.85,1.8);
\draw[dashed,thick] (10.65,-.1)--(10.65,1.8);
\node at (2.7,-.55) {$\EF(A';B'B)_\sigma$};
\node at (9.8,-.55) {$\EF(A'B';B)_\sigma$};
\draw[<->,thick] (5.35,.9)--node[above,align=center,font=\footnotesize]{same\\state}(7.15,.9);
\node at (6.25,-1.45) {$\EF(A';B'B)_\sigma\leq\EF(A'B';B)_\sigma+S(B'|B)_\sigma$};
\end{tikzpicture}
\caption{Two bipartitions of $\sigma_{A'B'B}$. The physical output cut is $A'|B'B$. In the comparison cut $A'B'|B$, both old memories are treated mathematically as the reference for the transmitted input. The correction in Lemma~\ref{lem:relocation} is $S(B'|B)$, which may be negative. The comparison does not grant free transmission of Bob's old memory.}\label{fig:relocation}\end{figure}

\subsection{The one-step theorem}
The relocation inequality converts the physical entanglement term into one appearing in $f_\N$. We now combine it with the receiver's message information. The classical label is allowed to have an arbitrary distribution: conditioning an adaptive protocol on its transcript generally changes the posterior message probabilities. We suppress that transcript here and restore it in the operational converse.

\begin{theorem}[Exact potential amortization]\label{thm:amortization}
Let $\rho_{\Msg A'AB'}\coloneqq\sum_mp_m\proj m_{\Msg}\otimes\rho^m_{A'AB'}$ be an arbitrary classical--quantum state, with $A'A$ at Alice and $B'$ at Bob. Define
$\omega_{\Msg A'B'B}\coloneqq(\id_{\Msg A'B'}\otimes\N_{A\to B})(\rho_{\Msg A'AB'})$,
with tensor factors ordered as indicated. Then
\begin{equation}
 I(\Msg;B'B)_\omega+\EF(A';B'B|\Msg)_\omega-I(\Msg;B')_\rho-\EF(A'A;B'|\Msg)_\rho\leq\UF(\N).
 \label{eq:amortization}
\end{equation}
The supremum of the left-hand side over finite memories and classical--quantum input states is exactly $\UF(\N)$.
\end{theorem}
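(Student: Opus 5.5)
The plan is to prove \eqref{eq:amortization} first when every conditional input $\rho^m_{A'AB'}$ is pure, and then reduce the mixed case to the pure one. In the reduction, each $\rho^m$ is refined into an optimal pure-state decomposition whose label is kept only as a mathematical register. Tightness will come from a direct construction in which Bob's old memory is trivial.

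\emph{Pure conditional inputs.} Suppose each $\rho^m_{A'AB'}$ is pure.
\begin{itemize}
\item Since $\rho^m$ is pure, $\EF(A'A;B')_{\rho^m}=S(B')_{\rho^m}$.
\item Because $A'B'$ purifies $A$ in $\rho^m$, and different purifications differ by a local isometry on the reference, we have $\EF(A'B';B)_{\omega^m}=f_\N(\rho^m_A)$.
\item Lemma~\ref{lem:relocation}, applied to each $\omega^m_{A'B'B}$, then gives
\begin{equation*}
\EF(A';B'B)_{\omega^m}\leq f_\N(\rho^m_A)+S(B'|B)_{\omega^m}.
\end{equation*}
\end{itemize}
For the information terms, note that $\N$ acts only on $A$, so the $WB'$ marginal of $\omega$ equals that of $\rho$. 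Hence
\begin{equation*}
I(\Msg;B'B)_\omega-I(\Msg;B')_\rho=I(\Msg;B|B')_\omega=S(B|B')_\omega-\sum_mp_m\big[S(B'B)_{\omega^m}-S(B')_{\omega^m}\big].
\end{equation*}
Adding the entanglement terms, the conditional-entropy pieces $S(B'B)_{\omega^m}$ and $S(B')_{\omega^m}$ cancel against $S(B'|B)_{\omega^m}$ and $-S(B')_{\rho^m}$. The left side of \eqref{eq:amortization} is then at most
\begin{equation*}
S(B|B')_\omega-\sum_mp_mS(\N(\rho^m_A))+\sum_mp_mf_\N(\rho^m_A).
\end{equation*}
Next, $S(B|B')_\omega\leq S(B)_\omega=S(\N(\bar\rho_A))$. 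The resulting expression is the objective of \eqref{eq:UF} for the ensemble $(p_m,\rho^m_A)_m$, so it is at most $\UF(\N)$.

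\emph{Mixed conditional inputs.} For each $m$, choose a pure decomposition $\rho^m_{A'AB'}=\sum_zq_{z|m}\psi^{m,z}$ that attains $\EF(A'A;B')_{\rho^m}$; such a decomposition exists by attainment in finite dimensions. Adjoin a classical register $Z$ holding $z$, and apply the pure case with classical label $(m,z)$. By the optimal choice of decomposition, $\EF(A'A;B'|\Msg Z)=\EF(A'A;B'|\Msg)$. It then suffices to show
\begin{equation*}
I(\Msg;B'B)_\omega+\EF(A';B'B|\Msg)_\omega\leq I(\Msg Z;B'B)_\omega+\EF(A';B'B|\Msg Z)_\omega-\big[I(\Msg Z;B')-I(\Msg;B')\big].
\end{equation*}
By the chain rule this is equivalent to
\begin{equation*}
\EF(A';B'B|\Msg)\leq\EF(A';B'B|\Msg Z)+I(Z;B|B'\Msg).
\end{equation*}
This holds because convexity of $\EF$ gives $\EF(A';B'B|\Msg)\leq\EF(A';B'B|\Msg Z)$, and $I(Z;B|B'\Msg)\geq0$ by strong subadditivity. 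The label $Z$ is never made available to either party, so this step is purely an accounting device. The step that needs the most care is the pure-case bookkeeping: one must check that all conditional-entropy terms cancel exactly, leaving only $S(B|B')$, the output entropies, and the $f_\N$ terms.

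\emph{Tightness.} Given any ensemble $(p_x,\rho^x)_x$, take $\Msg=x$ with distribution $(p_x)_x$. Let $B'$ be trivial, and let $A'\simeq R$ hold a purification $\phi^{\rho^x}_{RA}$. Both subtracted terms in \eqref{eq:amortization} then vanish. After the channel use,
\begin{equation*}
I(\Msg;B)_\omega=\chi((p_x,\N(\rho^x))_x),\qquad \EF(A';B|\Msg)_\omega=\sum_xp_xf_\N(\rho^x).
\end{equation*}
The left side of \eqref{eq:amortization} therefore equals the objective of \eqref{eq:UF} for this ensemble. Taking the supremum over ensembles, which is attained by Proposition~\ref{prop:cardinality}, shows that $\UF(\N)$ is the exact supremum of the left side.
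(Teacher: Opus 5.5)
Your proof is correct and follows the paper's argument essentially step for step. The pure case goes through Lemma~\ref{lem:relocation} and $S(B|B')\leq S(B)$. The mixed case uses an $\EF$-optimal pure refinement with a virtual label $Z$: your single inequality $\EF(A';B'B|\Msg)\leq\EF(A';B'B|\Msg Z)+I(Z;B|B'\Msg)$ is the paper's comparison of the input excess $I(Z;B'|\Msg)$ with the output excess $\geq I(Z;B'B|\Msg)$, rearranged by the chain rule. Tightness uses a trivial $B'$ with Alice holding purifications.

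One small wording correction: $S(B'B)_{\omega^m}$ does not cancel against $S(B'|B)_{\omega^m}$. Rather, the two combine to leave $-S(B)_{\omega^m}=-S(\N(\rho^m_A))$, which your displayed bound correctly retains.
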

\begin{proof}
The potential~\eqref{eq:potential}, with trivial transcript and the appropriate physical cut, defines
\begin{align}
 \Q_{\rm in}&\coloneqq I(\Msg;B')_\rho+\EF(A'A;B'|\Msg)_\rho,\\
 \Q_{\rm out}&\coloneqq I(\Msg;B'B)_\omega+\EF(A';B'B|\Msg)_\omega.
 \label{eq:input-output-potentials}
\end{align}
Thus, we aim to prove that~\eqref{eq:amortization} bounds $\Q_{\rm out}-\Q_{\rm in}$.

\par\medskip\noindent\textbf{Pure conditional input states.}
Suppose first that every $\rho^m_{A'AB'}=\psi^m_{A'AB'}$ is pure. Its entanglement across $A'A|B'$ equals $S(B')_{\psi^m}$, so
\begin{align}
 \Q_{\rm in}
 &=S(B')_\rho-\sum_mp_mS(B')_{\psi^m}+\sum_mp_m\EF(A'A;B')_{\psi^m}\\
 &=S(B')_\rho.
 \label{eq:pure-in}
\end{align}
Define $\rho_A^m\coloneqq\Tr_{A'B'}\psi^m_{A'AB'}$. Since $A'B'$ purifies this input marginal, local-isometry invariance of entanglement of formation gives
\begin{equation}
 \EF(A'B';B)_{\omega^m}=f_\N(\rho_A^m).
 \label{eq:joint-reference}
\end{equation}
Apply Lemma~\ref{lem:relocation} separately for every $m$:
\begin{align}
 \Q_{\rm out}
 &=S(B'B)_\omega-\sum_mp_mS(B'B)_{\omega^m}+\sum_mp_m\EF(A';B'B)_{\omega^m}\\
 &\leq S(B'B)_\omega-\sum_mp_mS(B'B)_{\omega^m}
   +\sum_mp_m\big[f_\N(\rho_A^m)+S(B'|B)_{\omega^m}\big]\\
 &=S(B'B)_\omega-\sum_mp_mS(B)_{\omega^m}+\sum_mp_mf_\N(\rho_A^m).
 \label{eq:pure-out}
\end{align}
The last equality follows from $S(B'B)-S(B'|B)=S(B)$, separately on each conditional state. The channel leaves $B'$ unchanged, so $S(B')_\rho=S(B')_\omega$. Subtracting~\eqref{eq:pure-in} gives
\begin{align}
 \Q_{\rm out}-\Q_{\rm in}
 &\leq S(B|B')_\omega-\sum_mp_mS(\N(\rho_A^m))+\sum_mp_mf_\N(\rho_A^m)\\
 &\leq S\!\left(\N\!\left(\sum_mp_m\rho_A^m\right)\right)
       -\sum_mp_mS(\N(\rho_A^m))+\sum_mp_mf_\N(\rho_A^m)\\
 &\leq\UF(\N).
 \label{eq:pure-amortization}
\end{align}
The second inequality follows from $S(B|B')\leq S(B)$, and the last follows from~\eqref{eq:UF}. The marginal inputs $\rho_A^m$ can be mixed even though the full conditional inputs were pure.

\par\medskip\noindent\textbf{Mixed conditional input states.}
For each $m$, choose a pure decomposition
\begin{equation}
 \rho^m_{A'AB'}=\sum_zq(z|m)\psi^{mz}_{A'AB'}
 \label{eq:refine-state}
\end{equation}
attaining the \emph{initial} $\EF(A'A;B')_{\rho^m}$. Define the refined states
\begin{align}
 \widetilde\rho_{\Msg ZA'AB'}&\coloneqq\sum_{m,z}p_mq(z|m)\proj m_{\Msg}\otimes\proj z_Z\otimes\psi^{mz}_{A'AB'},\\
 \widetilde\omega_{\Msg ZA'B'B}&\coloneqq(\id_{\Msg ZA'B'}\otimes\N)(\widetilde\rho_{\Msg ZA'AB'}).
 \label{eq:refined-states}
\end{align}
Their marginals after discarding $Z$ are the original states. The refined potentials are defined explicitly as
\begin{align}
 \widetilde\Q_{\rm in}
 &\coloneqq I(\Msg Z;B')_{\widetilde\rho}
       +\EF(A'A;B'|\Msg Z)_{\widetilde\rho},
 \label{eq:refined-potential-in}\\
 \widetilde\Q_{\rm out}
 &\coloneqq I(\Msg Z;B'B)_{\widetilde\omega}
       +\EF(A';B'B|\Msg Z)_{\widetilde\omega}.
 \label{eq:refined-potential-out}
\end{align}
These are the same function~\eqref{eq:potential} with the combined classical label $\Msg Z$. The virtual label $Z$ is not communicated or supplied to either laboratory.

At the input, the chosen decomposition preserves the average entanglement term. The chain rule for mutual information therefore gives
\begin{equation}
 \widetilde\Q_{\rm in}-\Q_{\rm in}=I(Z;B'|\Msg)_{\widetilde\rho}.
 \label{eq:refinement-in}
\end{equation}
At the output, the excess is
\begin{align}
 \widetilde\Q_{\rm out}-\Q_{\rm out}
 &=I(Z;B'B|\Msg)_{\widetilde\omega}
 +\sum_mp_m\!\left[\sum_zq(z|m)\EF(A';B'B)_{\omega^{mz}}-\EF(A';B'B)_{\omega^m}\right]\\
 &\geq I(Z;B'B|\Msg)_{\widetilde\omega}\\
 &\geq I(Z;B'|\Msg)_{\widetilde\omega}
 = I(Z;B'|\Msg)_{\widetilde\rho}.
 \label{eq:refinement-out}
\end{align}
The first inequality follows from convexity of $\EF$. The second follows from the chain rule and nonnegativity of $I(Z;B|\Msg B')$, and the last equality follows because the channel leaves $\Msg ZB'$ unchanged. To see the direction of the resulting increment comparison, first write the exact identity
\begin{align}
 \Q_{\rm out}-\Q_{\rm in}
 &=\widetilde\Q_{\rm out}-\widetilde\Q_{\rm in}
   -\big(\widetilde\Q_{\rm out}-\Q_{\rm out}\big)
   +\big(\widetilde\Q_{\rm in}-\Q_{\rm in}\big)\\
 &\leq\widetilde\Q_{\rm out}-\widetilde\Q_{\rm in}.
 \label{eq:refinement-subtraction}
\end{align}
The inequality follows because~\eqref{eq:refinement-out} is at least~\eqref{eq:refinement-in}. Applying the already proved pure-conditional-state result to the refined label gives
\begin{equation}
 \Q_{\rm out}-\Q_{\rm in}\leq\widetilde\Q_{\rm out}-\widetilde\Q_{\rm in}\leq\UF(\N).
 \label{eq:mixed-amortization}
\end{equation}
The comparison keeps track of both excesses, not only the output one. Ignoring either excess would not justify replacing a mixed input by a pure one.

\par\medskip\noindent\textbf{Tightness for the potential.}
For an ensemble in~\eqref{eq:UF}, let Bob's initial system be trivial and let Alice hold a purification of each ensemble state. The initial potential is zero. After one transmission its mutual-information term is exactly the output Holevo information and its entanglement term is the average $f_\N$. Thus the potential increment is the objective of that ensemble. Taking the supremum gives equality in the optimized statement; Proposition~\ref{prop:cardinality} also guarantees an attaining finite ensemble.
\end{proof}

\FloatBarrier

\subsection{The operational converse}
\begin{figure}[t]\centering\begin{tikzpicture}[x=1cm,y=1cm,font=\small]
\draw[-{Latex},thin] (.6,4.45)--node[above,font=\footnotesize]{time}(13.05,4.45);
\node[anchor=east] at (0,3.25) {Alice};
\node[anchor=east] at (0,0) {Bob};
\foreach \i/\ax/\bx in {1/1.35/3.35,2/5.55/7.55,3/9.75/11.75}{
 \node[op,minimum width=1.25cm] (a\i) at (\ax,3.25) {$\mathcal L_\i$};
 \node[op,minimum width=.7cm] (n\i) at ({(\ax+\bx)/2},1.62) {$\N$};
 \draw[qwire] (a\i.south)--node[left=2pt,pos=.38,font=\footnotesize]{$A_\i$}(n\i.north);
 \node[font=\footnotesize,anchor=east,align=right] at ({\ax+.15},2.2) {\smash{\raisebox{0.70cm}{$\rho^{(\i)}$}}\; $\Q_\i^-$};
 \draw[densely dotted] ({\ax+.25},2.2)--({\ax+.64},2.2);
 \node[font=\footnotesize,anchor=east,align=right] at ({\bx-1.15},.88) {$\omega^{(\i)}$\; $\Q_\i^+$};
 \draw[densely dotted] ({\bx-1.03},.88)--({\bx-.55},.88);
}
\node[op,minimum width=1.8cm] (b1) at (3.35,0) {$(\mathcal B_{f_1})_{f_1}$};
\node[op,minimum width=1.9cm] (b2) at (7.55,0) {$(\mathcal B_{f_2|f_1})_{f_2}$};
\node[op,minimum width=2.1cm] (b3) at (11.75,0) {$(\Lambda_{\widehat m|f})_{\widehat m}$};
\foreach \i in {1,2,3}{
 \draw[qwire] (n\i.south)--node[left=1pt,pos=.67,font=\footnotesize]{$B_\i$}(b\i.north);
}
\draw[qwire] (.05,3.25)--node[above,font=\footnotesize]{$A_0'$}(a1.west);
\draw[qwire] (a1.east)--node[above,font=\footnotesize]{$A_1'$}(a2.west);
\draw[qwire] (a2.east)--node[above,font=\footnotesize]{$A_2'$}(a3.west);
\draw[qwire] (a3.east)--node[above,font=\footnotesize]{$A_3'$}(13.15,3.25);
\draw[qwire] (.15,0)--node[below,font=\footnotesize]{$B_1'$}(b1.west);
\draw[qwire] (b1.east)--node[below,font=\footnotesize]{$B_2'$}(b2.west);
\draw[qwire] (b2.east)--node[below,font=\footnotesize]{$B_3'$}(b3.west);
\draw[cwire] (b3.east)--node[above,font=\footnotesize]{$\widehat m$}(13.5,0);
\draw[cwire] (b1.north)--node[sloped,above=3pt,font=\footnotesize]{$f_1$}(a2.south west);
\draw[cwire] (b2.north)--node[sloped,above=3pt,font=\footnotesize]{$f_2$}(a3.south west);
\node[font=\footnotesize] at (1.35,3.95) {$m$};
\node[font=\footnotesize] at (5.55,3.95) {$m,f_1$};
\node[font=\footnotesize] at (9.75,3.95) {$m,f_1,f_2$};
\node[font=\footnotesize,align=center] at (6.8,-.9) {$f=(f_1,f_2)$; dotted ticks label the full pre- and post-channel states.};
\end{tikzpicture}
\caption{Three uses of $\N\colon A_i\to B_i$ with classical feedback. Retained memories carry primes. The pre-channel state $\rho^{(i)}$ contains $A_i'A_i$ at Alice and $B_i'$ at Bob; the post-channel state $\omega^{(i)}$ contains $A_i'$ at Alice and $B_i'B_i$ at Bob. The dotted markers label these complete states, not only the adjacent wire. Each interval between uses contains the receiver instrument, classical feedback, and the next encoder. The converse bounds $\Q_i^+-\Q_i^-$ at every channel use and $\Q_{i+1}^--\Q_i^+$ between uses. The initial seed is suppressed.}\label{fig:protocol}\end{figure}

The one-step theorem allows arbitrary correlated laboratories and arbitrary posterior message distributions. It can therefore be applied to every pre-channel state in Figure~\ref{fig:protocol}. We make the sum of increments explicit, following the organization of the proof of~\cite[Theorem~4]{DQSW19}. The new step is the use of actual-laboratory entanglement of formation and its $\UF$ increment in place of the earlier conditional entropy and output-entropy increment.

\begin{samepage}
\begin{theorem}[Classical-feedback capacity bound]\label{thm:converse}
For an $n$-use feedback-assisted code in the model of Section~\ref{sec:model}, with $M$ equiprobable messages and average error at most $\eps\in[0,1)$, the following inequality holds:
\begin{equation}
 (1-\eps)\log_2M\leq n\UF(\N)+h_2(\eps).
 \label{eq:finite-converse}
\end{equation}
In particular, $\Cfb(\N)\leq\UF(\N)$ for the capacity defined in~\eqref{eq:capacity-definition}.
\end{theorem}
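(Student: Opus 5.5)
The plan is a telescoping argument over the potential $\Q$ of~\eqref{eq:potential}, evaluated along the protocol of Figure~\ref{fig:protocol}. The pre- and post-channel potentials at use $i$ are denoted $\Q_i^-$ and $\Q_i^+$. Each is computed with Alice's full laboratory $A_i'A_i$ before the use and Bob's full laboratory $B_i'B_i$ after it, and each is conditioned on the transcript $T_i$ accumulated so far, including the initial seed. The chain is
\[
0=\Q(\rho^0)\geq\Q_1^-,\qquad \Q_i^+-\Q_i^-\leq\UF(\N),\qquad \Q_{i+1}^-\leq\Q_i^+ .
\]
Lemma~\ref{lem:free} gives the first and third inequalities, since every interval between uses is a composition of Alice's local channels, Bob's instruments, and feedback. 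Telescoping then yields $\Q_n^+\leq n\UF(\N)$.

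The channel-use step is where the transcript must be restored, and it is the only point needing care. Conditioned on each transcript value $t$, the state $\rho^{(i)}$ is a classical--quantum state in $\Msg$ with an arbitrary posterior $p(m|t)$, and Theorem~\ref{thm:amortization} explicitly allows arbitrary label distributions. I would write
\[
I(\Msg;B'T)=I(\Msg;T)+\sum_t p(t)\,I(\Msg;B')_{\rho_t}
\]
and handle the conditional entanglement term as the $t$-average $\EF(\cdot|\Msg T)=\sum_t p(t)\,\EF(\cdot|\Msg)_{\rho_t}$. The channel does not touch $T$, so the $I(\Msg;T)$ contributions cancel in $\Q_i^+-\Q_i^-$. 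What remains is the $t$-average of the per-$t$ increments, each of which is at most $\UF(\N)$ by Theorem~\ref{thm:amortization}.

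For the endpoint, Lemma~\ref{lem:free} gives $\Q_n^+\geq I(\Msg;B_n'B_nT)$. Bob's final decoding POVM is a channel on $B_n'B_nT$, so data processing gives $I(\Msg;\Est)\leq I(\Msg;B_n'B_nT)$. Fano's inequality for a uniform message with average error $\eps$ gives $I(\Msg;\Est)\geq(1-\eps)\log_2M-h_2(\eps)$. Combining these yields~\eqref{eq:finite-converse}.

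For the capacity statement, maximum error is at most $\eps$ implies average error is at most $\eps$, so the bound applies to every $(n,M,\eps)$ protocol. Dividing by $n$ gives
\[
C^{n,\eps}_{\leftarrow}(\N)\leq\frac{\UF(\N)+h_2(\eps)/n}{1-\eps}.
\]
Taking $\liminf_{n}$ and then the infimum over $\eps$ gives $\Cfb(\N)\leq\UF(\N)$.

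The expected obstacle is only bookkeeping: the transcript conditioning in the channel step, and the correct assignment of the memories $A_i'A_i$ and $B_i'B_i$ to the two cuts. The analytic content is already in Lemma~\ref{lem:free} and Theorem~\ref{thm:amortization}.
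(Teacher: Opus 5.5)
Your plan is the paper's proof: the same telescoping of $\Q_i^\pm$, with Lemma~\ref{lem:free} handling the intervals between uses. The channel step is also the same: Theorem~\ref{thm:amortization} applied conditionally on each transcript value, after $I(\Msg;T)$ cancels by the chain rule. The endpoint is again data processing from Bob's final laboratory to $\Est$. You pass through $I(\Msg;B_n'B_nT)\leq\Q_n^+$ directly, whereas the paper first uses $\Q_{\rm fin}\leq\Q_n^+$; the two are equivalent.

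The one step that is not justified as written is Fano. The hypothesis gives only that the actual error $P_e\coloneqq\Pr(\widehat m\neq m)$ is \emph{at most} $\eps$, and your capacity paragraph relies on exactly this relaxation from maximum to average error. Fano yields $\log_2M-I(\Msg;\Est)\leq h_2(P_e)+P_e\log_2(M-1)$. You cannot simply replace $P_e$ by $\eps$, because $h_2$ is not monotone on $[0,1]$. The paper closes this with a case split:
\begin{itemize}
\item For $M\geq2$ and $\eps\leq1-1/M$, the map $x\mapsto h_2(x)+x\log_2(M-1)$ has derivative $\log_2[(1-x)(M-1)/x]\geq0$ on $[0,1-1/M]$. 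So the substitution is legitimate, and $\log_2(M-1)\leq\log_2M$ finishes.
\item For $\eps>1-1/M$, the bound holds trivially, since $h_2(\eps)=h_2(1-\eps)\geq(1-\eps)\log_2\frac{1}{1-\eps}\geq(1-\eps)\log_2M$ and $\UF(\N)\geq\chi(\N)\geq0$.
\item The case $M=1$ is immediate.
\end{itemize}
With this case split added, your argument is complete.
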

\end{samepage}
\begin{proof}
Let $T_{i-1}$ contain a complete record of all receiver-to-sender classical communication before channel use $i$, including the initial message-independent seed when present. No earlier feedback is discarded from this register. Its value determines the classical controls used by the next encoder, while both laboratories may also retain quantum memory. Denote the states immediately before and after use $i$ by
$\rho^{(i)}_{\Msg T_{i-1}A_i'A_iB_i'}$ and
$\omega^{(i)}_{\Msg T_{i-1}A_i'B_i'B_i}$, respectively, as marked in Figure~\ref{fig:protocol}. Define their potentials as
\begin{align}
 \Q_i^-&\coloneqq I(\Msg;B_i'T_{i-1})_{\rho^{(i)}}+\EF(A_i'A_i;B_i'|\Msg T_{i-1})_{\rho^{(i)}},\\
 \Q_i^+&\coloneqq I(\Msg;B_i'B_iT_{i-1})_{\omega^{(i)}}+\EF(A_i';B_i'B_i|\Msg T_{i-1})_{\omega^{(i)}}.
 \label{eq:step-potentials}
\end{align}
Let $\Q_0$ and $\Q_{\rm fin}$ denote the values before initial local processing and after the final decoder. These are all instances of~\eqref{eq:potential}, not different potential functions.

During channel use $i$, the joint classical state of $\Msg T_{i-1}$ does not change. The chain rule
$I(\Msg;B_i'T_{i-1})=I(\Msg;T_{i-1})+I(\Msg;B_i'|T_{i-1})$
therefore cancels the transcript-only term in the increment. Applying Theorem~\ref{thm:amortization} conditional on every transcript value gives
\begin{equation}
 \Q_i^+-\Q_i^-
 =\sum_t\Pr(T_{i-1}=t)\,\Delta\Q_{i,t}
 \leq\sum_t\Pr(T_{i-1}=t)\UF(\N)=\UF(\N),
 \label{eq:conditional-channel-increment}
\end{equation}
where $\Delta\Q_{i,t}$ is defined as the one-step difference in~\eqref{eq:amortization} on that conditional state. Lemma~\ref{lem:free} gives
\begin{equation}
 \Q_1^-\leq\Q_0=0,\qquad
 \Q_{i+1}^-\leq\Q_i^+\ (1\leq i<n),\qquad
 \Q_{\rm fin}\leq\Q_n^+.
\end{equation}
Adding and subtracting the intermediate potentials produces the exact telescoping identity
\begin{align}
 \Q_{\rm fin}-\Q_0
 &= (\Q_1^--\Q_0)+\sum_{i=1}^n(\Q_i^+-\Q_i^-)
   +\sum_{i=1}^{n-1}(\Q_{i+1}^--\Q_i^+)+(\Q_{\rm fin}-\Q_n^+)\\
 &\leq\sum_{i=1}^n(\Q_i^+-\Q_i^-)\leq n\UF(\N).
 \label{eq:telescoping}
\end{align}
Every term except the forward increments is nonpositive. Figure~\ref{fig:accounting} summarizes exactly these terms. Since the final Bob laboratory contains $\Est$, nonnegativity of $\EF$ and data processing imply $I(\Msg;\Est)\leq\Q_{\rm fin}\leq n\UF(\N)$.

\begin{figure}[t]\centering\begin{tikzpicture}[x=1cm,y=1cm,font=\small]
\node[op,minimum width=1.7cm,minimum height=.9cm] (qm) at (1,0) {$\Q_i^-$};
\node[op,minimum width=1.7cm,minimum height=.9cm] (qp) at (6,0) {$\Q_i^+$};
\node[op,minimum width=1.7cm,minimum height=.9cm] (qn) at (11,0) {$\Q_{i+1}^-$};
\draw[qwire] (qm.east)--node[above=5pt,align=center]{forward channel\\use $i$}(qp.west);
\draw[cwire] (qp.east)--node[above=5pt,align=center]{feedback and\\local processing}(qn.west);
\node[align=center] at (3.5,-1.05) {$\Q_i^+-\Q_i^-\leq\UF(\N)$};
\node[align=center] at (8.5,-1.05) {$\Q_{i+1}^--\Q_i^+\leq0$};
\draw[thin] (.1,-1.65)--(11.9,-1.65);
\node[align=center] at (6,-2.25) {$\Q_0=0$\quad and\quad $I(\Msg;\Est)\leq\Q_{\rm fin}$};
\node[align=center] at (6,-3.05) {Add every increment:\quad
$\displaystyle\Q_{\rm fin}-\Q_0=\sum_{\text{all steps}}\Delta\Q\leq n\UF(\N)$};
\end{tikzpicture}
\caption{The potential increases only at forward channel uses. One representative forward use and the subsequent local-processing interval are shown. The notation $\Q_i^-$ and $\Q_i^+$ denotes the pre- and post-channel values marked in the detailed protocol of Figure~\ref{fig:protocol}. Each forward increment is at most $\UF(\N)$; each intervening local/feedback increment is nonpositive. Adding all increments gives the explicit telescoping estimate in the proof of Theorem~\ref{thm:converse}.}\label{fig:accounting}\end{figure}

Suppose first that $M\geq2$ and $0\leq\eps\leq1-1/M$. Fano's inequality gives an upper bound on $H(\Msg|\Est)$ in terms of the actual error. The function $h_2(x)+x\log_2(M-1)$ is nondecreasing on $[0,1-1/M]$, because its derivative is $\log_2[(1-x)(M-1)/x]$. Since the actual error is at most $\eps$, it follows that
\begin{align}
 \log_2M
 &=I(\Msg;\Est)+H(\Msg|\Est)\\
 &\leq n\UF(\N)+h_2(\eps)+\eps\log_2(M-1)\\
 &\leq n\UF(\N)+h_2(\eps)+\eps\log_2M.
 \label{eq:fano}
\end{align}
Rearrangement proves~\eqref{eq:finite-converse} in this range. If $\eps>1-1/M$, then the claimed bound already follows from
\begin{equation}
 h_2(\eps)=h_2(1-\eps)
 \geq(1-\eps)\log_2\frac1{1-\eps}
 \geq(1-\eps)\log_2M,
 \label{eq:fano-large-tolerance}
\end{equation}
and $\UF\geq0$. The case $M=1$ is immediate. Thus the theorem uses only an upper bound on error, for every $\eps<1$, without assuming that binary entropy itself is monotone on the whole interval.

Applying the result to every admissible protocol gives
\begin{equation}
 C_{\leftarrow}^{n,\eps}(\N)
 \leq\frac{\UF(\N)+h_2(\eps)/n}{1-\eps}.
 \label{eq:finite-capacity-converse}
\end{equation}
Taking the lower limit in $n$ and then the infimum over $\eps\in(0,1)$, as prescribed by~\eqref{eq:capacity-definition}, proves $\Cfb(\N)\leq\UF(\N)$.
\end{proof}

\begin{remark}[Scope and comparison with the earlier entropy bound]\label{rem:weak}
Theorem~\ref{thm:amortization} identifies the least state-independent upper bound on the increment of this particular potential. It does not assert achievability of that increment as a communication rate. Likewise, the factor $1-\eps$ in~\eqref{eq:finite-converse} prevents this argument from establishing a strong converse at $\UF$.

The telescoping proof parallels~\cite[Theorem~4]{DQSW19}, whose purified-protocol potential is~\eqref{eq:old-potential} and whose channel increment is at most $S(B)$. Our mixed-state refinement supplies the step needed to work on the actual laboratories. The simple estimate $f_\N(\rho)\leq S(\N(\rho))$ then shows from~\eqref{eq:UF} that $\UF(\N)\leq\sup_\rho S(\N(\rho))$. Thus the new numerical capacity upper bound never weakens the earlier maximum-output-entropy bound, although its proof requires more than the pointwise potential comparison. Strong-converse thresholds in~\cite{DKQSW23,FF21} remain distinct statements.
\end{remark}

\section{Earlier convex-roof bounds and structural properties}\label{sec:structure}
The amortization proof establishes a feedback interpretation of $\UF$. We now identify the earlier channel expression to which it is equal and record properties useful for simulations and possible improvements. This separates the operational theorem from the history of the optimized quantity itself.

\subsection{Identification with the Winter--Yang expression}
Choose a Stinespring isometric extension $V\colon A\to BE$ of the channel $\mathcal{N}_{A\to B}$. For a bipartite state $\sigma_{BE}$, the environment-measured classical correlation is defined as
\begin{equation}
 J(E;B)_\sigma\coloneqq S(B)_\sigma-
 \inf_{(M_y)_y}\sum_yq_yS(B)_{\sigma^y},
 \label{eq:J-env}
\end{equation}
where $(M_y)_y$ is a POVM on $E$. Its outcome probabilities are defined as $q_y\coloneqq\Tr[(I_B\otimes M_y)\sigma_{BE}]$, and its conditional output states are defined as $\sigma_B^y\coloneqq q_y^{-1}\Tr_E[(I_B\otimes M_y)\sigma_{BE}]$ for $q_y>0$. Terms with $q_y=0$ are omitted. The measured subsystem is specified by the notation and will not be interchanged. This is the classical correlation of~\cite{HV01}. Its mixed convex roof is defined as
\begin{equation}
 G_E(\sigma)\coloneqq\inf\left\{\sum_xp_xJ(E;B)_{\sigma^x}:
 \sigma=\sum_xp_x\sigma^x\right\}.
 \label{eq:mixed-roof}
\end{equation}
The component states in~\eqref{eq:mixed-roof} may be mixed. This feature distinguishes $G_E$ from a pure-state convex roof.

\begin{proposition}[Identification of the channel expression]\label{prop:WY}
For every finite-dimensional channel,
\begin{equation}
 \UF(\N)=\sup_\rho\left\{S(\N(\rho))-G_E(V\rho V^\dagger)\right\}.
 \label{eq:WY-identification}
\end{equation}
The right side is the mixed-convex-roof upper bound on potential Holevo capacity in~\cite[Theorem~10]{WY16}, based on~\cite{BEHY11}.
\end{proposition}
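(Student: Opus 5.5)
The plan is to use Koashi--Winter duality to turn $f_\N$ into an environment quantity, and then to match the two outer optimizations by a change of variables.

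\emph{Step 1 (Koashi--Winter).} Fix an input $\rho$ with purification $\phi^\rho_{RA}$, and let $\ket\psi_{RBE}\coloneqq(I_R\otimes V)\ket{\phi^\rho}$. Koashi--Winter duality~\cite{KW04} for the pure state $\psi_{RBE}$ gives
\begin{equation}
 \EF(R;B)_\psi+J(E;B)_\psi=S(B)_\psi .
\end{equation}
With the measured subsystem placed on $E$, this says $f_\N(\rho)=S(\N(\rho))-J(E;B)_{V\rho V^\dagger}$. I will state the orientation explicitly, since $J$ as defined in \eqref{eq:J-env} measures $E$ and evaluates the entropy of $B$. Hence $S(\N(\rho))-f_\N(\rho)=J(E;B)_{V\rho V^\dagger}$.

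\emph{Step 2 (rewrite the objective).} Substitute Step 1 into \eqref{eq:UF}. For an ensemble with average $\bar\rho$, the objective becomes
\begin{equation}
 S(\N(\bar\rho))-\sum_xp_x J(E;B)_{V\rho^xV^\dagger}.
\end{equation}
Taking the supremum first over ensembles with a fixed average $\bar\rho$ turns the second term into an infimum of $\sum_xp_xJ$ over decompositions $\bar\rho=\sum_xp_x\rho^x$.

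\emph{Step 3 (match decompositions).} I need this infimum to equal $G_E(V\bar\rho V^\dagger)$, which is an infimum over all mixed decompositions of $\sigma=V\bar\rho V^\dagger$ on $BE$.
\begin{itemize}
\item Every decomposition of $\bar\rho$ maps, via conjugation by $V$, to a decomposition of $\sigma$. This gives ``$\geq$'' for the infimum, i.e.\ $\leq$ for $\UF$.
\item Conversely, let $\sigma=\sum_xp_x\sigma^x$ with $\sigma^x\geq0$. Then $p_x\sigma^x\leq\sigma$, so $\supp\sigma^x\subseteq\supp\sigma\subseteq\operatorname{range}V$. Hence $\sigma^x=V\rho^xV^\dagger$ with $\rho^x\coloneqq V^\dagger\sigma^xV$, which is a state because $V^\dagger V=I$. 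Moreover $\sum_xp_x\rho^x=\bar\rho$.
\end{itemize}
So the two infima coincide term by term. Taking the supremum over $\bar\rho$ then yields \eqref{eq:WY-identification}.

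\emph{Step 4 (independence and attainment).} The choice of Stinespring isometry is immaterial: another dilation differs by an isometry on $E$, and such an isometry preserves $J$ because POVMs can be pulled back. The remark identifying the right side with~\cite[Theorem~10]{WY16} is a matter of matching their notation and needs only a citation.

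The main obstacle I expect is Step 1. Koashi--Winter is usually stated with a particular choice of which party is measured and which entropy is retained. I must check that the version needed here is $\EF(R;B)+J(B|E)=S(B)$, with the measurement on $E$ and the entropy of $B$, and that it holds for POVMs with unbounded outcome sets; the infimum in \eqref{eq:J-env} is attained with finitely many rank-one elements. Step 3 is routine once the support argument is written down.
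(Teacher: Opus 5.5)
Your proposal is correct and follows essentially the same route as the paper. Koashi--Winter duality applied to $(I_R\otimes V)\ket{\phi^\rho}$ turns each penalty $S(\N(\rho^x))-f_\N(\rho^x)$ into $J(E;B)_{V\rho^xV^\dagger}$, and the bijection between decompositions of a fixed average $\bar\rho$ and mixed decompositions of $V\bar\rho V^\dagger$ then identifies the inner infimum with $G_E$. The differences are cosmetic: the paper gets $\sigma^x=P\sigma^xP$ with $P=VV^\dagger$ from $\Tr[(I-P)\sigma^x]=0$, whereas you use $p_x\sigma^x\leq\sigma$, and your Step~4 on independence of the dilation is an extra remark the paper does not need.
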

\begin{proof}
For an input $\rho_A$, define $\sigma_{BE}^\rho\coloneqq V\rho_AV^\dagger$. Applying $V$ to a purification $\phi_{RA}^\rho$ gives a pure state on $RBE$. Koashi--Winter duality~\cite{KW04}, with the measurement on $E$, gives
\begin{equation}
 S(\N(\rho))-f_\N(\rho)=J(E;B)_{\sigma^\rho}.
 \label{eq:KW-identity}
\end{equation}
Fix an average input $\bar\rho_A$. Every ensemble decomposition $\bar\rho_A=\sum_xp_x\rho_A^x$ produces the decomposition $V\bar\rho_AV^\dagger=\sum_xp_xV\rho_A^xV^\dagger$. We must also check the converse direction because $G_E$ optimizes over all mixed-state decompositions of the output state.

Define $P\coloneqq VV^\dagger$, the projection onto the range of $V$. If $V\bar\rho_AV^\dagger=\sum_xp_x\sigma^x_{BE}$, then
\begin{equation}
 0=\Tr[(I-P)V\bar\rho_AV^\dagger]
   =\sum_xp_x\Tr[(I-P)\sigma^x_{BE}].
\end{equation}
Each summand is nonnegative. Thus $\Tr[(I-P)\sigma^x_{BE}]=0$ for every $x$ of positive weight. Positivity gives $(I-P)(\sigma^x_{BE})^{1/2}=0$, and hence $\sigma^x_{BE}=P\sigma^x_{BE}P$. The operator $\rho_A^x\coloneqq V^\dagger\sigma^x_{BE}V$ is therefore positive, has trace one, and satisfies $V\rho_A^xV^\dagger=\sigma^x_{BE}$. Moreover, $\sum_xp_x\rho_A^x=\bar\rho_A$. This gives a bijection between the two classes of decompositions.

Consequently, for fixed $\bar\rho_A$,
\begin{equation}
 \inf_{\sum_xp_x\rho_A^x=\bar\rho_A}
 \sum_xp_x\big[S(\N(\rho_A^x))-f_\N(\rho_A^x)\big]
 =G_E(V\bar\rho_AV^\dagger).
 \label{eq:WY-fixed-average}
\end{equation}
The average-output entropy $S(\N(\bar\rho_A))$ is fixed in this minimization. Subtracting the common minimum from that entropy and then optimizing over $\bar\rho_A$ yields~\eqref{eq:WY-identification}. Thus the identification preserves the full mixed-ensemble optimization rather than replacing it by a pure-state roof.
\end{proof}

This identification does not identify feedback capacity with potential Holevo capacity. It says that an existing expression has the feedback amortization property proved in Theorem~\ref{thm:amortization}. In particular, ordinary Holevo additivity for a channel need not make~\eqref{eq:WY-identification} equal to that channel's capacity. The strict depolarizing gap in Section~\ref{subsec:strict-UF} illustrates this distinction.

\subsection{Basic bounds and output processing}
The bound should recover elementary cases before it is used in less symmetric examples. It should also respect a receiver's ability to locally process an output. Both properties follow directly from its ensemble formulation.

\begin{proposition}[Basic properties]\label{prop:basic}
For every finite-dimensional channel,
\begin{equation}
 \max\{\chi(\N),\sup_\rho f_\N(\rho)\}
 \leq\UF(\N)\leq\sup_\rho S(\N(\rho)).
 \label{eq:basic-bounds}
\end{equation}
For every channel $\M$ acting on its output,
\begin{equation}
 \UF(\M\circ\N)\leq\UF(\N).
 \label{eq:UF-post}
\end{equation}
In particular, $\UF(\N)\leq\log_2d_A$. Moreover, $\UF$ is convex as a function of the channel.
\end{proposition}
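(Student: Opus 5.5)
The plan is to prove each claim directly from the ensemble formulation~\eqref{eq:UF}, using only the properties of $\EF$, $\chi$ and entropy already recorded.

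\textbf{Lower bounds.} Restricting to pure ensembles makes every $f_\N(\rho^x)=0$, so the objective becomes $\chi((p_x,\N(\rho^x))_x)$; taking the supremum gives $\chi(\N)\le\UF(\N)$. A one-element ensemble $\{(1,\rho)\}$ has zero Holevo term and objective $f_\N(\rho)$, which gives $\sup_\rho f_\N(\rho)\le\UF(\N)$.

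\textbf{Upper bounds.} By~\eqref{eq:EF-entropy}, $f_\N(\rho^x)\le S(B)_{\omega^{\rho^x}}=S(\N(\rho^x))$. Hence the objective is at most $S(\N(\bar\rho))\le\sup_\rho S(\N(\rho))$, as already noted in Remark~\ref{rem:weak}. For the dimension bound I would not use the output entropy. Instead, the relevant comparison is between $f_\N$ and the reference. For fixed $\bar\rho$, combine $f_\N(\rho^x)\le S(R)_{\omega^{\rho^x}}=S(\rho^x)$ with Holevo's bound $\chi((p_x,\N(\rho^x))_x)\le\chi((p_x,\rho^x)_x)=S(\bar\rho)-\sum_xp_xS(\rho^x)$, which follows by data processing~\eqref{eq:relative-entropy-properties}. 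Summing gives objective $\le S(\bar\rho)\le\log_2d_A$.

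\textbf{Output processing.} Fix an ensemble and set $\N'\coloneqq\M\circ\N$. Data processing for relative entropy gives $\chi((p_x,\M(\N(\rho^x)))_x)\le\chi((p_x,\N(\rho^x))_x)$. The output of $\id_R\otimes\M$ applied to $\omega^{\rho}_{RB}$ is obtained by a local channel on $B$, so monotonicity of $\EF$ under local operations gives $f_{\N'}(\rho)\le f_\N(\rho)$. Adding the two inequalities termwise and taking the supremum yields~\eqref{eq:UF-post}.

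\textbf{Convexity.} Let $\N_\lambda=\lambda\N_0+(1-\lambda)\N_1$. I would use a flag. Define $\widehat\N(\rho)\coloneqq\lambda\N_0(\rho)\otimes\proj0+(1-\lambda)\N_1(\rho)\otimes\proj1$; then $\N_\lambda$ equals $\widehat\N$ followed by tracing out the flag, so output processing gives $\UF(\N_\lambda)\le\UF(\widehat\N)$. For a fixed ensemble of $\widehat\N$, treat the two terms as follows.
\begin{itemize}
\item[(i)] By the flag identity~\eqref{eq:EF-flags}, $f_{\widehat\N}(\rho)=\lambda f_{\N_0}(\rho)+(1-\lambda)f_{\N_1}(\rho)$.
\item[(ii)] The Holevo term of the flagged outputs equals $\lambda\chi_0+(1-\lambda)\chi_1$, where $\chi_j$ is the Holevo information of the same ensemble through $\N_j$. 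This holds because the flag is independent of $x$: $S(\sum_xp_x\widehat\N(\rho^x))=h_2(\lambda)+\lambda S(\N_0(\bar\rho))+(1-\lambda)S(\N_1(\bar\rho))$, and similarly for each $\widehat\N(\rho^x)$, so the $h_2(\lambda)$ terms cancel.
\end{itemize}
The objective for $\widehat\N$ is therefore the $\lambda$-average of the objectives for $\N_0$ and $\N_1$. It is thus at most $\lambda\UF(\N_0)+(1-\lambda)\UF(\N_1)$, which proves convexity.

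The main point needing care is the flag decomposition of the Holevo term in (ii). Verifying that the conditional entropies split with a common $h_2(\lambda)$ that cancels is a routine but essential calculation. Everything else is immediate from earlier stated properties.
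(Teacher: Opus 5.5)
Your proof is correct. The lower bounds, the upper bound via $f_\N(\rho)\leq S(\N(\rho))$, and output-processing monotonicity follow the paper's arguments exactly. Your dimension bound combines the reference-side estimate $f_\N(\rho^x)\leq S(\rho^x)$ with data processing of the Holevo term. That is the paper's route written out in one step: the paper computes $\UF(\id_A)=\log_2 d_A$ from $f_{\id}(\rho)=S(\rho)$ and then applies \eqref{eq:UF-post} to $\N=\N\circ\id_A$.

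The genuine difference is in the convexity step. The paper argues directly. For a fixed ensemble, both arguments of each $D(\N(\rho^x)\|\N(\bar\rho))$ are affine in the channel, so joint convexity of relative entropy handles the Holevo term. Convexity of $\EF$ on the affinely varying state $\omega^\rho$ handles each entanglement term, and a supremum of convex functions is convex.

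You instead pass to the flagged channel $\widehat\N$ and discard the flag using the already-proved \eqref{eq:UF-post}. You then show that, for each fixed ensemble, the flagged objective is the $\lambda$-average of the two component objectives. The inequality direction needs only the easy half of \eqref{eq:EF-flags}, which uses convexity plus appending a local flag. This is essentially the data-processing proof of joint convexity, lifted to the whole objective.

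The paper's version is shorter and relies only on standard convexity facts. Yours reuses \eqref{eq:UF-post}, but in exchange it proves the slightly stronger inequality $\UF(\widehat\N)\leq\lambda\UF(\N_0)+(1-\lambda)\UF(\N_1)$. In other words, even revealing which component acted cannot raise the bound above the average, which fits naturally with the flagged simulations of Section~\ref{sec:flagged}.
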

\begin{proof}
Restrict to pure input ensembles for the first lower bound, and use one-element ensembles for the second. Inequality~\eqref{eq:EF-entropy} gives $f_\N(\rho)\leq S(\N(\rho))$, so all subtracted conditional output entropies in~\eqref{eq:UF} dominate the entanglement terms. This proves the upper bound.

For a fixed ensemble, output processing cannot increase its Holevo information or any entanglement term. Taking the supremum proves~\eqref{eq:UF-post}. For the identity channel, $f_{\id}(\rho)=S(\rho)$, so $\UF(\id_A)=\log_2d_A$. Apply~\eqref{eq:UF-post} with $\N=\id_A$.

Finally, for a fixed input ensemble the Holevo expression in~\eqref{eq:Holevo} is convex in the channel by joint convexity of relative entropy. Each entanglement term is convex by convexity of $\EF$ on the output of its fixed purification. Their sum is convex, and a supremum of convex functions is convex.
\end{proof}

For an entanglement-breaking channel $\cE$, $f_\cE=0$, so $\UF(\cE)=\chi(\cE)$. Theorem~\ref{thm:converse} and ordinary Holevo achievability recover the known equality $\Cfb(\cE)=C(\cE)=\chi(\cE)$~\cite{BN05,Shor02}. For a replacer channel $\UF=0$. For a $d$-dimensional erasure channel of erasure probability $p$, the orthogonal flag gives
\begin{equation}
 S(\cE_p(\rho))=h_2(p)+(1-p)S(\rho),\qquad
 f_{\cE_p}(\rho)=(1-p)S(\rho),
 \label{eq:erasure-check}
\end{equation}
and hence $\UF(\cE_p)=(1-p)\log_2d$, recovering the established first-order feedback capacity~\cite{DQSW19,DKQSW23}; the exact finite-blocklength result is given in~\cite{Wilde26Erasure}.

Convexity in Proposition~\ref{prop:basic} is a property of the upper bound. We do not assume convexity of the operational feedback capacity. A channel mixture can create correlations and resources that must still be accounted for; the flagged-channel proof in Section~\ref{sec:flagged} does so explicitly.

\subsection{Subadditivity and a regularized converse}
A tensor-product evaluation can improve a single-letter upper bound only if its use is justified in the adaptive model. We give that justification by parallel coding, rather than treating consecutive adaptive inputs as if they had been prepared together.

\begin{proposition}[Tensor products and regularization]\label{prop:regularization}
For finite-dimensional channels $\N$ and $\M$,
\begin{equation}
 \UF(\N\otimes\M)\leq\UF(\N)+\UF(\M).
 \label{eq:subadditive}
\end{equation}
Moreover,
\begin{equation}
 \Cfb(\N)\leq\inf_{k\in\mathbb{N}}\frac1k\UF(\N^{\otimes k})
 =\lim_{k\to\infty}\frac1k\UF(\N^{\otimes k}).
 \label{eq:regularized}
\end{equation}
\end{proposition}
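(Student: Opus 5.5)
Subadditivity will come from the amortization theorem rather than from a direct ensemble computation. Fix an ensemble $(p_x,\rho^x_{A_1A_2})_x$ for $\N\otimes\M$, with $\N\colon A_1\to B_1$ and $\M\colon A_2\to B_2$. For each $x$, Alice prepares a purification $\phi^x_{RA_1A_2}$ and holds all of it, while Bob's laboratory is initially trivial. The state $\sum_xp_x\proj x_\Msg\otimes\phi^x$ has potential zero, with $\Msg$ playing the role of the classical label. We then send $A_1$ through $\N$, regard $RA_2$ as Alice's retained memory, and apply Theorem~\ref{thm:amortization}; the potential increases by at most $\UF(\N)$. Next we send $A_2$ through $\M$, with Alice's memory $R$ and Bob's memory $B_1$. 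Theorem~\ref{thm:amortization} allows arbitrary correlated laboratories and mixed conditional states, so this increment is at most $\UF(\M)$. The final potential is
\[
I(\Msg;B_1B_2)+\EF(R;B_1B_2|\Msg)=\chi\big((p_x,(\N\otimes\M)(\rho^x))_x\big)+\sum_xp_xf_{\N\otimes\M}(\rho^x).
\]
This equals the objective in~\eqref{eq:UF} for $\N\otimes\M$. Taking the supremum over ensembles gives~\eqref{eq:subadditive}. The only care needed is to check that the cut conventions match the theorem: before each use the channel input sits on Alice's side, and the conditional entanglement is evaluated across Alice's memory versus Bob's full laboratory.

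For the regularized converse, view $\N^{\otimes k}$ as a single channel. A feedback code for $\N$ with $nk$ uses is a particular feedback code for $\N^{\otimes k}$ with $n$ uses only if no adaptation occurs inside blocks. A general adaptive protocol is not of that form, so Theorem~\ref{thm:converse} cannot be applied to it directly. We instead follow the statement's hint of parallel coding. Given an $(n,M,\eps)$ feedback code for $\N$, run $k$ independent copies of it in parallel, with independent messages. At the $i$-th time step all $k$ copies use $\N$ once, which is one use of $\N^{\otimes k}$. The feedback from all copies is sent jointly, which is still legitimate receiver-to-sender feedback, and each copy's encoder depends only on its own record. This yields an $n$-use feedback code for $\N^{\otimes k}$ with $M^k$ messages and average error at most $k\eps$ by the union bound.

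That error bound degrades with $k$. A cleaner route avoids it by applying the Fano step per copy. The telescoping argument of Theorem~\ref{thm:converse} gives $I(\Msg_1\cdots\Msg_k;\Est_1\cdots\Est_k)\leq n\,\UF(\N^{\otimes k})$. Independence of the messages together with superadditivity of mutual information over independent sources gives $\sum_jI(\Msg_j;\Est_j)\leq I(\Msg^k;\Est^k)$. Each term satisfies $I(\Msg_j;\Est_j)\geq(1-\eps)\log_2M-h_2(\eps)$. Hence
\[
(1-\eps)\log_2M\leq \frac nk\,\UF(\N^{\otimes k})+h_2(\eps),
\]
and taking limits as in the proof of Theorem~\ref{thm:converse} gives $\Cfb(\N)\leq\frac1k\UF(\N^{\otimes k})$ for every $k$. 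The equality of the infimum with the limit then follows from Fekete's lemma, since $a_k\coloneqq\UF(\N^{\otimes k})$ is subadditive by~\eqref{eq:subadditive} and finite, with $a_k\leq k\log_2 d_A$.

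The main obstacle is the superadditivity step. The estimates $\Est_j$ are jointly distributed, and they are correlated through shared feedback only if we let each copy's feedback influence the others. We keep the copies genuinely independent: the encoders and decoders of copy $j$ use only copy $j$'s data, and the seeds are independent. The joint state then factorizes across copies, so $I(\Msg^k;\Est^k)=\sum_jI(\Msg_j;\Est_j)$ exactly. It remains to confirm that this product protocol fits the model of Section~\ref{sec:model} as an $n$-use protocol for $\N^{\otimes k}$, with finite memories, a fixed schedule, and no forward classical communication.
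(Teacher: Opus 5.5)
Your subadditivity argument is the paper's own. You purify each ensemble member at Alice and apply Theorem~\ref{thm:amortization} twice: first with retained memory $RA_2$, then with retained memory $R$ and Bob's memory $B_1$. You then identify the final potential with the $\UF(\N\otimes\M)$ objective.

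For the regularized converse you take a correct but genuinely different route. The paper starts from $(n,M,\eps/k)$ codes for $\N$, so that the union bound makes the parallel product an $(n,M^k,\eps)$ code for $\N^{\otimes k}$. From this it deduces $\Cfb(\N^{\otimes k})\geq k\Cfb(\N)$ and then applies Theorem~\ref{thm:converse} to $\N^{\otimes k}$ as a black box. You instead reopen the converse. You apply the telescoping bound $I(\Msg^k;\Est^k)\leq n\UF(\N^{\otimes k})$ directly to the parallel protocol, then combine it with Fano's inequality for each copy. This yields $(1-\eps)\log_2M\leq\frac nk\UF(\N^{\otimes k})+h_2(\eps)$ for every $(n,M,\eps)$ code for $\N$.

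What each route buys:
\begin{itemize}
\item Yours gives a non-asymptotic regularized converse with no shrinking of the error tolerance to $\eps/k$.
\item The paper's is more modular. It obtains the capacity superadditivity $\Cfb(\N^{\otimes k})\geq k\Cfb(\N)$ as a by-product and never reopens the proof of Theorem~\ref{thm:converse}.
\end{itemize}

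Two of your stated worries are unnecessary.
\begin{itemize}
\item The $k\eps$ degradation is harmless once one starts from $\eps/k$ codes, as the paper does.
\item Your ``main obstacle'' is not one. For independent messages the chain rule gives $I(\Msg^k;\Est^k)=\sum_jI(\Msg_j;\Est^k\Msg^{j-1})\geq\sum_jI(\Msg_j;\Est_j)$. This holds even if the estimates were correlated through shared feedback, so exact factorization across copies is not needed.
\end{itemize}
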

\begin{proof}
Write $\N\colon A_1\to B_1$ and $\M\colon A_2\to B_2$. Fix an arbitrary finite ensemble $(p_x,\tau^x_{A_1A_2})_x$, and let $\phi^x_{RA_1A_2}$ purify each member, with all three quantum systems initially held by Alice. Define
\begin{align}
 \zeta^0_{XRA_1A_2}
 &\coloneqq\sum_xp_x\proj x_X\otimes\phi^x_{RA_1A_2},\\
 \zeta^1_{XRA_2B_1}
 &\coloneqq(\id_{XRA_2}\otimes\N_{A_1\to B_1})(\zeta^0),\\
 \zeta^2_{XRB_1B_2}
 &\coloneqq(\id_{XRB_1}\otimes\M_{A_2\to B_2})(\zeta^1).
 \label{eq:product-three-states}
\end{align}
Tensor factors are reordered as displayed. Bob's initial laboratory is trivial, so the potential at the first state is $\Q_0=0$. Its subsequent values are
\begin{align}
 \Q_1&\coloneqq I(X;B_1)_{\zeta^1}+\EF(RA_2;B_1|X)_{\zeta^1},\\
 \Q_2&\coloneqq I(X;B_1B_2)_{\zeta^2}+\EF(R;B_1B_2|X)_{\zeta^2}.
 \label{eq:product-potentials}
\end{align}
The first transmission is covered by Theorem~\ref{thm:amortization} with Alice's retained memory $RA_2$. For the second transmission, her retained memory is $R$, her input is $A_2$, and Bob's old memory is $B_1$. The conditional states after the first transmission can be mixed, which is precisely why the mixed-state part of that theorem is needed. Applying it twice gives
\begin{equation}
 \Q_1-\Q_0\leq\UF(\N),\qquad
 \Q_2-\Q_1\leq\UF(\M),\qquad
 \Q_2\leq\UF(\N)+\UF(\M).
 \label{eq:product-increments}
\end{equation}
To identify $\Q_2$, set $\bar\tau\coloneqq\sum_xp_x\tau^x$. The two local channel applications commute, so $\zeta^2$ is the output of the product channel on the original purification ensemble. Therefore
\begin{align}
 \Q_2
 &=S((\N\otimes\M)(\bar\tau))
   -\sum_xp_xS((\N\otimes\M)(\tau^x))
   +\sum_xp_xf_{\N\otimes\M}(\tau^x).
 \label{eq:product-objective}
\end{align}
This is the defining objective of $\UF(\N\otimes\M)$ for the chosen ensemble. Since the ensemble was arbitrary, its supremum is bounded by~\eqref{eq:product-increments}, proving~\eqref{eq:subadditive}.

Fix $k\geq1$ and $\eps\in(0,1)$. Choose an $(n,M,\eps/k)$ protocol whose rate approaches $C_{\leftarrow}^{n,\eps/k}(\N)$. Run $k$ independent copies in parallel, using independent messages and seeds. One use of $\N^{\otimes k}$ advances all copies by one step, and Bob returns the tuple of their feedback outcomes before the next product-channel use. Hence the protocol remains adaptive in each copy. Its message is a tuple in $[M]^k$, so its rate is $k\log_2M/n$. For every message tuple, the union bound gives error at most $k(\eps/k)=\eps$. Taking the supremum over the original protocol gives
\begin{equation}
 C_{\leftarrow}^{n,\eps}(\N^{\otimes k})
 \geq k C_{\leftarrow}^{n,\eps/k}(\N).
 \label{eq:parallel-n-shot}
\end{equation}
Taking lower limits in $n$ and using~\eqref{eq:capacity-definition},
\begin{equation}
 \liminf_{n\to\infty}C_{\leftarrow}^{n,\eps}(\N^{\otimes k})
 \geq k\liminf_{n\to\infty}C_{\leftarrow}^{n,\eps/k}(\N)
 \geq k\Cfb(\N).
\end{equation}
This holds for every $\eps\in(0,1)$, so the infimum over $\eps$ proves $\Cfb(\N^{\otimes k})\geq k\Cfb(\N)$. Apply Theorem~\ref{thm:converse} to $\N^{\otimes k}$ and divide by $k$. Since this holds for every $k$, the infimum in~\eqref{eq:regularized} is a capacity upper bound. Finally,~\eqref{eq:subadditive} makes the sequence of finite numbers $a_k\coloneqq\UF(\N^{\otimes k})$ subadditive, so the subadditive-sequence lemma gives $\lim_k a_k/k=\inf_k a_k/k$.
\end{proof}

The argument does not assert that a single adaptive protocol using $\N$ many times can be partitioned into independent nonadaptive blocks. Instead, independent copies of the entire feedback protocol establish the capacity comparison needed for~\eqref{eq:regularized}.

\section{Flagged-channel simulations}\label{sec:flagged}
The preceding output-processing inequality has an operational interpretation. If $\N=\mathcal T\circ\mathcal F$, Bob can simulate each use of $\N$ by applying $\mathcal T$ locally to the output of $\mathcal F$, before carrying out the next instrument in the original feedback code. Thus $\Cfb(\N)\leq\Cfb(\mathcal F)\leq\UF(\mathcal F)$. A useful extension supplies enough extra information to simplify $\mathcal F$, but not so much that it removes the noise completely~\cite{SS08,KFG22}.

For example, an erasure channel can simulate a depolarizing channel: Bob replaces each erasure flag by a maximally mixed state. The exact erasure feedback formula~\cite{Wilde26Erasure} then gives a valid converse, but the replacement event carries no input information, and the resulting depolarizing upper bound is relatively loose. We instead reveal an input-independent choice between a noiseless channel and an informative entanglement-breaking channel. The latter can convey classical information even though it cannot generate entanglement. The next theorem evaluates this flagged extension, explicitly accounting for entanglement generated during earlier noiseless uses.

\subsection{Noiseless and entanglement-breaking components}
Let $\cE\colon A\to B$ be entanglement breaking, and let $0\leq\lambda\leq1$. The flagged channel is defined as
\begin{equation}
 \cF_{\lambda,\cE}(\rho)\coloneqq(1-\lambda)\rho\oplus\lambda\cE(\rho).
 \label{eq:flagged-channel}
\end{equation}
The first sector is a copy of the input space, and the second is the output space of $\cE$. Bob can read the sector without disturbing its quantum state. The choice is independent of the input. The direct-sum definition does not require the two sector dimensions to agree. When the component outputs are identified with the same space, discarding the flag gives their ordinary convex mixture. An erased-output flag is a special case with $\cE$ a replacer. Figure~\ref{fig:flagged} depicts the construction.

\begin{theorem}[Exact capacity of the flagged channel]\label{thm:flagged}
For the channel in~\eqref{eq:flagged-channel},
\begin{align}
 \Cfb(\cF_{\lambda,\cE})
 &=C(\cF_{\lambda,\cE})=\chi(\cF_{\lambda,\cE})=\UF(\cF_{\lambda,\cE})\\
 &=\sup_{(p_x,\psi^x)_x}\left\{(1-\lambda)S(\bar\rho)
             +\lambda\chi((p_x,\cE(\psi^x))_x)\right\},
 \label{eq:flagged-capacity}
\end{align}
where the supremum in the last line is over pure input ensembles $(p_x,\psi_A^x)_x$, and their generally mixed average is defined as $\bar\rho_A\coloneqq\sum_xp_x\psi_A^x$.
\end{theorem}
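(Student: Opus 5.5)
The plan is a sandwich argument: an achievability chain from below and the converse of Theorem~\ref{thm:converse} from above, meeting at the explicit formula. Write $K$ for the supremum in the last line of~\eqref{eq:flagged-capacity}. The general inequalities
\[
\chi(\cF_{\lambda,\cE})\leq C(\cF_{\lambda,\cE})\leq\Cfb(\cF_{\lambda,\cE})\leq\UF(\cF_{\lambda,\cE})
\]
already hold. The first is Holevo achievability, the second holds because feedback can be ignored, and the third is Theorem~\ref{thm:converse}. It therefore suffices to show $K\leq\chi(\cF_{\lambda,\cE})$ and $\UF(\cF_{\lambda,\cE})\leq K$.

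\textbf{Lower bound.} Take a pure ensemble $(p_x,\psi^x)_x$ and evaluate its output Holevo information. Each output is a direct sum of orthogonal sectors with input-independent weights $1-\lambda$ and $\lambda$, so the flag contributes $h_2(\lambda)$ both to the average-output entropy and to every conditional output entropy, and these contributions cancel. In the noiseless sector, $S(\psi^x)=0$ leaves $(1-\lambda)S(\bar\rho)$. The entanglement-breaking sector contributes $\lambda\chi((p_x,\cE(\psi^x))_x)$. Taking the supremum over pure ensembles gives $K\leq\chi(\cF_{\lambda,\cE})$.

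\textbf{Upper bound.} Fix a possibly mixed ensemble $(p_x,\rho^x)_x$ in~\eqref{eq:UF}. The same sector decomposition gives
\[
\chi\big((p_x,\cF(\rho^x))_x\big)=(1-\lambda)\Big[S(\bar\rho)-\sum_xp_xS(\rho^x)\Big]+\lambda\,\chi\big((p_x,\cE(\rho^x))_x\big).
\]
For the penalty term, the flag identity~\eqref{eq:EF-flags}, with Bob holding the readable sector label, gives
\[
f_\cF(\rho)=(1-\lambda)f_{\id}(\rho)+\lambda f_\cE(\rho)=(1-\lambda)S(\rho),
\]
since $\cE$ is entanglement breaking and so $f_\cE=0$. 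Adding these, the $-(1-\lambda)\sum_xp_xS(\rho^x)$ term cancels exactly, and the objective becomes
\[
(1-\lambda)S(\bar\rho)+\lambda\,\chi\big((p_x,\cE(\rho^x))_x\big).
\]

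It remains to replace the mixed members by pure ones. Decompose each $\rho^x=\sum_z q(z|x)\psi^{xz}$ into pure states. The average $\bar\rho$ is unchanged. The Holevo quantity of $\cE$ can only increase under this refinement, by the chain rule and nonnegativity of the conditional mutual information $I(Z;B|X)$, exactly as in~\eqref{eq:refinement-out}. Hence every mixed-ensemble objective is at most $K$, so $\UF(\cF_{\lambda,\cE})\leq K$, and all quantities in the chain coincide.

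I expect the main obstacle to be bookkeeping rather than substance. One point is justifying~\eqref{eq:EF-flags} when the two sectors have different dimensions; the fix is to embed both sectors into one space with an orthogonal flag. The other is confirming that the reference purification does not interact with the flag.
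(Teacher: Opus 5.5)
Your proof is correct and follows essentially the same route as the paper. The flag identity gives $f_{\cF_{\lambda,\cE}}(\rho)=(1-\lambda)S(\rho)$, which cancels the noiseless-sector conditional entropies; refining mixed members into pure states cannot decrease the $\cE$-Holevo term; and the sandwich $\chi\leq C\leq\Cfb\leq\UF$ closes the chain. The only cosmetic difference is that you justify the refinement step by $I(Z;B|X)\geq0$, whereas the paper invokes concavity of output entropy, which is the same inequality.
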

\begin{proof}
For a purification of $\rho$, the noiseless component retains entanglement $S(\rho)$ and the entanglement-breaking component has none. Equation~\eqref{eq:EF-flags} gives
\begin{equation}
 f_{\cF_{\lambda,\cE}}(\rho)=(1-\lambda)S(\rho).
 \label{eq:flagged-EF}
\end{equation}
The flag entropy $h_2(\lambda)$ cancels from the output Holevo information. Adding~\eqref{eq:flagged-EF} to that information yields
\begin{equation}
 \UF(\cF_{\lambda,\cE})
 =\sup_{(p_x,\rho^x)_x}\left\{(1-\lambda)S(\bar\rho)
             +\lambda\chi((p_x,\cE(\rho^x))_x)\right\}.
 \label{eq:flagged-objective}
\end{equation}
Here the average state for the mixed ensemble is defined as $\bar\rho\coloneqq\sum_xp_x\rho^x$. Refining each member into a pure-state ensemble preserves this average. By concavity of output entropy, the refined average of the conditional output entropies cannot increase, so its Holevo information cannot decrease. For a pure input ensemble, the expression is exactly the Holevo information of $\cF_{\lambda,\cE}$. Thus $\UF(\cF_{\lambda,\cE})=\chi(\cF_{\lambda,\cE})$. The inequalities
\begin{equation}
 \chi(\cF_{\lambda,\cE})\leq C(\cF_{\lambda,\cE})
 \leq\Cfb(\cF_{\lambda,\cE})\leq\UF(\cF_{\lambda,\cE})
 \label{eq:flagged-capacity-sandwich}
\end{equation}
prove all the asserted equalities.
\end{proof}

The theorem does not in general identify the result with a weighted sum of the separate component capacities. It gives that sum as an upper bound, but one ensemble must optimize both terms to attain it. The depolarizing application in Section~\ref{sec:depolarizing} has precisely this property. The proof also explains why merely averaging component capacities would be insufficient: the noiseless uses can distribute entanglement that remains available during later uses.

The general theorem now leads to a concrete simulation bound for depolarizing noise. We develop that application next, before evaluating $\UF$ directly. The later state examples will reuse the same convex-roof structure with a fixed average input.

\begin{figure}[t]\centering\begin{tikzpicture}[font=\small,x=1cm,y=1cm]
\node (input) at (0,0) {$\rho_A$};
\node[op,minimum width=1.7cm,minimum height=1.1cm] (choice) at (2,0) {random\\choice};
\node[op,minimum width=2.8cm,minimum height=1.1cm] (id) at (6,1.65) {noiseless channel\\$\id_A$};
\node[op,minimum width=2.8cm,minimum height=1.3cm] (eb) at (6,-1.65) {entanglement-breaking\\channel $\cE$};
\node[op,minimum width=2.25cm,minimum height=1.1cm] (out) at (10.3,0) {Bob's quantum\\output and flag};
\draw[qwire] (input)--(choice);
\draw[qwire] (choice.north east)--node[above,sloped]{$1-\lambda$}(id.west);
\draw[qwire] (choice.south east)--node[below,sloped]{$\lambda$}(eb.west);
\draw[qwire] (id.east)--node[above,sloped]{flag $0$}(out.north west);
\draw[qwire] (eb.east)--node[below,sloped]{flag $1$}(out.south west);
\node[align=center] at (6,-3.0) {$\cF_{\lambda,\cE}(\rho)=(1-\lambda)\rho\oplus\lambda\cE(\rho)$\\[5pt]
Input-independent flag; it may be returned to Alice by classical feedback.};
\end{tikzpicture}
\caption{An input-independent choice between a noiseless channel and an entanglement-breaking channel. Bob retains a readable flag. When the component output spaces agree, discarding it gives their unflagged mixture, so every feedback code for that mixture can be implemented using the flagged channel. The converse for the flagged channel includes entanglement generated by its noiseless uses; it is not obtained by assuming that operational feedback capacities are convex.}\label{fig:flagged}\end{figure}

\FloatBarrier

\section{The qubit depolarizing channel}\label{sec:depolarizing}
We now specialize the general channel results to depolarizing noise and collect two complementary converse constructions. First, the general flagged-extension theorem of Section~\ref{sec:flagged} gives an analytic simulation bound. Second, the ensemble formulation of $\UF$ gives a stronger qubit evaluation because covariance removes the optimization over average states and two-qubit concurrence evaluates the entanglement term. We then compare both bounds with prior converses and examine the small residual gap above the unassisted capacity.

\subsection{Flagged and erasure simulation bounds}
We now choose the entanglement-breaking component at the depolarizing threshold. Full unitary covariance makes the same input ensemble optimal for both terms of the flagged-channel formula, so its value can be evaluated without a numerical optimization. We use the convention
\begin{equation}
 \cD_{p,d}(\rho)\coloneqq(1-p)\rho+pI_d/d,\qquad 0\leq p\leq1.
 \label{eq:dep-def}
\end{equation}
Define the threshold $p_*\coloneqq d/(d+1)$. The channel is entanglement breaking for $p\geq p_*$. For example, its normalized Choi state is isotropic with maximally entangled overlap $1-p+p/d^2$; separability holds exactly when this overlap is at most $1/d$~\cite{Shor02,DKQSW23}.

For $0\leq p\leq p_*$, set $\lambda\coloneqq p/p_*$. Then
\begin{equation}
 \cD_{p,d}=(1-\lambda)\id_d+\lambda\cD_{p_*,d}.
 \label{eq:dep-mixture}
\end{equation}
Discarding the flag of $\cF_{\lambda,\cD_{p_*,d}}$ implements this channel exactly inside an arbitrary adaptive protocol. The uniform computational-basis ensemble simultaneously maximizes the input entropy and the output Holevo information of $\cD_{p_*,d}$. Every pure input to that channel has spectrum $2/\left(d+1\right),1/\left(d+1\right),\ldots,1/\left(d+1\right)$. Theorem~\ref{thm:flagged} therefore gives
\begin{equation}
 \Cfb(\cD_{p,d})\leq U_{\rm flag}(p,d)
 \coloneqq\log_2d-p\left[\frac{d+1}{d}\log_2(d+1)-\frac2d\right].
 \label{eq:flagged-dep-bound}
\end{equation}
In addition, output-processing monotonicity gives $\UF(\cD_{p,d})\leq U_{\rm flag}(p,d)$.

For comparison, a $d$-dimensional erasure channel of erasure probability $p$ simulates~\eqref{eq:dep-def} when Bob replaces the erasure flag by $I_d/d$. The resulting bound is $(1-p)\log_2d$. It is valid for adaptive protocols, but for qubits it is weaker than~\eqref{eq:flagged-dep-bound}. Replacing the erasure probability by a larger value cannot improve this one-use simulation: the trace distance of two orthogonal inputs becomes $2(1-p)$ under $\cD_{p,d}$, whereas an erasure probability $q$ followed by output processing gives at most $2(1-q)$. Thus $q\leq p$ is necessary.

\subsection{Reduction to one real variable}
The goal of this subsection is to derive an analytical formula for $\UF(\cD_p)$, the entanglement-of-formation bound of the qubit depolarizing channel. The flagged extension gives an upper bound by revealing additional noise information; we now evaluate the potential of the original channel without revealing that information. Two symmetries make the ensemble optimization manageable: unitary covariance makes each conditional penalty depend only on the input eigenvalues, and Pauli averaging makes the average output maximally mixed without changing those penalties. The remaining entanglement term is a two-qubit quantity evaluated by Wootters' formula. These observations reduce the full optimization to one input Bloch-vector length.

Define $\cD_p\coloneqq\cD_{p,2}$ and let $r\in[0,1]$ be the input Bloch-vector length. Its output entropy is
\begin{equation}
 S(\cD_p(\rho))=h_2\!\left(\frac{1+(1-p)r}{2}\right).
 \label{eq:dep-output-entropy}
\end{equation}
The two-qubit formation function is defined as
\begin{equation}
 \mathcal E(c)\coloneqq h_2\!\left(\frac{1+\sqrt{1-c^2}}2\right),
 \qquad 0\leq c\leq1.
 \label{eq:concurrence-entropy}
\end{equation}
Wootters' formula gives $\EF=\mathcal E(c)$ for a two-qubit state of concurrence $c$~\cite{Wootters98}. For the depolarizing channel, define $\kappa\coloneqq(1-3p/2)_+$, its normalized Choi-state concurrence. This parameter enters both the scalar optimization and the small-noise and near-pure-input analyses below. Sending one subsystem of a purification of $\rho$ through $\cD_p$ produces concurrence $\kappa\sqrt{1-r^2}$; Appendix~\ref{app:concurrence} proves this using the local-filter factorization in~\cite{Konrad08}.

\begin{proposition}[Scalar depolarizing formula]\label{prop:dep-UF}
For $0\leq p\leq1$,
\begin{equation}
 \UF(\cD_p)=\max_{0\leq r\leq1}
 \left\{1-h_2\!\left(\frac{1+(1-p)r}{2}\right)
       +\mathcal E\!\left(\kappa\sqrt{1-r^2}\right)\right\}.
 \label{eq:dep-UF}
\end{equation}
\end{proposition}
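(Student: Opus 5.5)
We prove two inequalities. For the lower bound we exhibit a covariant ensemble. For the upper bound we show that, for every ensemble, the objective in~\eqref{eq:UF} is at most the displayed maximum.

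First record the single-input ingredients. By unitary covariance, $\cD_p(U\rho U^\dagger)=U\cD_p(\rho)U^\dagger$. A purification of $U\rho U^\dagger$ is obtained from one of $\rho$ by a local unitary on $A$ that commutes through the channel. Local-isometry invariance of $\EF$ therefore makes $f_{\cD_p}(\rho)$ depend only on the Bloch length $r$ of $\rho$. Appendix~\ref{app:concurrence} shows that the output concurrence is $\kappa\sqrt{1-r^2}$, so Wootters' formula gives
\[
f_{\cD_p}(\rho)=\mathcal E\bigl(\kappa\sqrt{1-r^2}\bigr).
\]
Together with~\eqref{eq:dep-output-entropy}, the per-member quantity $g(r)\coloneqq S(\cD_p(\rho))-f_{\cD_p}(\rho)$ is a function of $r$ alone.

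For the lower bound, fix $r$ and take $\rho$ with Bloch vector $r\hat z$. Form the four-member ensemble $\{\tfrac14,\sigma_j\rho\sigma_j\}_{j=0}^3$. Its average is $I/2$, so the average output is $I/2$ with entropy $1$. Every member has Bloch length $r$, so the objective equals $1-g(r)$, which is the bracket in~\eqref{eq:dep-UF}.

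For the upper bound, take any ensemble $(p_x,\rho^x)_x$. By~\eqref{eq:UF} its objective is $S(\cD_p(\bar\rho))-\sum_xp_xg(r_x)$. Bound $S(\cD_p(\bar\rho))\le1$. It remains to show $\sum_xp_xg(r_x)\ge\min_r g(r)$, and this holds trivially because an average is at least the minimum. So the objective is at most $1-\min_r g(r)$, which is exactly~\eqref{eq:dep-UF}.

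No convexity of $g$ is needed: Pauli twirling handles the average, and the minimum over $r$ is attained by compactness and continuity. The max over $[0,1]$ exists since $g$ is continuous, as $\mathcal E$ and $h_2$ are continuous.

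The only substantive step is therefore the concurrence claim $\kappa\sqrt{1-r^2}$, which we take from Appendix~\ref{app:concurrence}. If one wanted to verify it directly, the approach would go as follows. Write the purification as $(I\otimes\sqrt{2\rho}^{T})\Phi$ up to a local filter on the reference. Concurrence factorizes under local filters (Konrad et al.) as $c(\text{Choi})\cdot|\det\sqrt{2\rho}|$. The first factor is $c(\text{Choi})=\kappa$, and the second is
\[
|\det\sqrt{2\rho}|=2\sqrt{\det\rho}=\sqrt{1-r^2}.
\]
Everything else is covariance bookkeeping, so this factorization is the step I expect to require the most care.
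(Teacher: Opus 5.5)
Your proof is correct and follows the paper's own argument essentially step for step. Covariance makes the penalty a function of the Bloch length alone; the bound $S(\cD_p(\bar\rho))\leq1$ combined with ``an average is at least the minimum'' gives the upper bound; the uniform Pauli orbit attains it; and the concurrence factorization of Appendix~\ref{app:concurrence} supplies $\mathcal E(\kappa\sqrt{1-r^2})$. One wording point: state plainly that the filter $\sqrt{2\rho^{\mathsf T}}$ acts on the reference system, as in~\eqref{eq:filter-purification}, because that is what lets it commute with $\cD_p$; the phrase ``up to a local filter on the reference'' currently obscures this.
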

\begin{proof}
Define the single-state penalty $g_p(\rho)\coloneqq S(\cD_p(\rho))-f_{\cD_p}(\rho)$. For every qubit unitary $U$, the depolarizing channel satisfies
$\cD_p(U\rho U^\dagger)=U\cD_p(\rho)U^\dagger$. A purification of $U\rho U^\dagger$ is obtained by applying $U$ to the input subsystem of a purification of $\rho$. Covariance then relates the two joint reference--output states by a local output unitary, up to a reference isometry. Hence both their output entropy and entanglement of formation are unchanged. Therefore $g_p(\rho)$ depends only on the eigenvalues of $\rho$, or equivalently on its Bloch-vector length $r$.

For an arbitrary ensemble, its objective satisfies
\begin{equation}
 S(\cD_p(\bar\rho))-\sum_xp_xg_p(\rho^x)
 \leq1-\sum_xp_xg_p(\rho^x)
 \leq1-\min_{\rho}g_p(\rho).
\end{equation}
The first inequality uses the qubit output dimension. To attain this upper bound for a chosen input $\rho$, let $(P_j)_{j=0}^3$ be the identity and the three Pauli matrices. Use the ensemble $(1/4,P_j\rho P_j^\dagger)_{j=0}^3$. Its average is $I_2/2$, because Pauli averaging cancels every component of the Bloch vector, and its average output is also $I_2/2$. All four penalties equal $g_p(\rho)$ by the preceding covariance argument. Thus its objective is exactly $1-g_p(\rho)$. Choosing a minimizing input establishes
\begin{equation}
 \UF(\cD_p)=1-\min_{\rho}g_p(\rho).
\end{equation}
Finally, the eigenvalues of $\rho$ are $(1\pm r)/2$, so $\det\rho=(1-r^2)/4$. The concurrence identity in Appendix~\ref{app:concurrence} gives $f_{\cD_p}(\rho)=\mathcal E(\kappa\sqrt{1-r^2})$, while~\eqref{eq:dep-output-entropy} gives the output entropy. Substitution proves~\eqref{eq:dep-UF}, with a maximum attained by continuity on $r\in[0,1]$.
\end{proof}

For $p\geq2/3$, $\kappa=0$, and~\eqref{eq:dep-UF} reduces to $1-h_2(p/2)$. This is the exact classical feedback-assisted capacity in the entanglement-breaking regime. Without feedback, King's theorem gives the same formula $C(\cD_p)=1-h_2(p/2)$ for the entire interval $0\leq p\leq1$~\cite{King03}. For smaller noise, the bound~\eqref{eq:dep-UF} permits mixed ensemble elements and can exceed this unassisted rate.

\subsection{Comparison with the published converse}
To compare with the depolarizing curve in~\cite[Figure~6]{DKQSW23}, it is useful to evaluate the symmetry-reduced $\Upsilon$ expression explicitly. Appendix~\ref{app:upsilon} gives the underlying semidefinite calculation. The result is
\begin{equation}
 \Upsilon(\cD_p)=
 \begin{cases}
 1-\dfrac{3p}{4}\log_2 3,&0\leq p\leq2/3,\\
 2-H(1-3p/4,p/4,p/4,p/4),&2/3\leq p\leq1,
 \end{cases}
 \label{eq:dep-upsilon}
\end{equation}
where $H$ is the Shannon entropy of the displayed probability vector. The geometric-divergence formulation used for the published comparison has the same commuting Choi limit here~\cite{WFT19,FF21,DKQSW23}.

In the non-entanglement-breaking regime, the flagged expression becomes
\begin{equation}
 U_{\rm flag}(p,2)=1-\left(\frac32\log_2 3-1\right)p,
 \label{eq:qubit-flag}
\end{equation}
and its improvement is analytic:
\begin{equation}
 \Upsilon(\cD_p)-U_{\rm flag}(p,2)
 =p\left(\frac34\log_2 3-1\right)>0\qquad(0<p\leq2/3).
 \label{eq:flag-improvement}
\end{equation}
Combining the results gives
\begin{equation}
 C(\cD_p)\leq\Cfb(\cD_p)\leq\UF(\cD_p)
 \leq U_{\rm flag}(p,2)<\Upsilon(\cD_p)
 \quad(0<p\leq2/3).
 \label{eq:dep-chain}
\end{equation}
At $p=2/3$, $C$, $\Cfb$, $\UF$, and $U_{\rm flag}$ coincide, while $\Upsilon$ remains larger. Beyond this point the flagged construction~\eqref{eq:dep-mixture} is not used; the curve is continued by the known exact entanglement-breaking capacity.

Figure~\ref{fig:dep-bounds} plots these expressions. The entries in Table~\ref{tab:dep} are numerical evaluations of~\eqref{eq:dep-UF}, not estimates of the unknown feedback capacity. Section~\ref{sec:numerics} describes the global numerical checks. The analytical inequality~\eqref{eq:dep-chain} establishes improvement over the specified published bound independently of the numerical precision.

\begin{figure}[t]\centering\begin{tikzpicture}
\begin{axis}[width=.96\linewidth,height=7.6cm,xmin=0,xmax=1,ymin=0,ymax=1.02,
 xlabel={Depolarizing probability $p$},ylabel={Bits per channel use},grid=major,
 legend style={font=\small,at={(.98,.98)},anchor=north east},legend cell align=left]
\addplot[thick] table[x=p,y=C,col sep=comma]{data/depolarizing.csv};\addlegendentry{Unassisted capacity}
\addplot[thick,dashed] table[x=p,y=Upsilon,col sep=comma]{data/depolarizing.csv};\addlegendentry{Published $\Upsilon$}
\addplot[thick,dashdotted] table[x=p,y=Flagged,col sep=comma]{data/depolarizing.csv};\addlegendentry{Flagged / exact continuation}
\addplot[very thick,dotted] table[x=p,y=UF,col sep=comma]{data/depolarizing.csv};\addlegendentry{$\UF$ (numerical)}
\draw[densely dashed] (axis cs:.6666667,0)--(axis cs:.6666667,1.02);
\node[anchor=west,font=\footnotesize,align=left] at (axis cs:.685,.32) {Entanglement breaking\\$p\geq2/3$};
\end{axis}\end{tikzpicture}
\caption{Qubit depolarizing comparison in bits per channel use. The unassisted capacity is exact. The $\Upsilon$ curve is~\eqref{eq:dep-upsilon}; the flagged expression~\eqref{eq:qubit-flag} is continued beyond $p=2/3$ by the exact entanglement-breaking capacity, not by linear extrapolation. The $\UF$ curve is the numerically maximized expression~\eqref{eq:dep-UF}. The improvements concern ordinary capacity; an improved strong-converse threshold is not proved.}\label{fig:dep-bounds}\end{figure}

\begin{table}[t]\centering
\caption{Depolarizing bounds, in bits per channel use. The last column evaluates the channel bound, not an operational capacity.}\label{tab:dep}
\begin{tabular}{@{}rrrrr@{}}\toprule
$p$&$C(\cD_p)$&$\Upsilon(\cD_p)$&$U_{\rm flag}(p,2)$&$\UF(\cD_p)$\\\midrule
0.10&0.7136030429&0.8811278124&0.8622556249&0.7897376813\\
0.25&0.4564355568&0.7028195311&0.6556390622&0.5064360850\\
0.40&0.2780719051&0.5245112498&0.4490224996&0.2868844512\\
0.50&0.1887218755&0.4056390622&0.3112781245&0.1889430634\\\bottomrule
\end{tabular}\end{table}

\subsection{Why a small gap is not equality}\label{subsec:strict-UF}
Near the entanglement-breaking threshold, a maximizing input can be extremely close to pure. The graph on a linear scale then conceals the distinction between $\UF$ and the unassisted capacity. We prove that the gap is nevertheless positive throughout the open non-entanglement-breaking interval.

\begin{proposition}[Strictness of the entanglement-of-formation bound]\label{prop:strict-UF}
For every $0<p<2/3$,
\begin{equation}
 \UF(\cD_p)>1-h_2(p/2).
 \label{eq:strict-UF}
\end{equation}
\end{proposition}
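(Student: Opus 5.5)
By Proposition~\ref{prop:dep-UF}, it suffices to find a single $r\in[0,1]$ at which the scalar objective
\begin{equation*}
 \Phi_p(r)\coloneqq1-h_2\!\left(\frac{1+(1-p)r}{2}\right)+\mathcal E\!\left(\kappa\sqrt{1-r^2}\right)
\end{equation*}
exceeds $\Phi_p(1)=1-h_2(p/2)$. The endpoint value is correct because $(1+(1-p))/2=1-p/2$ and $h_2$ is symmetric. For $0<p<2/3$ we have $\kappa=1-3p/2\in(0,1)$. I will therefore perturb away from the pure endpoint: set $r=1-\delta$ with small $\delta>0$ and compare the first-order loss in the entropy term with the gain in the formation term.

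For the entropy term, $h_2$ is smooth at $(1+(1-p))/2=1-p/2\in(1/2,1)$ with finite derivative. Hence
\begin{equation*}
 1-h_2\!\left(\frac{1+(1-p)(1-\delta)}{2}\right)=1-h_2(p/2)-C_p\delta+O(\delta^2),
\end{equation*}
where $C_p=\tfrac{1-p}{2}\log_2\!\big((2-p)/p\big)$ is a finite positive constant. The loss is therefore only linear in $\delta$.

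For the formation term, $c\coloneqq\kappa\sqrt{1-r^2}=\kappa\sqrt{2\delta-\delta^2}\sim\kappa\sqrt{2\delta}$. Near $c=0$ we have $\sqrt{1-c^2}=1-c^2/2+O(c^4)$, so
\begin{equation*}
 \mathcal E(c)=h_2\!\left(1-\tfrac{c^2}{4}+O(c^4)\right).
\end{equation*}
Using $h_2(1-u)=u\log_2(1/u)+O(u)$ as $u\to0$, this gives $\mathcal E(c)=\tfrac{c^2}{4}\log_2(4/c^2)+O(c^2)$. Substituting $c^2\approx2\kappa^2\delta$ yields
\begin{equation*}
 \mathcal E=\tfrac{\kappa^2\delta}{2}\log_2\frac{1}{\delta}+O(\delta).
\end{equation*}
Combining the two expansions,
\begin{equation*}
 \Phi_p(1-\delta)-\Phi_p(1)=\tfrac{\kappa^2}{2}\,\delta\log_2(1/\delta)-O(\delta),
\end{equation*}
which is positive for all sufficiently small $\delta>0$ since $\kappa>0$. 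The strict inequality \eqref{eq:strict-UF} follows.

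The main obstacle is making the $\delta\log(1/\delta)$ versus $O(\delta)$ comparison rigorous with explicit inequalities rather than asymptotic notation. To do so, I will use the elementary bound $h_2(u)\geq u\log_2(1/u)$ for $u\in[0,1]$. I will also use $1-\sqrt{1-c^2}\geq c^2/2$, so that the argument of $h_2$ in $\mathcal E$ lies within $c^2/4$ of $1$ from above the midpoint. Monotonicity of $h_2$ on $[1/2,1]$ then gives $\mathcal E(c)\geq\tfrac{c^2}{4}\log_2(4/c^2)$ for $c\leq1$. For the entropy term, concavity of $h_2$ gives the linear upper bound $h_2(x-\eta)\le h_2(x)+\eta\,|h_2'(x-\eta)|$, which controls the loss by $C'_p\delta$ with an explicit $C'_p$. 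Choosing $\delta$ with $\tfrac{\kappa^2}{2}\log_2(1/\delta)>C'_p+O(1)$ then completes the argument. The concurrence formula $\kappa\sqrt{1-r^2}$ from Appendix~\ref{app:concurrence} is taken as given.
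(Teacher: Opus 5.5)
Your proof is correct and follows the paper's argument essentially verbatim: you perturb away from the pure input, where the output-entropy loss is linear in the perturbation while $\mathcal E(\kappa\sqrt{1-r^2})$ grows like $\delta\log_2(1/\delta)$; the paper parametrizes by the smaller eigenvalue $(1-r)/2$, which only rescales your $\delta$. One slip in your optional explicit version: concavity gives the tangent-line bound $h_2(x-\eta)\le h_2(x)+\eta\,|h_2'(x)|$, with the derivative taken at $x=1-p/2$ (as you wrote it, with $|h_2'(x-\eta)|$, the inequality points the wrong way), and this corrected bound caps the loss at exactly $C_p\delta$.
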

\begin{proof}
Set $b\coloneqq p/2$, $t\coloneqq1-p$, and $\delta\coloneqq(1-r)/2$. For fixed $p$ in the stated interval, $b>0$ and $\kappa>0$. The objective in~\eqref{eq:dep-UF} is
\begin{equation}
 1-h_2(b+t\delta)+\mathcal E(2\kappa\sqrt{\delta(1-\delta)}).
 \label{eq:dep-delta-objective}
\end{equation}
The output-entropy increase from its value at $\delta=0$ is
\begin{equation}
 h_2(b+t\delta)-h_2(b)
 =t\log_2\!\left(\frac{1-b}{b}\right)\delta+O(\delta^2).
 \label{eq:dep-entropy-small}
\end{equation}
By symmetry of $h_2$, write the entanglement term as $h_2(z_\delta)$, where
\begin{equation}
 z_\delta\coloneqq\frac{1-\sqrt{1-4\kappa^2\delta(1-\delta)}}2
 =\kappa^2\delta+O(\delta^2).
 \label{eq:dep-small-eigen}
\end{equation}
Since $h_2(z)=z\log_2(1/z)+O(z)$ near zero, the entanglement term is
$\kappa^2\delta\log_2(1/\delta)+O(\delta)$. Subtracting $1-h_2(b)$ from~\eqref{eq:dep-delta-objective} therefore gives a strictly positive quantity for sufficiently small $\delta>0$: after division by $\delta$, the positive logarithmic term diverges and the other terms remain bounded.
\end{proof}

Figure~\ref{fig:dep-gaps} displays the gaps logarithmically. At $p=1/2$, $\UF-C\approx2.2118783\times10^{-4}$ bits per use. At $p=0.6$, high-precision evaluation gives approximately $2.7549\times10^{-15}$. The proof of positivity is Proposition~\ref{prop:strict-UF}, not subtraction of rounded plotted values.

The same argument identifies a limitation of modifying the potential by a constant weight. For every fixed $\eta>0$, the feedback-monotone quantity $I(\Msg;B' T)+\eta\EF(A';B'|\Msg T)$ has a one-use increase greater than $C(\cD_p)$ for a sufficiently weakly mixed Pauli-orbit ensemble. Its entanglement contribution is of order $\eta\kappa^2\delta\log_2(1/\delta)$, while the Holevo loss is only of order $\delta$. Thus such reweighting cannot establish a universal sharp per-use increment. This is an obstruction to that potential method, not a protocol demonstrating a feedback advantage.

\begin{figure}[t]\centering\begin{tikzpicture}
\begin{semilogyaxis}[width=.96\linewidth,height=7.5cm,xmin=.02,xmax=.6,ymin=1e-16,ymax=1,
 xlabel={Depolarizing probability $p$},ylabel={Upper bound minus unassisted capacity},grid=major,
 legend style={font=\small,at={(.03,.03)},anchor=south west},legend cell align=left,
 ytick={1,1e-4,1e-8,1e-12,1e-16}]
\addplot[thick,dashed] table[x=p,y=Upsilon_gap,col sep=comma]{data/depolarizing_gaps.csv};\addlegendentry{$\Upsilon-C$}
\addplot[thick,dashdotted] table[x=p,y=Flagged_gap,col sep=comma]{data/depolarizing_gaps.csv};\addlegendentry{$U_{\rm flag}-C$}
\addplot[very thick,dotted] table[x=p,y=UF_gap,col sep=comma]{data/depolarizing_gaps.csv};\addlegendentry{$\UF-C$ (numerical)}
\node[font=\footnotesize,align=left,anchor=north west] at (axis cs:.265,2e-5) {$p=1/2$: gap $\approx2.2119\times10^{-4}$};
\draw[->] (axis cs:.44,1e-5)--(axis cs:.5,2.2118783e-4);
\end{semilogyaxis}\end{tikzpicture}
\caption{Gaps above the unassisted capacity, on a logarithmic vertical scale. The $\UF-C$ gap is evaluated directly in a cancellation-resistant form using the smaller input eigenvalue $\delta$, rather than subtracting rounded capacities. The plotted interval stops at $p=0.6$ to avoid an unresolved near-threshold scale. The positive gap is proved analytically for every $0<p<2/3$.}\label{fig:dep-gaps}\end{figure}

\section{An exact evaluation for amplitude damping}\label{sec:AD}
The depolarizing simplification relies on full unitary covariance. Amplitude damping does not have that symmetry, so an arbitrary ensemble could in principle contain different populations and different coherences. We now show that its $\UF$ optimization still has a closed form. The key is not to assume a diagonal ensemble, but to prove that discarding input coherences can only lower the relevant penalty and that the resulting scalar penalty is convex.

\subsection{A determinant representation}
For $0\leq\gamma\leq1$, the amplitude damping channel is defined as $\cA_\gamma(\rho)\coloneqq\sum_{j=0}^1K_j\rho K_j^\dagger$, with Kraus operators
\begin{equation}
 K_0\coloneqq\begin{pmatrix}1&0\\0&\sqrt{1-\gamma}\end{pmatrix},\qquad
 K_1\coloneqq\sqrt\gamma\ket0\bra1.
 \label{eq:AD-kraus}
\end{equation}
Set $a\coloneqq1-\gamma$. Parameterize an input by its excited-state population $t$ and coherence $z$. The corresponding state is defined as
\begin{equation}
 \rho(t,z)\coloneqq\begin{pmatrix}1-t&z\\z^*&t\end{pmatrix},\quad
 0\leq t\leq1,\quad |z|^2\leq t(1-t),\quad u\coloneqq\det\rho=t(1-t)-|z|^2.
 \label{eq:AD-input}
\end{equation}
The entropy function parametrized by the qubit determinant is defined as
\begin{equation}
 \ent(v)\coloneqq h_2\!\left(\frac{1-\sqrt{1-4v}}2\right),\qquad 0\leq v\leq\frac14.
 \label{eq:det-entropy}
\end{equation}
Its continuous endpoint values are $\ent(0)=0$ and $\ent(1/4)=1$.

The output and its determinant are
\begin{equation}
 \cA_\gamma(\rho(t,z))=
 \begin{pmatrix}1-at&\sqrt a\,z\\\sqrt a\,z^*&at\end{pmatrix},
 \qquad\det\cA_\gamma(\rho)=a(u+\gamma t^2).
 \label{eq:AD-output}
\end{equation}
The Choi-state concurrence of $\cA_\gamma$ is $\sqrt a$, so the concurrence factorization in Appendix~\ref{app:concurrence} gives
\begin{equation}
 f_{\cA_\gamma}(\rho)=\ent(a u).
 \label{eq:AD-formation}
\end{equation}
Consequently the penalty in~\eqref{eq:UF} is
\begin{equation}
 S(\cA_\gamma(\rho))-f_{\cA_\gamma}(\rho)
 =\ent\big(a(u+\gamma t^2)\big)-\ent(a u).
 \label{eq:AD-penalty-general}
\end{equation}
This scalar difference is closely related to the environment-correlation penalties used for generalized amplitude damping in~\cite{Fang26}. Here we evaluate the full zero-temperature $\UF$ optimization analytically.

\subsection{Removing coherences and proving convexity}
For fixed population $t$, coherence reduces $u$. We need the effect of this reduction on the difference in~\eqref{eq:AD-penalty-general}. The following lemma also supplies the convexity needed to average over different populations.

\begin{lemma}[Amplitude-damping penalty]\label{lem:AD-penalty}
Define
\begin{equation}
 g_\gamma(t)\coloneqq h_2(a t)-\ent\big(a t(1-t)\big),\qquad 0\leq t\leq1.
 \label{eq:AD-g}
\end{equation}
For every input in~\eqref{eq:AD-input},
\begin{equation}
 S(\cA_\gamma(\rho(t,z)))-f_{\cA_\gamma}(\rho(t,z))\geq g_\gamma(t).
 \label{eq:AD-diagonal-reduction}
\end{equation}
Moreover, $g_\gamma$ is convex on $[0,1]$.
\end{lemma}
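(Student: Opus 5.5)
The plan is to prove the two claims separately. The coherence reduction \eqref{eq:AD-diagonal-reduction} follows from concavity of the determinant-entropy function $\ent$. Convexity of $g_\gamma$ is a scalar calculus problem, which I expect to be the main obstacle.

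\textbf{Coherence reduction.} Fix $t$ and write $c\coloneqq a\gamma t^2\geq0$. By \eqref{eq:AD-penalty-general}, the penalty is $\Delta(u)\coloneqq\ent(au+c)-\ent(au)$ for $0\leq u\leq t(1-t)$. The largest value of $u$ occurs at $z=0$, where $u=t(1-t)$. There $a(u+\gamma t^2)=at(1-at)$, so $\ent$ of it equals $h_2(at)$ and $\Delta(t(1-t))=g_\gamma(t)$. It therefore suffices to show that $\Delta$ is nonincreasing in $u$, and this follows once $\ent$ is concave on $[0,1/4]$.

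To prove concavity, set $s=\sqrt{1-4v}$ and differentiate:
\begin{equation*}
 \ent'(v)=\frac{1}{s}\log_2\frac{1+s}{1-s}.
\end{equation*}
The function $s\mapsto s^{-1}\log_2[(1+s)/(1-s)]=\frac{2}{\ln 2}\sum_{k\geq0}s^{2k}/(2k+1)$ is increasing in $s$. Since $s$ decreases as $v$ increases, $\ent'$ is nonincreasing, so $\ent$ is concave. For a concave function the increment $x\mapsto\ent(x+c)-\ent(x)$ is nonincreasing in $x$. Taking $x=au$ gives $\Delta(u)\geq\Delta(t(1-t))=g_\gamma(t)$. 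The endpoints $v=0$ and $v=1/4$ are handled by continuity.

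\textbf{Convexity of $g_\gamma$.} I would first try the structural route suggested in the introduction. For a diagonal input, Koashi--Winter \eqref{eq:KW-identity} identifies $g_\gamma(t)$ with the classical correlation $J(E;B)$ of $V\rho(t,0)V^\dagger$. The aim is to show that the optimal environment measurement is one fixed measurement for all $t$, for example a fixed basis on the qubit environment of the Kraus representation \eqref{eq:AD-kraus}. Then $g_\gamma(t)$ would equal $S(B)-\sum_yq_y(t)S(B^y_t)$ for a $t$-independent POVM. In that expression the conditional output states are explicit, and each term could be checked to be convex. If this identification does not go through cleanly, I would fall back on direct calculus.

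For the calculus route, write $w=a t(1-t)$ and $s=\sqrt{1-4w}$. Then compute
\begin{equation*}
 g_\gamma''(t)=a^2h_2''(at)-\ent''(w)\,(a(1-2t))^2+2a\,\ent'(w).
\end{equation*}
Here $h_2''(x)=-1/[x(1-x)\ln2]$, and $\ent''$ is obtained from the series above. The goal is to show $g_\gamma''\geq0$ by combining the terms over a common denominator in $t$ and $s$. The first term is negative, and the third term involves $\ent'$, which is positive, so it helps. This sign verification is the expected obstacle, especially near $t\to1$ with $\gamma\to0$, where $at(1-at)$ and $w$ nearly coincide and the terms nearly cancel.

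Two limiting cases serve as sanity checks on the inequality. At $\gamma=0$ the function $g_0$ vanishes identically, which is convex. At $\gamma=1$ we have $a=0$ and $g_1=0$.
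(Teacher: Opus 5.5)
\textbf{Coherence reduction: correct.} Your argument is the paper's. Your $\ent'(v)=s^{-1}\log_2[(1+s)/(1-s)]$ is exactly $2\operatorname{atanh}s/(s\ln2)$, and decreasing increments of the concave function $\ent$ give \eqref{eq:AD-diagonal-reduction}.

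\textbf{Convexity of $g_\gamma$: not proved.} This is a genuine gap.
\begin{itemize}
\item Your calculus route stops at the sign check you yourself call the obstacle, and that check is delicate. Near $t=0$ the singular terms $a^2h_2''(at)\approx-a/(t\ln2)$ and $-a^2(1-2t)^2\ent''(w)\approx a/(t\ln2)$ cancel at leading order. Note also that this middle term is nonnegative by concavity of $\ent$, which you did not use.
\item Your structural route fails at its last step as written. In $S(B)-\sum_yq_y(t)S(B^y_t)$ the first term is $h_2(at)$, which is strictly concave in $t$, so no term-by-term convexity check can succeed.
\item If your fixed basis is the Kraus-index basis of~\eqref{eq:AD-kraus}, the outcome probabilities $1-\gamma t$ and $\gamma t$ depend on $t$. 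The resulting Holevo quantity $h_2(at)-(1-\gamma t)\,h_2\big(at/(1-\gamma t)\big)$ is strictly smaller than $g_\gamma(t)$ (about $0.123$ versus $0.210$ at $\gamma=t=1/2$). So it does not even represent $g_\gamma$.
\end{itemize}

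\textbf{The missing idea.} Choose an environment measurement with $t$-independent outcome probabilities and verify an identity directly; no optimality claim is needed.
\begin{itemize}
\item Take $V\ket0=\ket0_B\ket0_E$ and $V\ket1=\sqrt a\ket1_B\ket0_E+\sqrt\gamma\ket0_B\ket1_E$, with input $\diag(1-t,t)$, and measure $E$ in the basis $\ket\pm=(\ket0\pm\ket1)/\sqrt2$.
\item Each outcome has probability $1/2$ for every $t$.
\item The conditional Bob states are $\sigma_\pm(t)=\left(\begin{smallmatrix}1-at&\pm\sqrt{a\gamma}\,t\\ \pm\sqrt{a\gamma}\,t&at\end{smallmatrix}\right)$. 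They are affine in $t$, have determinant $at(1-t)$, and average to $\bar\sigma(t)=\diag(1-at,at)$.
\item Hence $g_\gamma(t)=h_2(at)-\ent(at(1-t))=\tfrac12\sum_{y=\pm}D(\sigma_y(t)\|\bar\sigma(t))$, and joint convexity of relative entropy gives convexity in $t$.
\end{itemize}
Both features, fixed weights and affine states, are needed. With $t$-dependent weights one has $\sum_yq_yD(\sigma_y\|\bar\sigma)=\sum_yD(q_y\sigma_y\|q_y\bar\sigma)$. Its second argument is then quadratic in $t$, and joint convexity no longer applies.
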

\begin{proof}
First, $\ent$ is increasing and concave. For $0<v<1/4$, set $s\coloneqq\sqrt{1-4v}$. Differentiation and the integral representation of $\operatorname{atanh}$ give
\begin{equation}
 \ent'(v)=\frac{2\operatorname{atanh}s}{s\ln2}
 =\frac2{\ln2}\int_0^1\frac{dy}{1-(1-4v)y^2}.
 \label{eq:det-entropy-derivative}
\end{equation}
The integrand is positive and decreases with $v$, proving both properties, with extension by continuity. For a fixed increment $c\geq0$, the difference $\ent(v+c)-\ent(v)$ is therefore nonincreasing in $v$. In~\eqref{eq:AD-penalty-general}, $c=a\gamma t^2$ is fixed when $t$ is fixed. The largest permitted $u$ is $t(1-t)$, attained at $z=0$. Evaluating at that point proves~\eqref{eq:AD-diagonal-reduction}, because the diagonal output has entropy $h_2(at)$.

To prove convexity of $g_\gamma$, consider a mathematical Stinespring environment with
\begin{equation}
 V\ket0\coloneqq\ket{0}_B\ket{0}_E,\qquad
 V\ket1\coloneqq\sqrt a\ket{1}_B\ket{0}_E+\sqrt\gamma\ket{0}_B\ket{1}_E.
 \label{eq:AD-isometry}
\end{equation}
For the diagonal input $\rho_t\coloneqq\diag(1-t,t)$, measure $E$ in the fixed basis $\ket\pm\coloneqq(\ket0\pm\ket1)/\sqrt2$. Both outcomes have probability $1/2$, independently of $t$. The normalized conditional Bob states and their average are defined as
\begin{equation}
 \sigma_\pm(t)\coloneqq
 \begin{pmatrix}1-at&\pm\sqrt{a\gamma}\,t\\
 \pm\sqrt{a\gamma}\,t&at\end{pmatrix},\qquad
 \bar\sigma(t)\coloneqq\frac{\sigma_+(t)+\sigma_-(t)}2=\diag(1-at,at).
 \label{eq:AD-environment-states}
\end{equation}
Each state is affine in $t$, and $\det\sigma_\pm(t)=a t(1-t)$. Hence the measured classical--quantum mutual information is exactly
\begin{equation}
 g_\gamma(t)=S(\bar\sigma(t))-\frac12\sum_{y=\pm}S(\sigma_y(t))
 =\frac12\sum_{y=\pm}D(\sigma_y(t)\|\bar\sigma(t)).
 \label{eq:AD-g-relative-entropy}
\end{equation}
Joint convexity of relative entropy, together with the affine dependence of both arguments on $t$, proves convexity. All support endpoint cases follow by continuity. Figure~\ref{fig:AD-measurement} illustrates this fixed measurement; Figure~\ref{fig:AD-penalty} plots the convex functions.
\end{proof}

The fixed probabilities in~\eqref{eq:AD-environment-states} are important. Mutual information is not being claimed convex under an arbitrary variation of its joint state. The displayed relative-entropy representation is jointly affine in its two arguments with the same fixed weights, and this is what proves convexity. The environment measurement is not performed by a receiver in the communication protocol.

\begin{figure}[t]\centering\begin{tikzpicture}[font=\small,x=1cm,y=1cm]
\node (rho) at (0,0) {$\rho_t$};
\node[op,minimum width=1cm] (v) at (1.65,0) {$V$};
\node[op,minimum width=4.5cm,minimum height=1.8cm] (states) at (9.3,.1) {quantum output $B$ and outcome $Y$\\[5pt]
 $\sigma_Y(t)$,\quad $Y\in\{+,-\}$\\[4pt]
 $\Pr(Y=+)=\Pr(Y=-)=1/2$};
\node[op,minimum width=2.45cm,minimum height=1.1cm] (meas) at (4.55,-1.5) {measure $E$\\in $(\ket+,\ket-)$};
\draw[qwire] (rho)--(v);
\draw[qwire] (v.north east)--(2.7,1.25)--node[above]{$B$}(6.45,1.25)--(states.north west);
\draw[qwire] (v.south east)--(2.75,-1.5)--node[below]{$E$}(meas.west);
\draw[cwire] (meas.east)--(6.45,-1.5)--node[below right]{$Y$}(states.south west);
\node[align=center] at (6,-3.05) {Fixed probabilities and affine conditional states give\\[5pt]
$g_\gamma(t)=\frac12D(\sigma_+(t)\|\bar\sigma(t))+\frac12D(\sigma_-(t)\|\bar\sigma(t))$.};
\node[font=\footnotesize] at (6,-3.85) {This measurement is a mathematical construction, not a receiver resource.};
\end{tikzpicture}
\caption{A fixed environment measurement used only to prove a scalar convexity statement. For diagonal input population $t$, the outcomes $+$ and $-$ are equiprobable, and both conditional Bob states are affine in $t$. Their Holevo information is exactly $g_\gamma(t)$. Joint convexity of relative entropy proves convexity in $t$. The inaccessible environment is not made available to the communication parties.}\label{fig:AD-measurement}\end{figure}

\begin{figure}[t]\centering\begin{tikzpicture}
\begin{axis}[width=.92\linewidth,height=6.5cm,xmin=0,xmax=1,ymin=0,ymax=1.03,
 xlabel={Excited-state population $t$},ylabel={Penalty $g_\gamma(t)$ (bits)},grid=major,
 legend style={font=\small,at={(.03,.97)},anchor=north west},legend cell align=left]
\addplot[thick] table[x=t,y=gamma025,col sep=comma]{data/ad_penalty.csv};\addlegendentry{$\gamma=1/4$}
\addplot[thick,dashed] table[x=t,y=gamma050,col sep=comma]{data/ad_penalty.csv};\addlegendentry{$\gamma=1/2$}
\addplot[thick,dashdotted] table[x=t,y=gamma075,col sep=comma]{data/ad_penalty.csv};\addlegendentry{$\gamma=3/4$}
\end{axis}\end{tikzpicture}
\caption{The scalar amplitude-damping penalty $g_\gamma(t)$ for three damping probabilities. These plots illustrate, but do not establish, the convexity proved by the fixed-measurement representation~\eqref{eq:AD-g-relative-entropy}. Here $t$ is the excited-state population, not the damping probability.}\label{fig:AD-penalty}\end{figure}
\FloatBarrier

\subsection{The exact channel-bound formula}
We can now handle arbitrary ensembles, not only phase-symmetric or diagonal ones. Dephasing bounds the entropy of the average output, while Lemma~\ref{lem:AD-penalty} bounds each subtracted penalty. Convexity then removes the entire ensemble optimization.

\begin{theorem}[Exact amplitude-damping evaluation]\label{thm:AD}
For every $0\leq\gamma\leq1$,
\begin{equation}
 \UF(\cA_\gamma)=\ent\!\left(\frac{1-\gamma}{4}\right)
 =h_2\!\left(\frac{1+\sqrt\gamma}{2}\right).
 \label{eq:AD-UF}
\end{equation}
In particular, this expression upper bounds $\Cfb(\cA_\gamma)$.
\end{theorem}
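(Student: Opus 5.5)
The proof will establish matching upper and lower bounds. The upper bound proceeds in three steps: control the average-output entropy by its diagonal part, apply the lemma to each subtracted penalty, and finish with convexity.

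\textbf{Upper bound.} Fix an arbitrary finite ensemble $(p_x,\rho(t_x,z_x))_x$ with average population $\bar t\coloneqq\sum_xp_xt_x$.
\begin{enumerate}
\item The average output $\cA_\gamma(\bar\rho)$ has diagonal $\diag(1-a\bar t,a\bar t)$. Dephasing in the computational basis does not decrease entropy, so $S(\cA_\gamma(\bar\rho))\leq h_2(a\bar t)$.
\item By Lemma~\ref{lem:AD-penalty}, each subtracted penalty satisfies $S(\cA_\gamma(\rho^x))-f_{\cA_\gamma}(\rho^x)\geq g_\gamma(t_x)$.
\item Convexity of $g_\gamma$ gives $\sum_xp_xg_\gamma(t_x)\geq g_\gamma(\bar t)$.
\end{enumerate}
Combining these, the objective in~\eqref{eq:UF} is at most
\[
h_2(a\bar t)-g_\gamma(\bar t)=\ent\big(a\bar t(1-\bar t)\big).
\]
Since $\ent$ is increasing, this is at most $\ent(a/4)$, because $\bar t(1-\bar t)\leq1/4$.

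\textbf{Lower bound.} We need an ensemble attaining $\ent(a/4)$. Take the one-element ensemble $\rho=I_2/2$, i.e.\ $t=1/2$, $z=0$, $u=1/4$. The Holevo term vanishes, and by~\eqref{eq:AD-formation} we have $f_{\cA_\gamma}(I/2)=\ent(a/4)$. Alternatively, one can use the two-element ensemble $\rho(1/2,\pm z)$ with $z\to0$; in the limit it gives the same value. Hence $\UF(\cA_\gamma)=\ent((1-\gamma)/4)$.

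\textbf{Closed form.} For $v=(1-\gamma)/4$ we have $\sqrt{1-4v}=\sqrt\gamma$. Therefore $\ent(v)=h_2((1-\sqrt\gamma)/2)=h_2((1+\sqrt\gamma)/2)$ by the symmetry of $h_2$.

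The final capacity statement follows from Theorem~\ref{thm:converse}.

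\textbf{Main obstacle.} The only delicate point is the boundary cases $\gamma\in\{0,1\}$ and $t\in\{0,1\}$, where the formulas involve $\ent$ at endpoints. These are handled by the continuity of $\ent$ and of $g_\gamma$ already noted. The substantive work, namely the coherence reduction and convexity, is supplied by Lemma~\ref{lem:AD-penalty}. Everything else is bookkeeping, with one identity to check: $h_2(at)-g_\gamma(t)=\ent(at(1-t))$, which holds by definition~\eqref{eq:AD-g}.
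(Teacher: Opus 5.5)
Your proposal is correct and matches the paper's proof essentially step for step. Both use the dephasing bound $S(\cA_\gamma(\bar\rho))\leq h_2(a\bar t)$, Lemma~\ref{lem:AD-penalty} on each subtracted penalty, Jensen via convexity of $g_\gamma$, the identity $h_2(a\bar t)-g_\gamma(\bar t)=\ent(a\bar t(1-\bar t))$ with monotonicity of $\ent$, and the one-element ensemble $I_2/2$ for attainment; your alternative $z\to0$ two-element ensemble is unnecessary but harmless.
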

\begin{proof}
For an arbitrary input ensemble, write its members as $\rho(t_x,z_x)$ and define $\bar\rho\coloneqq\sum_xp_x\rho(t_x,z_x)$ and $\bar t\coloneqq\sum_xp_xt_x$. Dephasing the average output cannot decrease its entropy, so
$S(\cA_\gamma(\bar\rho))\leq h_2(a\bar t)$. Lemma~\ref{lem:AD-penalty} and Jensen's inequality give
\begin{align}
 S(\cA_\gamma(\bar\rho))-\sum_xp_x
 \left[S(\cA_\gamma(\rho^x))-f_{\cA_\gamma}(\rho^x)\right]
 &\leq h_2(a\bar t)-\sum_xp_xg_\gamma(t_x)\\
 &\leq h_2(a\bar t)-g_\gamma(\bar t)\\
 &=\ent(a\bar t(1-\bar t))\leq\ent(a/4).
 \label{eq:AD-UF-upper}
\end{align}
The last step uses $\bar t(1-\bar t)\leq1/4$ and monotonicity of $\ent$. For the reverse bound, use the one-element ensemble with $\rho\coloneqq I_2/2$. Its Holevo information is zero, but~\eqref{eq:AD-formation} gives $f_{\cA_\gamma}(I_2/2)=\ent(a/4)$. Thus it attains the objective in~\eqref{eq:UF}. The entropy symmetry in~\eqref{eq:det-entropy} gives the second equality in~\eqref{eq:AD-UF}. The endpoints also follow directly: $\gamma=0$ is noiseless and $\gamma=1$ is a replacer.
\end{proof}

This proof explains both the usefulness and the limitation of the bound. Its optimizer can consist of only one input state and create entanglement without sending any message information. Thus tightness of the channel-bound optimization does not imply that the same value is an achievable classical rate. The entropy in~\eqref{eq:AD-UF} also occurs as the Choi entanglement of formation and, in recent work, as the parallel entanglement cost of amplitude damping~\cite{TZBW26}. We have proved the feedback upper bound directly and do not rely on a simulation with free forward classical communication.

\subsection{Benchmarks and numerical comparison}
For comparison, two equiprobable pure inputs
$\ket{\psi_\pm(t)}\coloneqq\sqrt{1-t}\ket0\pm\sqrt t\ket1$ attain the single-use Holevo optimization. The additivity theorem of Tang, Zhu, Bai, and Wang~\cite[Theorem~5.5]{TZBW26} identifies its optimum with the unassisted classical capacity:
\begin{equation}
 C(\cA_\gamma)=\chi(\cA_\gamma)=\max_{0\leq t\leq1}
 \left\{h_2((1-\gamma)t)-\ent(\gamma(1-\gamma)t^2)\right\}.
 \label{eq:AD-Holevo}
\end{equation}
This is the established single-use optimization~\cite{GF05,KSW20}. For $0\leq\gamma<1$ its scalar objective is strictly concave: the first entropy is strictly concave in $t$, while the second equals $\mathcal E(2\sqrt{\gamma(1-\gamma)}\,t)$ and is convex in $t$. Indeed, with $s\coloneqq\sqrt{1-c^2}$,
\begin{equation}
 \mathcal E''(c)=\frac{\operatorname{atanh}s-s}{s^3\ln2}\geq0,
 \qquad 0<c<1,
 \label{eq:formation-convexity}
\end{equation}
with endpoint convexity obtained by continuity. Consequently the numerical maximization has a unique global maximizer; the completely damped case has the identically zero objective. This observation supplies an analytic check on the scalar optimization used in the plot. The binary-ensemble formula and achievability originate in~\cite{GF05,KSW20}; identifying this value with the regularized unassisted capacity uses the separately cited additivity theorem~\cite{TZBW26}. Figure~\ref{fig:AD-bounds} accordingly labels the lower curve as the unassisted classical capacity. Our feedback converse does not assume this additivity theorem.

The published semidefinite-programming bound is
\begin{equation}
 C_\beta(\cA_\gamma)=\log_2(1+\sqrt{1-\gamma}),
 \label{eq:AD-beta}
\end{equation}
with its evaluation given in~\cite{WXD18,KSW20}. The geometric-$\Upsilon$ comparison for ordinary amplitude damping in~\cite[Section~6.4]{FF21} was reported numerically to coincide with this expression. The feedback framework of~\cite{DKQSW23} also permits the $C_\beta$ strong-converse upper bound. We compare with the analytic $C_\beta$ curve rather than claim a separate exact evaluation of every optimized geometric divergence.

The improvement of~\eqref{eq:AD-UF} over~\eqref{eq:AD-beta} is strict for $0<\gamma<1$. Indeed, with $q\coloneqq(1+\sqrt\gamma)/2$, the first is the Shannon entropy of $(q,1-q)$, while the second is its R\'enyi entropy of order $1/2$:
\begin{equation}
 2\log_2(\sqrt q+\sqrt{1-q})
 =\log_2(1+2\sqrt{q(1-q)})
 =\log_2(1+\sqrt{1-\gamma}).
 \label{eq:AD-Renyi-comparison}
\end{equation}
Entropy monotonicity in the R\'enyi order is strict for a nonuniform distribution with two positive entries~\cite{Wilde17}.

The maximum-output-entropy bound of~\cite{DQSW19} equals one for $\gamma\leq1/2$ and $h_2(1-\gamma)$ for $\gamma\geq1/2$. To see this, a diagonal input produces excited-state probability $(1-\gamma)t$, while output coherence can only lower entropy at a fixed diagonal. The interval of permitted probabilities contains $1/2$ exactly when $1-\gamma\geq1/2$.

\begin{figure}[t]\centering\begin{tikzpicture}
\begin{axis}[width=.96\linewidth,height=7.6cm,xmin=0,xmax=1,ymin=0,ymax=1.03,
 xlabel={Damping probability $\gamma$},ylabel={Bits per channel use},grid=major,
 legend style={font=\small,at={(.03,.03)},anchor=south west},legend cell align=left]
\addplot[thick] table[x=gamma,y=Holevo,col sep=comma]{data/amplitude_damping.csv};\addlegendentry{Unassisted classical capacity $C$}
\addplot[very thick,dotted] table[x=gamma,y=UF,col sep=comma]{data/amplitude_damping.csv};\addlegendentry{$\UF=h_2((1+\sqrt\gamma)/2)$}
\addplot[thick,dashed] table[x=gamma,y=Cbeta,col sep=comma]{data/amplitude_damping.csv};\addlegendentry{Published $C_\beta$ bound}
\addplot[thick,dashdotted] table[x=gamma,y=Smax,col sep=comma]{data/amplitude_damping.csv};\addlegendentry{Maximum output entropy}
\end{axis}\end{tikzpicture}
\caption{Amplitude-damping comparison, in bits per channel use. The $\UF$ curve is the exact bound evaluation~\eqref{eq:AD-UF}, not a numerical ensemble approximation. The lower curve is the unassisted classical capacity~\eqref{eq:AD-Holevo}, using the additivity theorem of~\cite{TZBW26}. The other curves are the published $C_\beta$ and maximum-output-entropy upper bounds. The smaller $\UF$ threshold is proved here for ordinary feedback-assisted capacity only.}\label{fig:AD-bounds}\end{figure}

For example, at $\gamma=1/2$, the unassisted classical capacity is approximately $0.4717293906$, the bound $\UF$ is $0.6008760367$, and $C_\beta$ is $0.7715533032$. The gap between the achievable rate and $\UF$ is not evaluated as an operational feedback advantage.

\subsection{Why revealing a damping event is too generous}
The successful depolarizing flag construction does not mean that every natural environment flag produces a useful converse. Consider the Kraus-outcome extension
\begin{equation}
 \widehat{\cA}_\gamma(\rho)\coloneqq K_0\rho K_0^\dagger\otimes\proj0_F
                              +K_1\rho K_1^\dagger\otimes\proj1_F.
 \label{eq:AD-jump-flag}
\end{equation}
The basis input $\proj0$ produces $\proj0_B\otimes\proj0_F$, whereas $\proj1$ produces a mixture supported on $\ket{1}_B\ket0_F$ and $\ket0_B\ket1_F$. The two outputs have orthogonal supports. Thus this flagged channel can transmit one bit perfectly for every $\gamma$, including $\gamma=1$. Proposition~\ref{prop:basic} bounds its $\UF$ by the input dimension, so its unassisted and feedback-assisted capacities are both one bit. Discarding this flag simulates amplitude damping, but yields only a trivial upper bound.

This observation explains why the environment measurement in Lemma~\ref{lem:AD-penalty} must not be interpreted as free receiver access to the environment. Its role is to prove convexity of a penalty, not to enlarge the communication model until the noise becomes distinguishable and removable.

The channel analysis is now complete. We next turn to a bipartite source, where the cost of classical communication is subtracted from the randomness yield and the input-ensemble average is fixed by the source state. These distinctions will make the damping formulas useful in a different operational problem.

\section{Randomness distillation and a one-way state potential}\label{sec:randomness}
The preceding sections concern a dynamical resource: a channel that can be used repeatedly while classical feedback is free. There is a closely related static task in which Alice and Bob begin instead with many copies of a bipartite state and use classical communication to distill shared randomness. The communication is no longer free: its rate is subtracted from the amount of randomness produced. This difference makes the operational accounting more delicate, but the entanglement-of-formation idea still gives a useful one-way upper bound.

The purpose of this section is threefold. We first recall the one-way and two-way tasks in a form compatible with~\cite{LRWW23,NW24}. We then define a state analogue of $\UF$ and identify it with the same mixed-convex-roof quantity that appeared in Section~\ref{sec:structure}. Finally, we explain why its proven operational role is one-way. Examples based on Choi states of generalized amplitude damping channels will first demonstrate the usefulness of the bound. We then use isotropic states to compare the static and dynamical quantities and to show why an unrestricted two-way converse requires additional ideas.

\subsection{One-way and two-way distillable randomness}\label{subsec:randomness-task}
We first specify finite-copy protocols and their net yield, then take the same type of capacity limit as in Section~\ref{subsec:task}. The outputs must be nearly uniform and nearly identical, but need not be secret. Every classical message is included in the communication cost. In particular, no additional shared random seed is supplied for free in this source-distillation task.

For an integer $D\geq1$, the ideal shared-randomness state is defined as
\begin{equation}
 \overline\Phi^{D}_{K_AK_B}\coloneqq\frac1D\sum_{k=0}^{D-1}
 \proj k_{K_A}\otimes\proj k_{K_B}.
 \label{eq:ideal-randomness}
\end{equation}
It represents $\log_2D$ uniform shared bits. A one-way protocol on $\rho_{AB}^{\otimes n}$ consists of an Alice instrument with a classical output $K_A$ and a classical message $L$, followed by a Bob channel from $B^nL$ to a classical estimate $K_B$. Both final key registers have alphabet $\{0,\ldots,D-1\}$. Alice may retain quantum memory and use private randomness, and Bob may process his systems locally before or after receiving $L$. Without backward communication, all of Alice's processing can be performed first, and Bob's local processing can be absorbed into his final decoder. Several forward descriptions can be concatenated into $L$, with their logarithmic alphabet sizes added. Let $\omega^{\mathcal P}_{K_AK_B}$ denote the output of protocol $\mathcal P$, and let $d_L$ denote the message-alphabet size. See the top of Figure~\ref{fig:randomness-tasks} for a visual depiction.

For $\eps\in[0,1)$, a protocol is admissible when
\begin{equation}
 F\!\left(\omega^{\mathcal P}_{K_AK_B},\overline\Phi^D_{K_AK_B}\right)\geq1-\eps,
 \label{eq:randomness-finite-error}
\end{equation}
where the fidelity is defined in~\eqref{eq:fidelity-def}. Its net yield, in bits per source copy, is $(\log_2D-\log_2d_L)/n$. The $n$-shot one-way distillable randomness is defined as
\begin{equation}
 R_{\rightarrow}^{n,\eps}(\rho_{AB})\coloneqq
 \sup\left\{\frac{\log_2D-\log_2d_L}{n}:
 \mathcal P\text{ is an admissible one-way protocol on }\rho_{AB}^{\otimes n}\right\}.
 \label{eq:one-way-randomness-definition}
\end{equation}
The supremum includes all finite output and message alphabets. Taking $D=d_L=1$ shows that this quantity is nonnegative. A large message is not a free source of net randomness: its full logarithmic alphabet size is subtracted.

For a two-way protocol, let $L_1,\ldots,L_r$ be the successive classical messages, in either direction, and set $d_{L_i}\coloneqq|L_i|$. The number of rounds is finite for each protocol but need not be fixed as $n$ varies. Its $n$-shot net rate is defined as
\begin{equation}
 R_{\leftrightarrow}^{n,\eps}(\rho_{AB})\coloneqq
 \sup\left\{\frac{\log_2D-\sum_{i=1}^{r}\log_2d_{L_i}}{n}:
 \mathcal P\text{ is an admissible LOCC protocol on }\rho_{AB}^{\otimes n}\right\}.
 \label{eq:two-way-net-yield}
\end{equation}
Here LOCC means local operations and classical communication. Variable-length histories can be included in fixed finite message alphabets with padding; we use the alphabet-size cost, not an unannounced expected-length convention. The bottom of Figure~\ref{fig:randomness-tasks} depicts this two-way model.

Following the finite-resource convention of~\cite{LRWW23,NW24}, the asymptotic one-way and two-way quantities are defined as
\begin{align}
 \Rone(\rho_{AB})&\coloneqq\inf_{\eps\in(0,1)}\liminf_{n\to\infty}R_{\rightarrow}^{n,\eps}(\rho_{AB}),
 \label{eq:one-way-randomness-limit}\\
 \Rtwo(\rho_{AB})&\coloneqq\inf_{\eps\in(0,1)}\liminf_{n\to\infty}R_{\leftrightarrow}^{n,\eps}(\rho_{AB}).
 \label{eq:two-way-randomness-limit}
\end{align}
These limits describe vanishing-error protocols with a net rate eventually at least any prescribed number below the limit. To see the equivalence, fix a rate strictly below the displayed limit. For each integer $j\geq2$, the lower-limit definition supplies a threshold $n_j$ beyond which that rate, up to $1/j$, is feasible for every blocklength with error tolerance $1/j$. Choose the thresholds increasingly and use such protocols for $n_j\leq n<n_{j+1}$. This makes the error and rate tolerances tend to zero along all sufficiently large blocklengths, not only a subsequence. Conversely, a vanishing-error sequence is eventually admissible for every fixed $\eps>0$, so its lower limiting rate cannot exceed either defining expression. No secrecy from the public transcript or a purifying system is imposed.

\begin{figure}[t]
 \centering\begin{tikzpicture}[x=1cm,y=1cm,font=\small]
\node[font=\bfseries,anchor=west] at (0,4.0) {One-way protocol};
\draw[-{Latex},thin] (7,4.0)--node[above,font=\footnotesize]{time}(12.3,4.0);
\node[anchor=east] at (0,2.9) {Alice};
\node[anchor=east] at (0,0) {Bob};
\node[op,minimum width=2.75cm] (enc) at (3.0,2.9) {Instrument\\outputs $(K_A,L)$};
\node[op,minimum width=2.75cm] (dec) at (8.4,0) {Decoder\\uses $B^n$ and $L$};
\draw[qwire] (.15,2.9)--node[above]{$A^n$}(enc.west);
\draw[qwire] (.15,0)--node[below]{$B^n$}(dec.west);
\draw[cwire] (enc.east)--node[above]{$K_A$}(12.35,2.9);
\draw[cwire] (enc.south east)--node[sloped,above=4pt]{$L$: $\log_2d_L$ bits}(dec.north west);
\draw[cwire] (dec.east)--node[above]{$K_B$}(12.35,0);
\node[font=\footnotesize,align=center] at (6.3,-.75) {Shared source: $\rho_{AB}^{\otimes n}$\qquad Net yield: $\log_2D-\log_2d_L$};
\draw[thin] (0,-1.35)--(12.4,-1.35);
\begin{scope}[yshift=-7.4cm]
\node[font=\bfseries,anchor=west] at (0,4.4) {Two-way protocol};
\node[anchor=east] at (0,3.2) {Alice};
\node[anchor=east] at (0,0) {Bob};
\node[op,minimum width=1.45cm] (a1) at (1.3,3.2) {Local\\operation};
\node[op,minimum width=1.45cm] (b1) at (4.0,0) {Local\\operation};
\node[op,minimum width=1.45cm] (a2) at (6.7,3.2) {Local\\operation};
\node[op,minimum width=1.45cm] (b2) at (9.4,0) {Local\\operation};
\draw[qwire] (.1,3.2)--(a1.west);
\draw[qwire] (a1.east)--node[above,font=\footnotesize]{retained system}(a2.west);
\draw[qwire] (a2.east)--(10.65,3.2);
\draw[qwire] (.1,0)--(b1.west);
\draw[qwire] (b1.east)--node[below,font=\footnotesize]{retained system}(b2.west);
\draw[qwire] (b2.east)--(10.65,0);
\draw[cwire] (a1.south east)--node[sloped,above=3pt]{$L_1$}(b1.north west);
\draw[cwire] (b1.north east)--node[sloped,above=3pt]{$L_2$}(a2.south west);
\draw[cwire] (a2.south east)--node[sloped,above=3pt]{$L_3$}(b2.north west);
\node[align=center,font=\footnotesize] at (11.75,1.6) {further rounds\\and final\\local outputs\\$K_A,K_B$};
\draw[densely dotted] (10.6,-.4)--(10.6,3.6);
\node[font=\footnotesize] at (6.2,-.85) {All message lengths are subtracted: $\log_2D-\sum_i\log_2d_{L_i}$.};
\end{scope}
\end{tikzpicture}
 \caption{Finite-copy randomness distillation. Time flows to the right. In the one-way model (top), Alice creates her final label $K_A$ and a description $L$, and Bob uses $L$ together with $B^n$ to produce $K_B$. The net yield subtracts $\log_2d_L$ from $\log_2D$. In the two-way model (bottom), local operations and messages alternate; the full cost $\sum_i\log_2d_{L_i}$ is subtracted. The retained systems may be quantum until the last local operations. The ideal final state is~\eqref{eq:ideal-randomness}, without a secrecy condition.}
 \label{fig:randomness-tasks}
\end{figure}
\FloatBarrier

Devetak and Winter proved that the one-way rate can be written as a regularized classical-correlation optimization~\cite{DW04}. An equivalent form follows from Koashi--Winter duality. If $\psi_{ABE}$ purifies $\rho_{AB}$, then
\begin{equation}
 \Rone(\rho_{AB})=S(B)_\rho-E_C(B;E)_\psi,
 \label{eq:one-way-EC}
\end{equation}
where the entanglement cost is characterized by the regularized entanglement of formation~\cite{HHT01} and is written as
\begin{equation}
 E_C(B;E)_\psi\coloneqq\lim_{n\to\infty}\frac1n
 E_F(B^n;E^n)_{\psi^{\otimes n}}.
 \label{eq:EC-definition}
\end{equation}
Equation~\eqref{eq:one-way-EC} is especially useful here because it puts the one-way randomness problem in direct contact with the mixed-convex-roof quantity already underlying $\UF$.

\subsection{A state analogue of the channel potential}
Let $\mathcal M_{A\to XA'}$ be an arbitrary quantum instrument written as a channel with a classical outcome register $X$ and a retained quantum system $A'$. Define
\begin{equation}
 \omega_{XA'B}\coloneqq(\mathcal M_{A\to XA'}\otimes\id_B)(\rho_{AB}).
 \label{eq:VF-output}
\end{equation}
The one-way state potential is defined as
\begin{equation}
 \VF(\rho_{AB})\coloneqq
 \sup_{\mathcal M_{A\to XA'}}
 \left\{I(X;B)_\omega+\EF(A';B|X)_\omega\right\}.
 \label{eq:VF-definition}
\end{equation}
The first term is classical correlation extracted immediately; the second records entanglement left in the conditional post-instrument states. This is the static counterpart of the two-term accounting in~\eqref{eq:potential}. The retained system is essential: forcing $A'$ to be trivial reduces the objective to the ordinary one-way measurement correlation.

To evaluate~\eqref{eq:VF-definition}, let $\psi_{ABE}$ purify $\rho_{AB}$ and use the environment-measured classical correlation $J(E;B)$ and its mixed convex roof $G_E$ from Section~\ref{sec:structure}. The Hughston--Jozsa--Wootters theorem~\cite{HJW93} identifies instruments on the purifying system $A$ with ensemble decompositions of $\rho_{BE}$. Keeping a local purification of each conditional component on $A'$ and applying Koashi--Winter duality gives the following identity.

\begin{proposition}[Mixed-roof representation]\label{prop:VF-roof}
For every bipartite state $\rho_{AB}$ with purification $\psi_{ABE}$,
\begin{equation}
 \VF(\rho_{AB})=S(B)_\rho-G_E(\rho_{BE}).
 \label{eq:VF-roof}
\end{equation}
\end{proposition}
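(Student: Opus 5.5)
We prove the two inequalities separately. The dictionary is the same as in Proposition~\ref{prop:WY}: an instrument on $A$ corresponds, through the purification $\psi_{ABE}$, to a mixed-state decomposition of $\rho_{BE}$. Koashi--Winter duality then converts each conditional entanglement term into a measured classical correlation. Throughout, write $\omega_{XA'B}=\sum_xp_x\proj x\otimes\omega^x_{A'B}$, with $p_x$ the outcome probabilities and $\omega^x$ the normalized conditional states.

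\textbf{Upper bound ($\leq$).} Fix an instrument $(\mathcal M_x)_x$ and dilate it to an isometry $A\to XX'A'F$, where $F$ is a discarded Alice-side environment. Conditioning on $x$ gives pure states $\phi^x_{A'FBE}$. Their $BE$ marginals $\sigma^x_{BE}$ satisfy $\sum_xp_x\sigma^x_{BE}=\rho_{BE}$, because an operation on $A$ does not change $\rho_{BE}$ (the no-signalling identity~\eqref{eq:Alice-no-signalling}).

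For the first term, $I(X;B)_\omega=S(B)_\rho-\sum_xp_xS(B)_{\sigma^x}$.

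For the second term, consider the pure state $\phi^x$ on $(A'F)\,B\,E$. Koashi--Winter duality gives
\begin{equation*}
 \EF(A'F;B)_{\phi^x}=S(B)_{\sigma^x}-J(E;B)_{\sigma^x}.
\end{equation*}
Tracing out $F$ is a local channel on Alice's side, so monotonicity gives $\EF(A';B)_{\omega^x}\leq\EF(A'F;B)_{\phi^x}$.

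Adding the two terms, the $S(B)_{\sigma^x}$ contributions cancel, and the objective is at most
\begin{equation*}
 S(B)_\rho-\sum_xp_xJ(E;B)_{\sigma^x}\leq S(B)_\rho-G_E(\rho_{BE}),
\end{equation*}
where the last step is the definition~\eqref{eq:mixed-roof} of $G_E$.

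\textbf{Lower bound ($\geq$).} Take a decomposition $\rho_{BE}=\sum_xp_x\sigma^x_{BE}$; by the Carath\'eodory argument of Proposition~\ref{prop:cardinality} we may assume it is finite and (near-)optimal. Purify each component as $\phi^x_{RBE}$ and form
\begin{equation*}
 \sum_x\sqrt{p_x}\,\ket x_X\ket{\phi^x}_{RBE}.
\end{equation*}
This state also purifies $\rho_{BE}$. By Uhlmann's theorem / HJW~\cite{HJW93}, there is an isometry $A\to XR$ (after enlarging dimensions if needed) carrying $\psi_{ABE}$ to this state. Following it by a measurement of $X$ in the computational basis, keeping $A'=R$, defines an instrument whose conditional post-measurement states are exactly the pure $\phi^x_{A'BE}$, with probabilities $p_x$. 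For this instrument Koashi--Winter holds with equality, since no $F$ is discarded. The objective therefore equals $S(B)_\rho-\sum_xp_xJ(E;B)_{\sigma^x}$. Taking the infimum over decompositions gives the reverse inequality.

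\textbf{Main obstacle.} The delicate step is the lower-bound construction. The purifying system $A$ may be too small to hold $XR$, so the isometry has to be realized as Alice appending a local ancilla. This is allowed because instruments $A\to XA'$ may have arbitrary finite output dimension, and the purification identity applies to the support of $\rho_A$. We must also confirm that the conditional states produced are exactly the chosen $\sigma^x$ with weights $p_x$. This follows from orthogonality of the $\ket x$ labels. A secondary check is that Koashi--Winter is applied with $E$ as the measured system and $B$ held fixed, matching the convention in~\eqref{eq:J-env}.
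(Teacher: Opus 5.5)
Your proof is correct and follows essentially the paper's route: your dilation together with monotonicity of $\EF$ under discarding $F$ is the paper's Kraus-register reduction to conditionally pure instruments, after which both arguments apply Koashi--Winter to each conditional state and use HJW/Uhlmann to realize every decomposition of $\rho_{BE}$ by an instrument on $A$. You differ only in constructing that instrument explicitly rather than citing HJW, and the Carath\'eodory step is unnecessary because you take the infimum over arbitrary decompositions anyway.
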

\begin{proof}
We first justify a reduction that is important for the optimization. Consider an arbitrary instrument $\mathcal M_{A\to XA'}$. Write the operation for outcome $x$ as $\mathcal M_x(\xi)=\sum_jK_{xj}\xi K_{xj}^\dagger$. Replace it by the single operator $W_x\coloneqq\sum_jK_{xj}\otimes\ket j$, and let Alice retain the added register as part of $A'$. The operators satisfy $\sum_xW_x^\dagger W_x=I$, so they define an instrument. Individual $W_x$ need not be isometries: they include the outcome probabilities. This leaves the classical--quantum marginal on $XB$ unchanged, and hence leaves $I(X;B)$ unchanged. Tracing out the added local environment is a channel on Alice's side, so monotonicity of entanglement of formation under local channels implies that retaining it cannot decrease $E_F(A';B|X)$. Therefore the supremum in~\eqref{eq:VF-definition} can be restricted, without loss, to instruments for which the conditional state on $A'BE$ is pure for every $x$.

For such an instrument, Koashi--Winter duality gives
\begin{equation}
 \EF(A';B)_{\omega^x}=S(B)_{\omega^x}-J(E;B)_{\omega^x}.
 \label{eq:VF-KW}
\end{equation}
Adding this equality to
$I(X;B)_\omega=S(B)_\rho-\sum_xp_xS(B)_{\omega^x}$ yields
\begin{equation}
 I(X;B)_\omega+\sum_xp_x\EF(A';B)_{\omega^x}
 =S(B)_\rho-\sum_xp_xJ(E;B)_{\omega^x}.
 \label{eq:VF-cancel}
\end{equation}
By the Hughston--Jozsa--Wootters theorem~\cite{HJW93}, every ensemble decomposition $\rho_{BE}=\sum_xp_x\omega^x_{BE}$ can be generated by a suitable instrument on the purifying system $A$. Alice can retain a purification of each selected component in $A'$, so every decomposition appearing in the mixed roof is attainable in the restricted optimization above. Conversely, every such conditionally pure instrument produces an ensemble decomposition of $\rho_{BE}$ after $A'$ is discarded. Thus optimizing~\eqref{eq:VF-cancel} is exactly the minimization defining $G_E(\rho_{BE})$ in~\eqref{eq:mixed-roof}, proving~\eqref{eq:VF-roof}.
\end{proof}

\subsection{A finite-blocklength one-way converse}\label{subsec:finite-randomness}
The state potential in \eqref{eq:VF-definition} is useful at finite blocklength as well as asymptotically. We first isolate the measurement correlation available before Alice sends her classical message. We then construct a binary test from Bob's decoder. The test discounts overrepresented output labels, ensuring that the correction term depends on the approximation error but not on the amount of classical communication. This matters because the communication alphabet in~\eqref{eq:one-way-randomness-definition} has no imposed size bound.

For $n\geq1$, define the measurement-induced correlation
\begin{equation}
 \mathcal J_n(\rho_{AB})\coloneqq
 \sup_{\mathcal M_{A^n\to X}} I(X;B^n)_{(\mathcal M\otimes\id)(\rho^{\otimes n})},
 \label{eq:measurement-correlation-n}
\end{equation}
where $X$ is classical. This auxiliary quantity is not itself the $n$-shot, finite-error operational rate. Koashi--Winter duality~\cite{KW04} gives the finite-copy identity
\begin{equation}
 \mathcal J_n(\rho_{AB})
 =nS(B)_\rho-E_F(B^n;E^n)_{\psi^{\otimes n}},
 \label{eq:measurement-correlation-KW}
\end{equation}
where $\psi_{ABE}$ purifies $\rho_{AB}$. Indeed, refining a POVM into rank-one effects cannot decrease its outcome--receiver mutual information. Such a POVM prepares a pure-state ensemble of $\rho_{BE}^{\otimes n}$, and every pure-state ensemble arises in this way from the purifying system $A^n$. The average receiver state is fixed. Maximizing its entropy minus the average conditional entropy is therefore exactly the minimization defining the entanglement of formation on the right of~\eqref{eq:measurement-correlation-KW}.

The formation inequality of~\cite{YHHS05}, in the directional form of~\cite[Eq.~(17)]{BEHY11}, states that
\begin{equation}
 E_F(B_1B_2;E_1E_2)_\sigma
 \geq E_F(B_1;E_1)_\sigma+G_{E_2}(\sigma_{B_2E_2})
 \label{eq:formation-marginal-bound}
\end{equation}
for the indicated marginals of an arbitrary four-partite state. Appendix~\ref{app:formation-marginal} gives a self-contained proof using rank-one measurements, entropy data processing, and convexity of entanglement of formation. In particular, the proof does not assume strong superadditivity of entanglement of formation. For $n\geq2$ identical pairs, successive applications give
\begin{equation}
 \begin{aligned}
 E_F(B^n;E^n)_{\psi^{\otimes n}}
 &\geq E_F(B^{n-1};E^{n-1})_{\psi^{\otimes(n-1)}}+G_E(\rho_{BE})\\
 &\geq E_F(B;E)_\psi+(n-1)G_E(\rho_{BE})\\
 &\geq nG_E(\rho_{BE}).
 \end{aligned}
 \label{eq:G-lower-EC}
\end{equation}
For the last inequality, the pure ensembles defining $E_F$ are permissible ensembles in the mixed roof defining $G_E$, and $J(E;B)$ equals the marginal entropy on a pure state. For $n=1$, this last inequality alone proves the desired bound. Equations~\eqref{eq:VF-roof}, \eqref{eq:measurement-correlation-KW}, and~\eqref{eq:G-lower-EC} imply the finite-copy bound
\begin{equation}
 \mathcal J_n(\rho_{AB})\leq n\VF(\rho_{AB}).
 \label{eq:measurement-correlation-VF}
\end{equation}
Thus no regularization or limiting approximation has yet been used.

For convenience, define a nondecreasing binary-entropy bound
\begin{equation}
 \overline h_2(\delta)\coloneqq\max_{0\leq t\leq\delta}h_2(t)
 =h_2(\min\{\delta,1/2\}),\qquad 0\leq\delta\leq1.
 \label{eq:binary-entropy-envelope}
\end{equation}
It equals $h_2(\delta)$ for $\delta\leq1/2$ and equals one thereafter. This convention makes the following statement valid at every error tolerance, without treating binary entropy as monotone beyond $1/2$.

\begin{theorem}[Finite-blocklength one-way randomness bound]\label{thm:VF-oneway}
Let a one-way protocol on $\rho_{AB}^{\otimes n}$ use a classical message of alphabet size $d_L$ and produce two classical registers of size $D$, with fidelity at least $1-\eps$ to~\eqref{eq:ideal-randomness}, where $\eps\in[0,1)$. Then
\begin{equation}
 (1-\sqrt\eps)(\log_2D-\log_2d_L)
 \leq\mathcal J_n(\rho_{AB})+\overline h_2(\sqrt\eps)
 \leq n\VF(\rho_{AB})+\overline h_2(\sqrt\eps).
 \label{eq:finite-randomness-converse}
\end{equation}
Consequently,
\begin{equation}
 R_{\rightarrow}^{n,\eps}(\rho_{AB})
 \leq\frac{\VF(\rho_{AB})+\overline h_2(\sqrt\eps)/n}{1-\sqrt\eps},
 \qquad
 \Rone(\rho_{AB})\leq\VF(\rho_{AB}).
 \label{eq:VF-oneway-bound}
\end{equation}
There is no restriction on the growth of $D$ or $d_L$ with $n$.
\end{theorem}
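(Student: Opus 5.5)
The second inequality in~\eqref{eq:finite-randomness-converse} is~\eqref{eq:measurement-correlation-VF}. The asymptotic statement~\eqref{eq:VF-oneway-bound} then follows by dividing by $n(1-\sqrt\eps)$, taking $\liminf_n$ and then $\inf_\eps$, since $\overline h_2(\sqrt\eps)/n\to0$. So the real work is the first inequality. The plan is a meta-converse: extract from the protocol a measurement $X$ on $A^n$, and then compare its joint distribution with $B^n$ against the product of marginals using a test built from Bob's decoder.

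\textbf{Step 1 (reduction to a measurement).} Alice's instrument produces classical $(K_A,L)$. Discarding her retained quantum memory gives a measurement $\mathcal M_{A^n\to X}$ with $X\coloneqq(K_A,L)$. Let $\tau_{XB^n}$ be the resulting cq state. By~\eqref{eq:measurement-correlation-n}, $I(X;B^n)_\tau\leq\mathcal J_n(\rho_{AB})$. Moreover
\begin{equation}
 I(X;B^n)_\tau=D(\tau_{XB^n}\|\tau_X\otimes\tau_{B^n}),
\end{equation}
so it suffices to lower bound this relative entropy.

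\textbf{Step 2 (weighted test).} Write $p(k)\coloneqq\Pr(K_A=k)$ and $w(k)\coloneqq\min\{1,1/(Dp(k))\}\in[0,1]$. Bob's decoder, given $l$, is a POVM $(\Lambda^l_{k'})_{k'}$ on $B^n$. Define the two-outcome test
\begin{equation}
 T\coloneqq\sum_{k,l}w(k)\proj{k,l}_X\otimes\Lambda^l_k ,\qquad 0\leq T\leq I.
\end{equation}
This is a legitimate binary POVM on $XB^n$; no restriction to projective tests is needed.
\begin{itemize}
 \item[(a)] Under $\tau$, the acceptance probability is $\alpha=\sum_k\omega(k,k)w(k)$, where $\omega$ is the protocol output. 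Since $\omega(k,k)\leq p(k)$, each term satisfies $\omega(k,k)w(k)\geq\min\{\omega(k,k),1/D\}$. Also $\sum_k\min\{\omega(k,k),1/D\}\geq1-\tfrac12\|\omega-\overline\Phi^D\|_1\geq1-\sqrt\eps$, by~\eqref{eq:fidelity-trace-distance}. Hence $\alpha\geq1-\sqrt\eps$.
 \item[(b)] Under $\tau_X\otimes\tau_{B^n}$, the acceptance probability is
 \begin{equation}
  \beta=\sum_{k,l}p(k,l)\,w(k)\Tr[\Lambda^l_k\tau_{B^n}].
 \end{equation}
 Using $p(k,l)w(k)\leq p(k)w(k)\leq1/D$ and $\sum_k\Lambda^l_k=I$ gives $\beta\leq d_L/D$.
\end{itemize}
This weighting is what makes the correction independent of $d_L$: overrepresented labels are discounted rather than paid for with a continuity term.

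\textbf{Step 3 (data processing and binary bound).} Data processing of the relative entropy under the test gives $I(X;B^n)\geq d(\alpha\|\beta)$, the binary relative entropy. Then
\begin{equation}
 d(\alpha\|\beta)\geq\alpha\log_2(1/\beta)-h_2(\alpha)\geq(1-\sqrt\eps)(\log_2D-\log_2d_L)-\overline h_2(\sqrt\eps).
\end{equation}
The first inequality drops the nonnegative term $(1-\alpha)\log_2\frac1{1-\beta}$. The second uses $\alpha\geq1-\sqrt\eps$, and it requires $\log_2(D/d_L)\geq0$. If instead $D<d_L$, the claimed inequality is trivial because its left side is negative. For the entropy term, $h_2(\alpha)=h_2(1-\alpha)\leq\overline h_2(1-\alpha)\leq\overline h_2(\sqrt\eps)$, since $\overline h_2$ is nondecreasing by~\eqref{eq:binary-entropy-envelope}.

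The main obstacle is Step 2(a). A naive Fano or continuity argument would produce a term $\sqrt\eps\log_2 d_L$ or a loss from $H(K_A)$ falling short of $\log_2 D$. The weight $w$ must therefore be chosen so that it simultaneously keeps $\alpha$ within the trace-distance loss of $1$ and caps every $p(k)w(k)$ at $1/D$. The argument above shows that this single choice does both.
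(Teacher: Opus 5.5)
\textbf{Overall structure.} Your plan matches the paper's proof. It uses the same reduction to $X=(K_A,L)$ with $I(X;B^n)\le\mathcal J_n(\rho_{AB})$, the same capped test with weights $w(k)=\min\{1,1/(Dp(k))\}$, the same bound $\beta\le d_L/D$, and the same binary data-processing step. Steps 1, 2(b) and 3, and the passage to the asymptotic statement, are fine.

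\textbf{The gap: the first inequality in Step 2(a) is false.} This is the step you flagged as the crux. The inequality $\omega(k,k)w(k)\ge\min\{\omega(k,k),1/D\}$ goes the wrong way for overrepresented labels. If $p(k)>1/D$, then $w(k)=1/(Dp(k))<1$, and $\omega(k,k)\le p(k)$ yields
\[
\omega(k,k)w(k)\le\min\{\omega(k,k),1/D\},
\]
strictly whenever $0<\omega(k,k)<p(k)$.

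The summed claim $\alpha\ge1-\tfrac12\|\omega-\overline\Phi^D\|_1$ also fails. Your second step, $\sum_k\min\{\omega(k,k),1/D\}=1-\tfrac12\|\omega-\overline\Phi^D\|_1$, is correct (it is an equality), so the failure lies entirely in the per-term inequality. For a counterexample, take $D=2$, $K_A=0$ deterministically and $K_B$ uniform. Then $p(0)=1$ and $w(0)=1/2$, so $\alpha=\omega(0,0)w(0)=1/4$. However, $1-\tfrac12\|\omega-\overline\Phi^2\|_1=1/2$.

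So the trace-distance consequence of the hypothesis cannot give $\alpha\ge1-\sqrt\eps$ along your lines. The natural repair $1-\alpha\le\Delta+\Pr(K_A\neq K_B)$ gives only $\alpha\ge1-2\sqrt\eps$, which would change the constants in the theorem.

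\textbf{The missing idea: use the fidelity directly, via Cauchy--Schwarz, as the paper does.} Let $\Delta\coloneqq\tfrac12\sum_k|p(k)-1/D|$ be the distance of $p$ from uniform. Since $1/(Dw(k))=\max\{p(k),1/D\}$ for every $k$, including $p(k)=0$,
\[
\sqrt F=\sum_k\sqrt{\tfrac{\omega(k,k)}{D}}=\sum_k\sqrt{w(k)\,\omega(k,k)}\,\sqrt{\tfrac{1}{Dw(k)}}\le\sqrt{\alpha}\,\Bigl(\sum_k\max\{p(k),1/D\}\Bigr)^{1/2}=\sqrt{\alpha(1+\Delta)}.
\]
Marginalizing onto $K_A$ and applying Fuchs--van de Graaf gives $\Delta\le\sqrt\eps$. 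Hence
\[
\alpha\ge\frac{1-\eps}{1+\sqrt\eps}=1-\sqrt\eps.
\]
With this replacement for Step 2(a), the rest of your argument goes through unchanged.
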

\begin{proof}
\par\medskip\noindent\textbf{The state before the classical message is transmitted.}
Use the one-way normal form of Section~\ref{subsec:randomness-task}, depicted in the upper part of Figure~\ref{fig:randomness-tasks}, and temporarily denote Alice's final key register by $K$. Her instrument has joint classical output $(K,L)$. After discarding her remaining local memory, the state is
\begin{equation}
 \tau_{KLB^n}\coloneqq\sum_{k=0}^{D-1}\sum_{\ell=1}^{d_L}
 p_{k\ell}\proj{k,\ell}_{KL}\otimes\tau^{k\ell}_{B^n},
 \qquad \tau_{B^n}=\rho_B^{\otimes n}.
 \label{eq:randomness-pre-message-state}
\end{equation}
Zero-probability terms can be assigned arbitrary conditional states. Alice's local channel leaves Bob's unconditional state unchanged. The joint label $(K,L)$ is a measurement outcome on $A^n$, so
\begin{equation}
 I(KL;B^n)_\tau\leq\mathcal J_n(\rho_{AB}).
 \label{eq:pre-message-information-bound}
\end{equation}
For each received value $\ell$, Bob's decoder is represented by a POVM $(\Lambda_{j|\ell})_{j=0}^{D-1}$ on $B^n$. Define the final classical joint distribution and Alice's marginal as
\begin{equation}
 P_{kj}\coloneqq\sum_\ell p_{k\ell}\Tr[\Lambda_{j|\ell}\tau^{k\ell}_{B^n}],
 \qquad
 p_k\coloneqq\sum_\ell p_{k\ell}=\sum_jP_{kj}.
 \label{eq:randomness-output-distribution}
\end{equation}
Thus $\omega_{K_AK_B}=\sum_{k,j}P_{kj}\proj k\otimes\proj j$ is the actual output. Its fidelity to the ideal state is
\begin{equation}
 F\!\left(\omega,\overline\Phi^D\right)
 =\left(\frac1{\sqrt D}\sum_k\sqrt{P_{kk}}\right)^2\geq1-\eps.
 \label{eq:classical-randomness-fidelity}
\end{equation}
This condition controls both agreement and uniformity; agreement alone would allow the parties to output a constant.

\par\medskip\noindent\textbf{A decoder test with capped label probabilities.}
Alice's actual output need not be exactly uniform. Define
\begin{equation}
 \lambda_k\coloneqq
 \begin{cases}
 \min\{1,(Dp_k)^{-1}\},&p_k>0,\\
 1,&p_k=0,
 \end{cases}
 \qquad
 T_{KLB^n}\coloneqq\sum_{k,\ell}\lambda_k\proj{k,\ell}_{KL}\otimes\Lambda_{k|\ell}.
 \label{eq:weighted-decoder-test}
\end{equation}
Every $\lambda_k$ lies in $(0,1]$, and $0\leq\Lambda_{k|\ell}\leq I$, so $0\leq T\leq I$. The factor $\lambda_k$ decreases the acceptance associated with labels whose probability exceeds $1/D$; it is one for the other labels. This is a mathematical binary test, not a modification of the communication protocol.

Let its acceptance probabilities on the correlated and independent hypotheses be defined as
\begin{equation}
 a\coloneqq\Tr[T\tau_{KLB^n}]=\sum_k\lambda_kP_{kk},
 \qquad
 b\coloneqq\Tr[T(\tau_{KL}\otimes\rho_B^{\otimes n})].
 \label{eq:weighted-test-probabilities}
\end{equation}
Since $p_{k\ell}\leq p_k$ and $p_k\lambda_k\leq1/D$,
\begin{equation}
 \begin{aligned}
 b&=\sum_{k,\ell}p_{k\ell}\lambda_k\Tr[\Lambda_{k|\ell}\rho_B^{\otimes n}]\\
 &\leq\frac1D\sum_{\ell=1}^{d_L}\Tr\!\left[\left(\sum_k\Lambda_{k|\ell}\right)\rho_B^{\otimes n}\right]
 =\frac{d_L}{D}.
 \end{aligned}
 \label{eq:weighted-test-independent}
\end{equation}
Each possible classical description contributes only $1/D$ to this bound. The independent acceptance therefore depends on the net output size $D/d_L$, rather than the gross size $D$.

\par\medskip\noindent\textbf{Controlling the loss from the weighting.}
Set $\delta\coloneqq\sqrt\eps$ and define the total variation distance of Alice's marginal from uniform as
\begin{equation}
 \Delta\coloneqq\frac12\sum_k|p_k-1/D|.
 \label{eq:key-marginal-distance}
\end{equation}
Marginalizing the output and applying~\eqref{eq:fidelity-trace-distance} gives $\Delta\leq\delta$. The definition of the weights also gives
\begin{equation}
 \sum_k\frac1{D\lambda_k}
 =\sum_k\max\{p_k,1/D\}
 =1+\Delta.
 \label{eq:weight-normalization}
\end{equation}
For the last equality, use $\max\{x,y\}=(x+y+|x-y|)/2$ and sum the two normalized probability distributions. Cauchy--Schwarz now yields
\begin{equation}
 \begin{aligned}
 F\!\left(\omega,\overline\Phi^D\right)
 &=\left(\sum_k\sqrt{\lambda_kP_{kk}}\sqrt{\frac1{D\lambda_k}}\right)^2\\
 &\leq\left(\sum_k\lambda_kP_{kk}\right)
        \left(\sum_k\frac1{D\lambda_k}\right)
 =a(1+\Delta).
 \end{aligned}
 \label{eq:weighted-fidelity-Cauchy}
\end{equation}
Combining this with~\eqref{eq:classical-randomness-fidelity} gives
\begin{equation}
 a\geq\frac{1-\eps}{1+\Delta}
 \geq\frac{1-\eps}{1+\sqrt\eps}=1-\delta.
 \label{eq:weighted-test-success}
\end{equation}
Thus capping the overrepresented labels does not destroy the high acceptance needed for a converse.

\par\medskip\noindent\textbf{From the binary test to the net yield.}
If $D\leq d_L$, the first inequality in~\eqref{eq:finite-randomness-converse} is immediate, since its left side is nonpositive and its right side is nonnegative. Hence assume $D>d_L$. In this case~\eqref{eq:weighted-test-independent} implies $b<1$. Data processing of quantum relative entropy under the binary measurement $(T,I-T)$ gives
\begin{equation}
 \begin{aligned}
 I(KL;B^n)_\tau
 &=D(\tau_{KLB^n}\|\tau_{KL}\otimes\rho_B^{\otimes n})\\
 &\geq a\log_2\frac ab+(1-a)\log_2\frac{1-a}{1-b}\\
 &=-h_2(a)-a\log_2 b-(1-a)\log_2(1-b)\\
 &\geq a\log_2(D/d_L)-h_2(a)\\
 &\geq(1-\delta)(\log_2D-\log_2d_L)-\overline h_2(\delta).
 \end{aligned}
 \label{eq:randomness-binary-relative-entropy}
\end{equation}
The penultimate inequality uses~\eqref{eq:weighted-test-independent} and $-\log_2(1-b)\geq0$. For the last, use~\eqref{eq:weighted-test-success}, positivity of $\log_2(D/d_L)$, and $h_2(a)=h_2(1-a)\leq\overline h_2(\delta)$. If $b=0<a$, the relative entropy is infinite and the same argument is read in the extended-value sense; the mutual information of our finite-dimensional state is finite, so that case cannot actually occur here.

Equations~\eqref{eq:pre-message-information-bound} and~\eqref{eq:measurement-correlation-VF} now prove~\eqref{eq:finite-randomness-converse}. Divide by $n(1-\sqrt\eps)$ and optimize over protocols to obtain the finite-error bound in~\eqref{eq:VF-oneway-bound}. Taking the lower limit in $n$ and then the infimum over $\eps\in(0,1)$ proves its asymptotic consequence according to~\eqref{eq:one-way-randomness-limit}.
\end{proof}

The proof controls the net yield directly. An entropy-continuity estimate applied only to the gross key size would instead produce an error correction proportional to $\log_2D$, which cannot be dropped when the subtracted communication is arbitrarily large. The capped test avoids that problem. For $\eps\leq1/4$, its additive correction is simply $h_2(\sqrt\eps)$. The factor $1-\sqrt\eps$ makes this a weak converse, not a strong-converse threshold at $\VF$. It also does not make $\VF$ a two-way upper bound: the single Alice outcome $(K,L)$ before communication, used in~\eqref{eq:pre-message-information-bound}, is specific to the one-way model.

If Alice's final key is exactly uniform, then $\Delta=0$ and all weights in~\eqref{eq:weighted-decoder-test} are one. The same argument yields the stronger factor $1-\eps$ and correction $\overline h_2(\eps)$, instead of $1-\sqrt\eps$ and $\overline h_2(\sqrt\eps)$. The theorem is stated without that exact-uniformity assumption.

The bounds of~\cite{LRWW23,NW24} additionally cover arbitrary LOCC rounds and provide strong-converse and smooth-divergence statements. Theorem~\ref{thm:VF-oneway} supplies a different finite-blocklength one-way converse. The damping examples below show a strict improvement in its vanishing-error threshold over the earlier computable state restrictions; the isotropic example later explains why that threshold cannot be extended to unrestricted two-way extraction.

\subsection{Choi sources and the fixed average input}\label{subsec:Choi-fixed}
The channel and state potentials have closely related objectives, but they need not have the same value. The difference is an average-input constraint. Let $A_0$ be a copy of $A$, let
\begin{equation}
 \ket{\Phi}_{AA_0}\coloneqq\frac1{\sqrt{d_A}}\sum_{j=0}^{d_A-1}\ket j_A\ket j_{A_0},
 \qquad \pi_{A_0}\coloneqq I_{A_0}/d_A,
\end{equation}
and define the normalized Choi state of $\N\colon A_0\to B$ as
$\theta^{\N}_{AB}\coloneqq(\id_A\otimes\N)(\Phi_{AA_0})$.
The reference $A$ is the sender's share in the one-way distillation task. For an input state $\tau$, define its conditional penalty as
\begin{equation}
 g_{\N}(\tau)\coloneqq S(\N(\tau))-f_{\N}(\tau).
 \label{eq:general-channel-penalty}
\end{equation}

\begin{proposition}[Fixed-average Choi correspondence]\label{prop:Choi-fixed}
For every finite-dimensional channel,
\begin{equation}
 \VF(\theta^{\N}_{AB})
 =S(\N(\pi))-
 \inf_{(p_x,\tau^x)_x}
       \left\{\sum_xp_xg_{\N}(\tau^x): \sum_xp_x\tau^x=\pi\right\}
 \leq\UF(\N).
 \label{eq:Choi-fixed-potential}
\end{equation}
The input states $\tau^x$ may be mixed. The average constraint cannot in general be dropped.
\end{proposition}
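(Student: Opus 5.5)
The plan is to start from the mixed-roof representation of Proposition~\ref{prop:VF-roof}, $\VF(\theta^{\N}_{AB})=S(B)_\theta-G_E(\theta_{BE})$, and transport it to the channel input exactly as in the proof of Proposition~\ref{prop:WY}. Fix a Stinespring isometry $V\colon A_0\to BE$ of $\N$. Then $(\id_A\otimes V)\ket{\Phi}_{AA_0}$ purifies $\theta^{\N}_{AB}$, so we may take $\psi_{ABE}$ to be this vector. Its $BE$ marginal is $V\pi V^\dagger$, and $S(B)_\theta=S(\N(\pi))$. This step is immediate, and it reduces the claimed equality to
\begin{equation}
 G_E(V\pi V^\dagger)=\inf\left\{\sum_xp_xg_\N(\tau^x):\sum_xp_x\tau^x=\pi\right\}.
\end{equation}

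This identity is precisely the fixed-average statement~\eqref{eq:WY-fixed-average} with $\bar\rho_A=\pi$. I will nonetheless recall its two ingredients so the argument is self-contained. First, Koashi--Winter duality gives $g_\N(\tau)=J(E;B)_{V\tau V^\dagger}$, as in~\eqref{eq:KW-identity}. Second, decompositions of $V\pi V^\dagger$ correspond bijectively to decompositions of $\pi$. The support argument with $P=VV^\dagger$ shows that every component $\sigma^x$ of a decomposition of $V\pi V^\dagger$ satisfies $\sigma^x=P\sigma^xP$. It therefore equals $V\tau^xV^\dagger$ with $\tau^x\coloneqq V^\dagger\sigma^xV$, and these $\tau^x$ average to $\pi$. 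Together these give the equality in~\eqref{eq:Choi-fixed-potential}, with mixed $\tau^x$ allowed because $G_E$ is a mixed roof.

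For the inequality, note that every feasible ensemble with average $\pi$ is admissible in~\eqref{eq:UF}. Writing the objective of~\eqref{eq:UF} as $S(\N(\bar\rho))-\sum_xp_xg_\N(\rho^x)$, restricting to $\bar\rho=\pi$ yields exactly the quantity in the middle of~\eqref{eq:Choi-fixed-potential}. Taking the supremum over such ensembles bounds it by $\UF(\N)$. Equivalently, $\UF$ is the supremum over $\bar\rho$ of the fixed-average expression, by Proposition~\ref{prop:WY}, and the Choi value is the case $\bar\rho=\pi$.

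The last sentence of the statement, that the average constraint cannot in general be dropped, calls for a counterexample rather than a proof. The main obstacle is choosing an example where the optimizing average input is provably not maximally mixed, or where $S(\N(\bar\rho))$ gains by moving away from $\pi$. A clean candidate is a replacer-plus-identity flagged channel with unequal sector dimensions. A simpler option, which I would try first, is an entanglement-breaking channel whose Holevo optimum has a non-uniform average input. For such a channel $f_\N=0$ and $\UF=\chi$. The Choi value is then the Holevo quantity restricted to ensembles averaging to $\pi$, which is strictly smaller whenever every Holevo-optimal ensemble has average different from $\pi$. A classical asymmetric channel, for example the Z-channel embedded as a measure-and-prepare qubit channel, should exhibit this. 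Its strict inequality follows from the known non-uniform capacity-achieving distribution together with strict concavity of mutual information in the input distribution.
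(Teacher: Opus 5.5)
Your proof is correct, but it reaches the identity by a different route than the paper. You compose two results already established: the mixed-roof formula of Proposition~\ref{prop:VF-roof}, applied to the purification $(\id_A\otimes V)\ket{\Phi}_{AA_0}$ whose $BE$ marginal is $V\pi V^\dagger$, and the fixed-average identity~\eqref{eq:WY-fixed-average} at $\bar\rho_A=\pi$. The paper never passes through $G_E$. After reducing to single-operator instruments $(W_x)_x$ as in Proposition~\ref{prop:VF-roof}, it uses the Choi identity to show that outcome $x$ steers the transmitted system to $\tau^x=(W_x^\dagger W_x)^{\mathsf T}/(d_Ap_x)$, with $A'$ purifying it. The two terms of~\eqref{eq:VF-definition} then become $S(\N(\pi))-\sum_xp_xS(\N(\tau^x))$ and $\sum_xp_xf_\N(\tau^x)$ directly. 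Conversely, $W_x=\sqrt{d_Ap_x(\tau^x)^{\mathsf T}}$ realizes any ensemble with average $\pi$.

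Your route is shorter given the earlier propositions, and it makes explicit that the Choi value is exactly the $\bar\rho=\pi$ slice of~\eqref{eq:WY-identification}. The paper's route avoids Koashi--Winter duality, the Hughston--Jozsa--Wootters theorem, and the support bijection. It also yields an explicit instrument for each ensemble, which Proposition~\ref{prop:GAD-VF-envelope} later uses to realize its four-state ensemble. In your route that realizability is only existential, via the HJW step inside Proposition~\ref{prop:VF-roof}.

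For the final sentence of the statement, the paper gives no example, so your counterexample is a genuine addition, and the Z-channel version works. Let $\cE$ measure in the computational basis and let $\ket1$ decay to $\ket0$ with probability $\epsilon\in(0,1)$. Its output then depends only on the input population $t$. The map $t\mapsto h_2((1-\epsilon)t)$ is concave and lies above its chord $t\,h_2(\epsilon)$ on $[0,1]$. This bounds every ensemble's Holevo information by $I(\bar t)\coloneqq h_2((1-\epsilon)\bar t)-\bar t\,h_2(\epsilon)$, with $\chi(\cE)=\max_tI(t)$ attained by basis states. This $I$ is strictly concave with maximizer below $1/2$; for $\epsilon=1/2$, for example, $I'(1/2)=\tfrac12\log_23-1<0$. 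Hence $\VF(\theta^{\cE})\leq I(1/2)<\chi(\cE)=\UF(\cE)$.

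Your first candidate would not work, however. For an identity/replacer flagged channel the penalty $g_\N$ is constant, and Theorem~\ref{thm:flagged} gives $\UF=(1-\lambda)\log_2d_A$, attained at $\bar\rho=\pi$. So no gap appears, whatever the sector dimensions.
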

\begin{proof}
As in Proposition~\ref{prop:VF-roof}, Alice may retain every local dilation system. Her instrument can then be represented by single operators $(W_x)_x$ satisfying $\sum_xW_x^\dagger W_x=I_A$. Apply it before $\N$, which acts on the disjoint system $A_0$. The Choi identity gives the unnormalized transmitted marginal
\begin{equation}
 \Tr_{A'}[(W_x\otimes I)\Phi(W_x^\dagger\otimes I)]
 =\frac{(W_x^\dagger W_x)^{\mathsf T}}{d_A}.
\end{equation}
For an outcome of positive probability, define
\begin{equation}
 p_x\coloneqq\frac{\Tr[W_x^\dagger W_x]}{d_A},\qquad
 \tau^x\coloneqq\frac{(W_x^\dagger W_x)^{\mathsf T}}{d_Ap_x}.
 \label{eq:Choi-instrument-ensemble}
\end{equation}
Completeness gives $\sum_xp_x\tau^x=\pi$. Before the channel, the normalized conditional state on $A'A_0$ is pure and purifies $\tau^x$. After the channel, its retained entanglement is therefore $f_\N(\tau^x)$, while Bob's conditional state is $\N(\tau^x)$. Substituting these states into~\eqref{eq:VF-definition} gives the objective on the right of~\eqref{eq:Choi-fixed-potential}.

Conversely, an arbitrary ensemble with average $\pi$ can be realized using
\begin{equation}
 W_x\coloneqq\sqrt{d_Ap_x(\tau^x)^{\mathsf T}}.
\end{equation}
Its completeness follows from $\sum_xd_Ap_x(\tau^x)^{\mathsf T}=I_A$. The preceding calculation recovers precisely that ensemble, with Alice retaining its conditional purification. This proves both directions of the optimization identity. The final inequality follows because~\eqref{eq:UF} optimizes over all average inputs as well.
\end{proof}

This constraint reflects a physical distinction: Alice may steer the transmitted subsystem of a Choi state, but cannot change its unconditional marginal by a local instrument. The following damping examples retain this constraint explicitly. They contrast an exactly solved benchmark with a full-rank source for which the bound supplies nontrivial information about an unevaluated regularized rate.

\section{One-way randomness from damping Choi states}\label{sec:damping-states}
The general one-way converse from Theorem~\ref{thm:VF-oneway} becomes informative only when its value can be controlled more sharply than other available estimates. We consider two related families to distinguish this issue from exact evaluation of an operational rate. Ordinary amplitude damping gives a closed-form state potential and, using a recent additivity theorem, an exact rate against which its slack can be measured. Generalized amplitude damping at nonzero temperature gives a full-rank state for which we do not evaluate the regularized rate exactly. On that state, the potential produces a certified upper bound smaller than the earlier computable state-$\Upsilon$ restrictions. Both examples use the average-input constraint in Proposition~\ref{prop:Choi-fixed}.

\subsection{Ordinary amplitude damping as an exact benchmark}\label{subsec:AD-state}
Let $\cA_\gamma$ be the amplitude-damping channel defined in~\eqref{eq:AD-kraus}, with $0\leq\gamma\leq1$. Its normalized Choi state is defined as
\begin{equation}
 \theta^\gamma_{AB}\coloneqq(\id_A\otimes\cA_{\gamma,A_0\to B})(\Phi_{AA_0})
 =\frac12\begin{pmatrix}
 1&0&0&\sqrt{1-\gamma}\\
 0&0&0&0\\
 0&0&\gamma&0\\
 \sqrt{1-\gamma}&0&0&1-\gamma
 \end{pmatrix},
 \label{eq:AD-source}
\end{equation}
where the matrix basis is $00,01,10,11$ and communication is from $A$ to $B$. The marginals are
\begin{equation}
 \theta_A^\gamma=I_A/2,\qquad
 \theta_B^\gamma=\diag((1+\gamma)/2,(1-\gamma)/2).
 \label{eq:AD-source-marginals}
\end{equation}

\begin{proposition}[Amplitude-damping source potential]\label{prop:AD-state-VF}
The state potential is given by
\begin{equation}
 \VF(\theta^\gamma)=\EF(A;B)_{\theta^\gamma}
 =h_2\!\left(\frac{1+\sqrt\gamma}{2}\right).
 \label{eq:AD-state-VF}
\end{equation}
In particular, this expression upper-bounds the one-way distillable randomness. The potential is attained by the identity instrument, with a constant classical outcome.
\end{proposition}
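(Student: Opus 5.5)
The plan is to prove matching lower and upper bounds on $\VF(\theta^\gamma)$ through the fixed-average Choi correspondence, Proposition~\ref{prop:Choi-fixed}.

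\textbf{Lower bound.} Take the identity instrument with a constant outcome, so that $A'=A$ and $X$ is trivial. In~\eqref{eq:VF-definition} the information term then vanishes, and the objective equals $\EF(A;B)_{\theta^\gamma}$. In the language of Proposition~\ref{prop:Choi-fixed}, this is the one-element ensemble $\tau=\pi=I_2/2$. Its objective is $S(\cA_\gamma(\pi))-g_{\cA_\gamma}(\pi)=f_{\cA_\gamma}(\pi)$. By~\eqref{eq:AD-formation} with $u=\det\pi=1/4$, this equals $\ent(a/4)$, where $a=1-\gamma$. The identity $\ent((1-\gamma)/4)=h_2((1+\sqrt\gamma)/2)$ has already been noted in Theorem~\ref{thm:AD}. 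This gives $\VF(\theta^\gamma)\ge\EF(A;B)_{\theta^\gamma}=h_2((1+\sqrt\gamma)/2)$. The middle equality in~\eqref{eq:AD-state-VF} is just $f_{\cA_\gamma}(\pi)=\EF(A;B)_{\theta^\gamma}$: the Choi state is the output of a purification of $\pi$, so the two quantities coincide by definition.

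\textbf{Upper bound.} Proposition~\ref{prop:Choi-fixed} gives $\VF(\theta^\gamma)\le\UF(\cA_\gamma)$. Theorem~\ref{thm:AD} then gives $\UF(\cA_\gamma)=h_2((1+\sqrt\gamma)/2)$, so the upper bound matches the lower bound. If one prefers not to lean on the unconstrained channel theorem, the same chain~\eqref{eq:AD-UF-upper} can be rerun with the constraint $\bar t=1/2$. Dephasing gives $S(\cA_\gamma(\pi))=h_2(a/2)$. Lemma~\ref{lem:AD-penalty} and convexity of $g_\gamma$ give a penalty average of at least $g_\gamma(1/2)$. The difference is therefore at most $\ent(a/4)$.

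\textbf{Consequence.} Finally, Theorem~\ref{thm:VF-oneway} turns the value of $\VF$ into the stated bound on $\Rone(\theta^\gamma)$.

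\textbf{Main obstacle.} The only delicate points are bookkeeping, namely checking that the system labelling is consistent. First, the sender's share $A$ of the Choi state must play the role of the reference $R$ in $f_{\cA_\gamma}$. Second, the concurrence factorization of Appendix~\ref{app:concurrence} must indeed give $\ent(au)$ at $u=1/4$. The essential analytic work, namely the convexity of the penalty and the removal of coherences, was already done in Lemma~\ref{lem:AD-penalty}.
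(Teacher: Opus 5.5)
Your proposal is correct and follows the paper's proof. The upper bound comes from Proposition~\ref{prop:Choi-fixed} combined with Theorem~\ref{thm:AD}, the matching lower bound comes from the identity instrument with a constant outcome, and Theorem~\ref{thm:VF-oneway} gives the operational consequence. The only cosmetic difference is that you evaluate $\EF(A;B)_{\theta^\gamma}$ as $f_{\cA_\gamma}(\pi)$ via~\eqref{eq:AD-formation}, whereas the paper reads off the Choi concurrence $\sqrt{1-\gamma}$ and applies Wootters' formula directly, which is the same computation.
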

\begin{proof}
Proposition~\ref{prop:Choi-fixed} gives
$\VF(\theta^\gamma)\leq\UF(\cA_\gamma)$.
Theorem~\ref{thm:AD} evaluates this channel bound as the last expression in~\eqref{eq:AD-state-VF}, and its optimizer is the one-element ensemble with input $I/2$. Thus the optimizer already has the average required by the Choi source.

Equivalently, for a matching state-side lower bound, Alice can keep her entire share and produce a constant outcome $X$. This gives $I(X;B)=0$ and $\EF(A';B|X)=\EF(A;B)_{\theta^\gamma}$. The concurrence of~\eqref{eq:AD-source} is $\sqrt{1-\gamma}$, as calculated in Appendix~\ref{app:concurrence}. Wootters' formula~\cite{Wootters98} gives the last expression in~\eqref{eq:AD-state-VF}. The upper and lower bounds agree. Apply Theorem~\ref{thm:VF-oneway} for the operational conclusion.
\end{proof}

For this family the bound is strictly smaller than the receiver entropy whenever $0<\gamma<1$. Indeed, the binary entropy is increasing on $[0,1/2]$, and
\begin{equation}
 \frac{1-\sqrt\gamma}{2}<\frac{1-\gamma}{2}<\frac12.
\end{equation}
Using its symmetry about $1/2$ gives
\begin{equation}
 \VF(\theta^\gamma)
 < S(B)_{\theta^\gamma}
 =h_2((1-\gamma)/2).
 \label{eq:AD-state-entropy-improvement}
\end{equation}
Nevertheless, equality of the potential and $\EF$ is special to this family. A separable, perfectly classically correlated state has positive distillable randomness but zero entanglement of formation, so $\EF$ alone is not a general randomness converse.

\subsection{Exact operational rates and their attribution}\label{subsec:AD-exact-rate}
The source potential should not be confused with the exact rate. A simple measurement already gives a smaller achievable value, and a separately established additivity result proves that value optimal. Define
\begin{equation}
 L_{\rightarrow}(\gamma)\coloneqq
 h_2\!\left(\frac{1-\gamma}{2}\right)
 -h_2\!\left(\frac{1+\sqrt{1-\gamma(1-\gamma)}}2\right).
 \label{eq:AD-state-lower}
\end{equation}
Measuring $A$ in the basis $\ket\pm\coloneqq(\ket0\pm\ket1)/\sqrt2$ gives equiprobable outcomes and Bob states
\begin{equation}
 \omega_B^\pm\coloneqq\frac12
 \begin{pmatrix}1+\gamma&\pm\sqrt{1-\gamma}\\
                 \pm\sqrt{1-\gamma}&1-\gamma\end{pmatrix}.
 \label{eq:AD-state-X-output}
\end{equation}
Their average is $\theta_B^\gamma$, and each has determinant $\gamma(1-\gamma)/4$. Their Holevo information is therefore~\eqref{eq:AD-state-lower}. One-way reconciliation with quantum side information achieves this net rate~\cite{DW03,DW04}.

To prove optimality, use the purification
\begin{equation}
 \ket{\psi^\gamma}_{ABE}\coloneqq\frac{
 \ket{000}+\sqrt{1-\gamma}\ket{110}+\sqrt\gamma\ket{101}}{\sqrt2}.
 \label{eq:AD-state-purification}
\end{equation}
Its complementary marginal is
\begin{equation}
 \psi^\gamma_{BE}=\frac12\proj{00}+\frac12\proj{v_\gamma},\qquad
 \ket{v_\gamma}\coloneqq\sqrt{1-\gamma}\ket{10}+\sqrt\gamma\ket{01}.
 \label{eq:AD-state-complement}
\end{equation}
It has no support on $\ket{11}$. Theorem~4.1 of Tang, Zhu, Bai, and Wang~\cite{TZBW26} implies $E_C=E_F$ for this support class. Its concurrence is $\sqrt{\gamma(1-\gamma)}$: the $01,10$ coherence contributes $2(\sqrt{\gamma(1-\gamma)}/2)$ in the two-qubit concurrence formula, and the $11$ diagonal entry is zero. Hence Wootters' formula and~\eqref{eq:one-way-EC} give
\begin{equation}
 \Rone(\theta^\gamma)
 =S(B)_{\theta^\gamma}-E_C(B;E)_{\psi^\gamma}
 =L_{\rightarrow}(\gamma).
 \label{eq:AD-state-exact-rate}
\end{equation}
Thus~\eqref{eq:AD-state-exact-rate} is a consequence of~\cite{KW04,TZBW26}, not a new conclusion obtained by saturating our converse. The achievability argument above does not require that additivity theorem.

The state is asymmetric. For communication from $B$ to $A$, the same argument uses $\psi^\gamma_{AE}$, which is the Choi state of $\cA_{1-\gamma}$ and again has a product vector in its kernel. Its concurrence is $\sqrt\gamma$. Therefore
\begin{equation}
 R_{\leftarrow}(\theta^\gamma)
 =1-h_2\!\left(\frac{1+\sqrt{1-\gamma}}2\right).
 \label{eq:AD-reverse-rate}
\end{equation}
An $X$-basis measurement on $B$ achieves this value directly. The two directions need not have equal rates. At $\gamma=1/2$, the forward and reverse rates are approximately $0.4566992218$ and $0.3991239633$, respectively, whereas the forward potential is $0.6008760367$ bits per copy. Figure~\ref{fig:AD-state-rates} separates these operational rates from the potential and the elementary entropy bound.

\begin{figure}[t]
 \centering\begin{tikzpicture}
\begin{axis}[width=.95\linewidth,height=7.3cm,xlabel={Damping probability $\gamma$},ylabel={Bits per source copy},xmin=0,xmax=1,ymin=0,ymax=1.02,grid=major,legend style={font=\footnotesize,at={(.98,.98)},anchor=north east,draw=none,fill=white},tick label style={font=\small}]
\addplot[thick] table[x=gamma,y=forward,col sep=comma]{data/ad_state_rates.csv};
\addlegendentry{Forward one-way rate}
\addplot[thick,densely dashed] table[x=gamma,y=reverse,col sep=comma]{data/ad_state_rates.csv};
\addlegendentry{Reverse one-way rate}
\addplot[thick,dashdotted] table[x=gamma,y=vf,col sep=comma]{data/ad_state_rates.csv};
\addlegendentry{Forward state potential $V_{\rm F}^{\rightarrow}$}
\addplot[thick,dotted] table[x=gamma,y=entropy,col sep=comma]{data/ad_state_rates.csv};
\addlegendentry{Receiver entropy}
\end{axis}
\end{tikzpicture}
 \caption{One-way randomness from the amplitude-damping Choi state. The two operational rates follow from the separately cited support-additivity theorem and Koashi--Winter duality, as derived in~\eqref{eq:AD-state-exact-rate}--\eqref{eq:AD-reverse-rate}. The state-potential curve is an upper bound for the forward rate, not the rate itself. The receiver entropy gives a looser forward upper bound. This exactly solved family is a benchmark; the finite-temperature example below is used to demonstrate the quantitative value of the new converse.}
 \label{fig:AD-state-rates}
\end{figure}

\subsection{A full-rank finite-temperature source}\label{subsec:GAD-source}
Generalized amplitude damping allows upward as well as downward transitions. This removes the support condition used to evaluate the zero-temperature rate and provides a less symmetric test for a one-way converse. Following~\cite{KSW20}, define the generalized amplitude damping channel as
\begin{equation}
 \mathcal G_{\gamma,\nu}(\tau)\coloneqq
 (1-\nu)\cA_\gamma(\tau)
 +\nu\sigma_x\cA_\gamma(\sigma_x\tau\sigma_x)\sigma_x,
 \qquad \gamma,\nu\in[0,1],
 \label{eq:GAD-definition}
\end{equation}
where $\nu$ is the equilibrium excited-state population. The symbol $\sigma_x$ here denotes the Pauli bit flip, not a conditional density operator. The corresponding Choi state is defined as
\begin{equation}
 \theta^{\gamma,\nu}_{AB}\coloneqq
 (\id_A\otimes\mathcal G_{\gamma,\nu})(\Phi)
 =\frac12\begin{pmatrix}
 1-\gamma\nu&0&0&\sqrt{1-\gamma}\\
 0&\gamma\nu&0&0\\
 0&0&\gamma(1-\nu)&0\\
 \sqrt{1-\gamma}&0&0&1-\gamma(1-\nu)
 \end{pmatrix}.
 \label{eq:GAD-Choi}
\end{equation}
We concentrate on the explicit state
\begin{equation}
 \rho^\star_{AB}\coloneqq\theta^{1/2,1/10}_{AB}
 =\frac1{40}\begin{pmatrix}
 19&0&0&10\sqrt2\\0&1&0&0\\0&0&9&0\\10\sqrt2&0&0&11
 \end{pmatrix}.
 \label{eq:thermal-star}
\end{equation}
Its marginals are $\rho_A^\star=I/2$ and $\rho_B^\star=\diag(7/10,3/10)$, and its eigenvalues are
\begin{equation}
 \frac{15+6\sqrt6}{40},\quad\frac{15-6\sqrt6}{40},\quad
 \frac1{40},\quad\frac9{40}.
 \label{eq:thermal-eigenvalues}
\end{equation}
They are all positive. The channel is nonunital, and, as the determinant formula below shows, every pure input has a mixed output. Thus neither the isotropic additivity calculation nor the pure-output support theorem evaluates this source's regularized one-way rate. We do not claim that failure of those methods proves the absence of other exact results; no exact rate for~\eqref{eq:thermal-star} is established here.

\subsection{Reducing all instruments to a scalar convex envelope}\label{subsec:GAD-roof}
A numerical search over a few instruments gives a lower bound on $\VF$, not the upper bound needed for a converse. We therefore reduce the full optimization first and then certify its value. This reduction uses the environment-measured penalty derived by Fang~\cite{Fang26}; the source fixes the average input to $I/2$, whereas a channel-capacity bound also optimizes that average.

Write an input state as $\tau(t,z)\coloneqq\left(\begin{smallmatrix}1-t&z\\z^*&t\end{smallmatrix}\right)$ and define
$u\coloneqq\det\tau=t(1-t)-|z|^2$. Its physical domain is
\begin{equation}
 0\leq t\leq1,\qquad 0\leq u\leq t(1-t).
\end{equation}
Define $a\coloneqq1-\gamma$ and
\begin{align}
 b_{\gamma,\nu}(t)&\coloneqq\gamma\nu+a t,\\
 \Delta_{\gamma,\nu}(t)&\coloneqq
 \gamma\big[\nu(1-\nu)+a(t-\nu)^2\big],\\
 c_{\gamma,\nu}&\coloneqq
 \big[\sqrt a-\gamma\sqrt{\nu(1-\nu)}\big]_+.
 \label{eq:GAD-scalars}
\end{align}
The output is $\left(\begin{smallmatrix}1-b&\sqrt a\,z\\\sqrt a\,z^*&b\end{smallmatrix}\right)$, so expanding its determinant gives
\begin{align}
 \det\mathcal G_{\gamma,\nu}(\tau)
 &=b_{\gamma,\nu}(t)\big(1-b_{\gamma,\nu}(t)\big)-a|z|^2\\
 &=b_{\gamma,\nu}(t)\big(1-b_{\gamma,\nu}(t)\big)-at(1-t)+au\\
 &=\Delta_{\gamma,\nu}(t)+au.
 \label{eq:GAD-determinant}
\end{align}
For $u=0$, the determinant is at least $\gamma\nu(1-\nu)>0$ when both parameters are in $(0,1)$. This proves the asserted lack of pure outputs.

The Choi concurrence calculated from~\eqref{eq:GAD-Choi} is $c_{\gamma,\nu}$: its only nonzero antidiagonal entry contributes $\sqrt a$, while the square root of the two middle diagonal entries contributes $\gamma\sqrt{\nu(1-\nu)}$. The local-filter identity of Appendix~\ref{app:concurrence} therefore gives
\begin{equation}
 f_{\mathcal G_{\gamma,\nu}}(\tau)
 =\ent(c_{\gamma,\nu}^{\,2}u).
 \label{eq:GAD-formation}
\end{equation}
Here $\ent$ is the qubit determinant-entropy function from~\eqref{eq:det-entropy}. Thus the conditional penalty is defined as
\begin{equation}
 g_{\gamma,\nu}(t,u)\coloneqq
 \ent\big(\Delta_{\gamma,\nu}(t)+au\big)
 -\ent(c_{\gamma,\nu}^{\,2}u).
 \label{eq:GAD-penalty}
\end{equation}
Define its minimum at fixed population and its lower convex envelope as
\begin{equation}
 \ell_{\gamma,\nu}(t)\coloneqq
 \min_{0\leq u\leq t(1-t)}g_{\gamma,\nu}(t,u),
 \qquad \operatorname{co}\ell_{\gamma,\nu}
 \text{ is the greatest convex function below }\ell_{\gamma,\nu}.
 \label{eq:GAD-envelope-definition}
\end{equation}
The minimization is over a compact interval. Continuity follows, including at the endpoints, by writing $u=v t(1-t)$ with $v\in[0,1]$ and using continuity on a fixed compact square.

\begin{proposition}[Exact fixed-input envelope]\label{prop:GAD-VF-envelope}
For the generalized amplitude-damping Choi source,
\begin{equation}
 \VF(\theta^{\gamma,\nu})
 =h_2\big(b_{\gamma,\nu}(1/2)\big)
 -(\operatorname{co}\ell_{\gamma,\nu})(1/2).
 \label{eq:GAD-VF-envelope}
\end{equation}
At most four possibly mixed input states are needed to attain the corresponding constrained ensemble optimization. No convexity of $\ell_{\gamma,\nu}$ itself is assumed.
\end{proposition}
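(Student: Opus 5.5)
The plan is to start from Proposition~\ref{prop:Choi-fixed}, which already reduces $\VF(\theta^{\gamma,\nu})$ to
\[
S(\mathcal G_{\gamma,\nu}(\pi))-\inf\Big\{\sum_xp_xg_{\mathcal G_{\gamma,\nu}}(\tau^x):\sum_xp_x\tau^x=\pi\Big\},
\]
with $\pi=I/2$ and possibly mixed $\tau^x$. The first term is explicit. The output of $\pi$ is diagonal with excited population $b_{\gamma,\nu}(1/2)$, so it equals $h_2(b_{\gamma,\nu}(1/2))$. By~\eqref{eq:GAD-determinant} and~\eqref{eq:GAD-formation}, each penalty is $g_{\gamma,\nu}(t_x,u_x)$, where $t_x$ is the population and $u_x$ the determinant of $\tau^x$. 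What remains is to show that the constrained infimum equals $(\operatorname{co}\ell_{\gamma,\nu})(1/2)$.

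\textbf{Lower bound on the infimum.} The constraint $\sum_xp_x\tau^x=I/2$ forces $\sum_xp_xt_x=1/2$ and also $\sum_xp_xz_x=0$. Since $g_{\gamma,\nu}(t,u)\geq\ell_{\gamma,\nu}(t)\geq(\operatorname{co}\ell_{\gamma,\nu})(t)$ pointwise, Jensen's inequality for the convex envelope gives
\[
\sum_xp_xg\geq\sum_xp_x(\operatorname{co}\ell_{\gamma,\nu})(t_x)\geq(\operatorname{co}\ell_{\gamma,\nu})(1/2).
\]
This step needs only the population constraint. The coherence constraint is simply discarded here.

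\textbf{Attainment.} Because $\ell_{\gamma,\nu}$ is continuous on the compact interval $[0,1]$, the value $(\operatorname{co}\ell_{\gamma,\nu})(1/2)$ equals $\sum_jq_j\ell_{\gamma,\nu}(t_j)$ for at most two points $t_j$ with $\sum_jq_jt_j=1/2$. This follows from Carath\'eodory's theorem in dimension one, applied to the compact graph, exactly as in Proposition~\ref{prop:cardinality}. For each $t_j$ I would choose a minimizing determinant $u_j$ and a coherence $z_j$ with $|z_j|^2=t_j(1-t_j)-u_j$. These states do not yet average to $I/2$, because the coherences need not cancel.

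The fix uses phase covariance of $\mathcal G_{\gamma,\nu}$: the channel commutes with conjugation by $\sigma_z$. Conjugation by $\sigma_z$ maps $z\mapsto-z$ and leaves $t$ and $u$ unchanged, so the penalty is unchanged as well. This covariance must be checked, since $g$ is defined through $u$ only via the determinant and concurrence formulas, both of which are invariant under $z\mapsto -z$. I would therefore replace each state $\tau(t_j,z_j)$ by the equal mixture of $\tau(t_j,z_j)$ and $\tau(t_j,-z_j)$ as two separate ensemble members. The resulting ensemble has at most four members, average exactly $I/2$, and average penalty $\sum_jq_j\ell_{\gamma,\nu}(t_j)$. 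This establishes equality and the four-member cardinality claim. No convexity of $\ell_{\gamma,\nu}$ is used anywhere.

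\textbf{Main obstacle.} I expect the delicate point to be the attainment step: arguing that Carath\'eodory applied to $\ell_{\gamma,\nu}$ really yields the envelope value with two points, including at the endpoints $t\in\{0,1\}$, where the $u$-interval degenerates. For this I would rely on the continuity of $\ell_{\gamma,\nu}$ established just before the proposition and on compactness of its graph. I would also confirm that the infimum in Proposition~\ref{prop:Choi-fixed} allows mixed members and finite ensembles, so that the attaining ensemble is admissible.
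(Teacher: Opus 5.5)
Your proposal is correct and follows essentially the same route as the paper. The lower bound comes from Jensen's inequality under the population constraint, and attainment uses two contact points of the convex envelope, each split into a pair of states with opposite coherences and halved weights, realized through the Choi instrument correspondence. There are two cosmetic differences. First, the paper gets the two points from a supporting line at $t=1/2$, whereas you need the Fenchel--Eggleston form of Carath\'eodory's theorem on the connected graph in $\mathbb{R}^2$; plain Carath\'eodory would give three points and hence six states. Second, the paper reads the equality of the paired penalties directly off the $(t,u)$-dependence in~\eqref{eq:GAD-penalty} instead of from $\sigma_z$-covariance.
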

\begin{proof}
Every ensemble averaging to $I/2$ has $\sum_xp_xt_x=1/2$. Each penalty obeys
$g_{\gamma,\nu}(t_x,u_x)\geq\ell_{\gamma,\nu}(t_x)
\geq(\operatorname{co}\ell_{\gamma,\nu})(t_x)$. Hence
\begin{equation}
 \sum_xp_xg_{\gamma,\nu}(t_x,u_x)
 \geq\sum_xp_x(\operatorname{co}\ell_{\gamma,\nu})(t_x)
 \geq(\operatorname{co}\ell_{\gamma,\nu})(1/2).
 \label{eq:GAD-envelope-lower}
\end{equation}
The second inequality follows from Jensen's inequality. Proposition~\ref{prop:Choi-fixed} gives the required upper bound on $\VF$.

For equality, the lower boundary of the convex hull of the graph of a continuous function on an interval can be attained at a prescribed population using at most two graph points. Indeed, take a supporting line at population $1/2$. The graph points in an optimal decomposition lie on that line; the smallest and largest contact populations bracket $1/2$ and suffice to reproduce it. Write those populations as $t_1,t_2$ and their weights as $\lambda_1,\lambda_2$, with $\lambda_1t_1+\lambda_2t_2=1/2$. A single point is allowed. Choose a minimizing determinant $u_j$ at each $t_j$ and set
\begin{equation}
 \tau_j^\pm\coloneqq
 \begin{pmatrix}
 1-t_j&\pm\sqrt{t_j(1-t_j)-u_j}\\
 \pm\sqrt{t_j(1-t_j)-u_j}&t_j
 \end{pmatrix}.
 \label{eq:GAD-opposite-coherences}
\end{equation}
These states are positive, have determinant $u_j$, and have the same penalty. The ensemble with weights $\lambda_j/2$ cancels the coherences and has average $I/2$. Its average penalty equals the convex-envelope value. The converse Choi construction in Proposition~\ref{prop:Choi-fixed} realizes it by an instrument, proving attainability and the four-state bound.
\end{proof}

This is the fixed-average specialization of the environment-measured convexification in~\cite{Fang26}, with an explicit realization of its contact points as source instruments. It should not be presented as an improvement over every fixed-input use of that earlier method. Its role here is to evaluate a one-way state converse while preserving the constraint imposed by the source.

\subsection{A certified interval for the thermal example}\label{subsec:thermal-interval}
For $\rho^\star$, set
\begin{equation}
 \alpha\coloneqq0.4287673,\qquad \beta\coloneqq0.7391729,
 \qquad c_*\coloneqq1/\sqrt2-3/20.
 \label{eq:thermal-certificate-constants}
\end{equation}
All finite decimals in this certificate are exact rational constants. The global inequality
\begin{equation}
 g_{1/2,1/10}(t,u)\geq\alpha+\beta(t-1/2)
 \quad\text{for }0\leq t\leq1,\ 0\leq u\leq t(1-t)
 \label{eq:thermal-affine-certificate}
\end{equation}
is verified by an exhaustive outward-interval calculation detailed in Appendix~\ref{app:thermal-cert}. Its checker covers the complete domain; it does not trust the output of a numerical optimizer. Averaging~\eqref{eq:thermal-affine-certificate} over an arbitrary admissible ensemble cancels its linear term and yields
\begin{equation}
 \VF(\rho^\star)\leq h_2(3/10)-\alpha
 =0.4525235992306926\ldots<0.45252360.
 \label{eq:thermal-VF-upper}
\end{equation}

For a lower bound on the value of the potential, choose $t=1/2$ and the exact rational determinant $u_0\coloneqq0.0583061$, with the two opposite coherences in~\eqref{eq:GAD-opposite-coherences}. This ensemble is feasible and has objective
\begin{equation}
 h_2(3/10)-\ent(17/200+u_0/2)+\ent(c_*^2u_0)
 =0.4525235098589629\ldots>0.45252350.
 \label{eq:thermal-VF-lower}
\end{equation}
Consequently,
\begin{equation}
 0.45252350<\VF(\rho^\star)<0.45252360.
 \label{eq:thermal-VF-interval}
\end{equation}
A feasible instrument supplies the lower inequality on the potential; the global affine certificate, not that instrument, supplies the capacity upper bound. The distinction is important because maximizing over a sampled set of instruments alone would not establish a converse.

For an achievable randomness rate, instead measure Alice in the $X$ basis. The equiprobable Bob states are
\begin{equation}
 \omega_B^\pm=\begin{pmatrix}7/10&\pm1/(2\sqrt2)\\
                    \pm1/(2\sqrt2)&3/10\end{pmatrix}.
\end{equation}
Their Bloch length is $\sqrt{33/50}$, so one-way reconciliation achieves
\begin{equation}
 L_*\coloneqq h_2(3/10)-h_2\!\left(\frac{1+\sqrt{33/50}}2\right)
 =0.4322690641609457\ldots.
 \label{eq:thermal-achievable}
\end{equation}
Combining~\eqref{eq:thermal-achievable}, Theorem~\ref{thm:VF-oneway}, and~\eqref{eq:thermal-VF-upper} proves
\begin{equation}
 0.43226906<\Rone(\rho^\star)\leq\VF(\rho^\star)<0.45252360.
 \label{eq:thermal-operational-interval}
\end{equation}
The interval width is less than $0.020255$ bits per source copy. We do not claim that either endpoint is the exact rate. Theorem~\ref{thm:VF-oneway} also gives, for every $n\geq1$ and $\eps\in[0,1)$,
\begin{equation}
 R_{\rightarrow}^{n,\eps}(\rho^\star)
 <\frac{0.45252360+\overline h_2(\sqrt\eps)/n}{1-\sqrt\eps}.
 \label{eq:thermal-finite-randomness}
\end{equation}
The correction is independent of the classical communication alphabet.

The nearly optimal potential instrument is an unsharp $X$ measurement with effects $(I\pm s\sigma_x)/2$, where $s\coloneqq\sqrt{1-4u_0}\approx0.87565724$, and with its quantum output retained. Its two potential contributions are approximately $0.31995541$ bits of measurement-induced correlation and $0.13256810$ bits of retained entanglement of formation. Their sum is a potential value, not an achievable net randomness rate. Figure~\ref{fig:thermal-penalty} shows why an affine bound on the minimized penalty controls even ensembles that use different populations.

\begin{figure}[t]
 \centering\begin{tikzpicture}
\begin{axis}[width=.95\linewidth,height=7.0cm,xlabel={Input excited-state population $t$},ylabel={Penalty (bits)},xmin=0,xmax=1,ymin=0,ymax=1.05,grid=major,legend style={font=\footnotesize,at={(.02,.98)},anchor=north west,draw=none,fill=white},tick label style={font=\small}]
\addplot[thick] table[x=t,y=penalty,col sep=comma]{data/thermal_penalty.csv};
\addlegendentry{Numerical minimum over the determinant}
\addplot[thick,dashed] table[x=t,y=affine,col sep=comma]{data/thermal_penalty.csv};
\addlegendentry{Certified affine lower bound}
\addplot[dotted] coordinates {(.5,0)(.5,.4287673)};
\addplot[only marks,mark=*,mark size=1.7pt] coordinates {(.5,.4287673)};
\node[anchor=south east,font=\footnotesize] at (axis cs:.49,.46) {fixed average $t=1/2$};
\end{axis}
\end{tikzpicture}
 \caption{The minimized generalized amplitude-damping penalty at $\gamma=1/2,\nu=1/10$, and the globally certified affine function in~\eqref{eq:thermal-affine-certificate}. The plotted minimum is a numerical illustration; the inequality over every population and determinant is established separately by the interval checker. An ensemble with average input $I/2$ has average population $1/2$, so averaging the affine function always gives $\alpha$, irrespective of the ensemble. This is the upper-bound mechanism in~\eqref{eq:thermal-VF-upper}.}
 \label{fig:thermal-penalty}
\end{figure}

\subsection{Comparison with previous state converses}\label{subsec:thermal-comparison}
The improvement should be assessed against the correct state quantities. In~\cite[Section~VI]{LRWW23}, fixing one factor of a bilinear comparison gives two computable restrictions, denoted here by $\Upsilon_A(\rho)$ and $\Upsilon_B(\rho)$. These are not the channel-$\Upsilon$ quantity used in Section~\ref{sec:depolarizing}. Let $T_B$ denote partial transpose in the chosen basis. For a fixed target state $\rho_{AB}$, define
\begin{equation}
 \begin{aligned}
 \mathfrak F_A(\rho)\coloneqq\{\sigma_{AB}\geq0:\ &\exists\,Z_{AB},L_B\text{ Hermitian},\quad
 T_B(Z_{AB}\pm\sigma_{AB})\geq0,\\
 &\rho_A\otimes L_B\pm Z_{AB}\geq0,\quad\Tr L_B\leq1\},
 \end{aligned}
 \label{eq:state-Upsilon-feasible-A}
\end{equation}
with both signs required. Define $\mathfrak F_B(\rho)$ by replacing $\rho_A\otimes L_B$ with $K_A\otimes\rho_B$ and $\Tr L_B\leq1$ with $\Tr K_A\leq1$. The corresponding relative-entropy quantities are
\begin{equation}
 \Upsilon_A(\rho)\coloneqq\inf_{\sigma\in\mathfrak F_A(\rho)}D(\rho\|\sigma),
 \qquad
 \Upsilon_B(\rho)\coloneqq\inf_{\sigma\in\mathfrak F_B(\rho)}D(\rho\|\sigma).
 \label{eq:state-Upsilon-definition}
\end{equation}
The positive comparison operators need not have unit trace. The earlier results give
$\Rtwo(\rho)\leq\Gamma(\rho)\leq\min\{\Upsilon_A(\rho),\Upsilon_B(\rho)\}$, together with stronger one-shot and strong-converse variants~\cite{LRWW23,NW24}.

To prove that a new upper bound is smaller, a feasible point of~\eqref{eq:state-Upsilon-definition} is insufficient: it only upper-bounds the older quantity. We need a lower bound on its optimized value. Appendix~\ref{app:state-upsilon-cert} constructs global supporting-hyperplane certificates and evaluates them with outward intervals. They give
\begin{equation}
 \Upsilon_A(\rho^\star)>0.66600769,\qquad
 \Upsilon_B(\rho^\star)>0.66688169.
 \label{eq:thermal-old-certificates}
\end{equation}
Thus our upper bound $0.45252360$ is strictly below both optimized restrictions. We do not infer an ordering with the unrestricted $\Gamma$ minimization, and the improvement in the displayed numerical threshold concerns the vanishing-error one-way rate. Theorem~\ref{thm:VF-oneway} also gives its finite-error version, but not a two-way or strong-converse bound.

A second benchmark uses more structure than a receiver-entropy estimate. The source is a mixture of the zero-temperature damping and raising Choi states. Expanding $n$ copies gives a mixture of tensor products of the corresponding channels. For a fixed POVM on Alice's Choi shares, the induced input ensemble has average $I/2^n$ for every component. Holevo information is convex in the channel for a fixed ensemble. The constrained Holevo additivity from~\cite[Corollary~5.1]{TZBW26} bounds each component's Holevo information by $nL_{\rightarrow}(\gamma)$; the raising case is related by bit flips, which preserve the average. Averaging the bounds and optimizing Alice's measurement gives
\begin{equation}
 \Rone(\theta^{\gamma,\nu})\leq L_{\rightarrow}(\gamma).
 \label{eq:thermal-component-bound}
\end{equation}
For clarity, the constrained step uses the output--environment entropy roof, not merely the unconstrained classical capacity: if $\widehat S_\mathcal N(\tau)$ is the minimum average output entropy over pure ensembles of average $\tau$, then the cited theorem gives
\begin{equation}
 \widehat S_{\mathcal N_1\otimes\cdots\otimes\mathcal N_n}(I/2^n)
 \geq\sum_i\widehat S_{\mathcal N_i}(I/2).
\end{equation}
The average output is a product, so its entropy is the sum of its marginal entropies. Subtracting proves the component Holevo bound used above. At $\gamma=1/2$,~\eqref{eq:thermal-component-bound} is $0.4566992218\ldots$, still strictly larger than~\eqref{eq:thermal-VF-upper}.

\begin{table}[t]
 \centering
 \caption{Benchmarks for $\rho^\star$, in bits per source copy. The older-converse row is a lower certificate on the value of a published upper-bound optimization, not a rate claimed to be achievable or a numerical upper endpoint for the operational rate.}
 \label{tab:thermal-bounds}
 \begin{tabular}{@{}p{8.0cm}r@{}}\toprule
 Quantity & Value or certified statement\\\midrule
 Explicit $X$-measurement achievable rate $L_*$ & $0.4322690642\ldots$\\
 New state potential $\VF(\rho^\star)$ & $(0.45252350,\,0.45252360)$\\
 Exact-component rate upper bound & $0.4566992218\ldots$\\
 Value of $\min\{\Upsilon_A,\Upsilon_B\}(\rho^\star)$ & $>0.66600769$\\
 Receiver entropy $S(B)_{\rho^\star}$ & $0.8812908992\ldots$\\\bottomrule
 \end{tabular}
\end{table}

Fang's generalized-amplitude-damping channel bound also uses the penalty in~\eqref{eq:GAD-penalty}, together with an output-measured penalty, and maximizes over the average occupation~\cite{Fang26}. Retaining the fixed input occupation in its environment-measured part recovers~\eqref{eq:GAD-VF-envelope}. We accordingly credit this overlap rather than claim improvement over all refinements of that method. The comparison in Table~\ref{tab:thermal-bounds} has two narrower, proved meanings: improvement over the specified state-$\Upsilon$ restrictions and over the exact-component estimate~\eqref{eq:thermal-component-bound}.

\subsection{A nearby state outside the thermal family}\label{subsec:thermal-perturbation}
The improvement need not depend on one exactly specified thermal channel. Define a slightly noisier state by
\begin{equation}
 \rho_\eta^\star\coloneqq(1-\eta)\rho^\star+\eta I_{AB}/4,
 \qquad 0\leq\eta\leq1.
 \label{eq:thermal-perturbed-state}
\end{equation}
Its reference marginal remains $I/2$. For a fixed input ensemble with that average, both the output Holevo information and each retained entanglement term are convex in the channel. The supremum over this common constrained set is therefore convex as well. The replacer channel has zero constrained potential, giving
\begin{equation}
 \VF(\rho_\eta^\star)\leq(1-\eta)\VF(\rho^\star).
 \label{eq:thermal-perturbed-upper}
\end{equation}
This does not assert convexity of the state potential under arbitrary mixtures with different reference marginals.

For an $X$-basis measurement, the average Bob excited population is $3/10+\eta/5$, and the conditional Bloch length is $(1-\eta)\sqrt{33/50}$. Thus an achievable rate is
\begin{equation}
 L_\eta\coloneqq h_2(3/10+\eta/5)
 -h_2\!\left(\frac{1+(1-\eta)\sqrt{33/50}}2\right).
 \label{eq:thermal-perturbed-lower}
\end{equation}
At $\eta=1/100$, outward evaluation and~\eqref{eq:thermal-perturbed-upper} imply
\begin{equation}
 0.42154663<\Rone(\rho_{1/100}^\star)
 \leq\VF(\rho_{1/100}^\star)<0.44799837.
 \label{eq:thermal-perturbed-interval}
\end{equation}
The comparison certificates in Appendix~\ref{app:state-upsilon-cert} also give
\begin{equation}
 \Upsilon_A(\rho_{1/100}^\star)>0.53805672,
 \qquad\Upsilon_B(\rho_{1/100}^\star)>0.52363972.
\end{equation}
Here the upper bound on $\VF$ is a convexity estimate, not an exact evaluation. Figure~\ref{fig:thermal-perturbation} illustrates the achievable and upper curves for this perturbation family without depicting an unknown optimum as an evaluated curve.

\begin{figure}[t]
 \centering\begin{tikzpicture}
\begin{axis}[width=.95\linewidth,height=6.8cm,xlabel={Additional white-noise weight $\eta$},ylabel={Bits per source copy},xmin=0,xmax=.1,xtick={0,.02,.04,.06,.08,.1},scaled x ticks=false,xticklabel style={/pgf/number format/fixed,/pgf/number format/precision=2},ymin=.32,ymax=.47,grid=major,legend style={font=\footnotesize,at={(.98,.98)},anchor=north east,draw=none,fill=white},tick label style={font=\small}]
\addplot[thick,dashed] table[x=eta,y=upper,col sep=comma]{data/thermal_perturbation.csv};
\addlegendentry{Certified upper bound on $V_{\rm F}^{\rightarrow}$}
\addplot[thick] table[x=eta,y=achievable,col sep=comma]{data/thermal_perturbation.csv};
\addlegendentry{Achievable $X$-measurement rate}
\addplot[dotted] coordinates {(.01,.32)(.01,.4479983632)};
\end{axis}
\end{tikzpicture}
 \caption{A neighborhood of the full-rank thermal source. The lower curve is the explicit measurement rate $L_\eta$ in~\eqref{eq:thermal-perturbed-lower}. The upper curve is the certified one-way bound $(1-\eta)[h_2(3/10)-\alpha]$ from~\eqref{eq:thermal-perturbed-upper}. It is not claimed to equal the state potential for $\eta>0$. The stated old-bound certificates apply at $\eta=0$ and $\eta=0.01$; no old-bound curve is inferred between these points.}
 \label{fig:thermal-perturbation}
\end{figure}

The thermal sources demonstrate a useful one-way converse without requiring an exact operational solution. We next return to the isotropic family for a different purpose: its one-way rate is exactly known, allowing us to prove that alternating communication directions changes the attainable randomness and invalidates a naive two-way extension of the same potential.

\section{Isotropic states and the state--channel bridge}\label{sec:isotropic}
We now specialize the one-way theory to qubit isotropic states. This example is useful for two reasons. First, their one-way distillable randomness can be evaluated exactly, so the looseness of the state potential can be seen explicitly. Second, an isotropic state is the normalized Choi state of a depolarizing channel, allowing the static quantity $\VF$ to be identified with the channel quantity $\UF$ already evaluated in Section~\ref{sec:depolarizing}. Having seen an effective thermal-state converse, we use this exactly solved family to isolate the change caused by a second direction of communication.

Define
\begin{equation}
 \rho^{(p)}_{AB}\coloneqq(1-p)\Phi_{AB}+p\frac{I_{AB}}4,
 \qquad 0\leq p\leq1,
 \label{eq:isotropic-state}
\end{equation}
where $\Phi_{AB}$ is a maximally entangled qubit state. Define the qubit depolarizing channel as $\cD_p(\tau)\coloneqq(1-p)\tau+pI_2/2$. Equivalently,
\begin{equation}
 \rho^{(p)}_{AB}=(\id_A\otimes\cD_{p,B})(\Phi_{AB}).
 \label{eq:isotropic-Choi}
\end{equation}

\subsection{Exact one-way distillable randomness}
The one-way distillable randomness of isotropic states is particularly simple. The following statement is a direct consequence of Devetak--Winter's regularized formula and King's additivity theorem~\cite{DW04,King03}. We record the derivation as an exact benchmark rather than an independent new coding theorem.

\begin{proposition}[One-way isotropic rate]\label{prop:isotropic-oneway}
For every $p\in[0,1]$,
\begin{equation}
 \Rone(\rho^{(p)})=1-h_2(p/2).
 \label{eq:isotropic-oneway}
\end{equation}
The same value holds when the direction of one-way communication is reversed.
\end{proposition}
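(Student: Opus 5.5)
The plan is to combine the Devetak--Winter regularized formula with the Choi-state correspondence and King's additivity theorem. Devetak--Winter~\cite{DW04} gives $\Rone(\rho)=\lim_n\frac1n\mathcal J_n(\rho)$, with $\mathcal J_n$ the measurement-induced correlation~\eqref{eq:measurement-correlation-n}. For the lower bound, single-letter achievability suffices. Alice measures her share of $\rho^{(p)}$ in the computational basis. The outcomes are equiprobable, and Bob's conditional states are $\cD_p(\proj j)$, which have eigenvalues $1-p/2$ and $p/2$. His average state is $I/2$. The Holevo information is therefore $1-h_2(p/2)$, and one-way reconciliation with quantum side information~\cite{DW03,DW04} achieves this net rate.

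For the upper bound, I will use~\eqref{eq:isotropic-Choi}. A POVM $(\Lambda_x)_x$ on $A^n$ applied to $(\Phi_{AB})^{\otimes n}$ steers Bob's share. As in the proof of Proposition~\ref{prop:Choi-fixed}, it prepares an input ensemble $p_x\coloneqq\Tr\Lambda_x/2^n$, $\tau^x\coloneqq\Lambda_x^{\mathsf T}/(2^np_x)$, whose average is $I/2^n$. Bob then holds $\cD_p^{\otimes n}(\tau^x)$. Consequently,
\begin{equation}
 \mathcal J_n(\rho^{(p)})\leq\chi(\cD_p^{\otimes n}),
\end{equation}
where the constraint on the average can simply be dropped. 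King's additivity theorem~\cite{King03} gives $\chi(\cD_p^{\otimes n})=n\chi(\cD_p)=n[1-h_2(p/2)]$. Dividing by $n$ and taking the limit proves the forward statement.

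For the reverse direction, I will use the symmetry of the isotropic state. Since $\Phi$ is invariant under swapping $A$ and $B$, and $I/4$ is as well, the state $\rho^{(p)}$ is swap-invariant. A one-way protocol from Bob to Alice is therefore a forward protocol on the swapped state, which is the same state. The reverse rate thus equals the forward rate.

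The main obstacle is making sure the Devetak--Winter formula is applied under the paper's exact cost convention. That convention counts the full alphabet size $\log_2 d_L$, uses a fidelity criterion, and supplies no free shared randomness. I do not expect this to cause trouble. Achievability is standard and compatible with these costs. On the converse side, any remaining doubt can be removed by invoking the paper's own finite-blocklength argument: the first inequality of~\eqref{eq:finite-randomness-converse} already bounds the net yield by $\mathcal J_n(\rho)+\overline h_2(\sqrt\eps)$. Combined with the King bound above, this yields the converse directly, without quoting the regularized formula.
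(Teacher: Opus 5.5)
Your proposal is correct and matches the paper's proof. Both use the Devetak--Winter regularization of $\mathcal J_n$, steering of Alice's POVM into an input ensemble for $\cD_p^{\otimes n}$ with average $I/2^n$, King's additivity theorem, single-copy computational-basis achievability, and swap symmetry for the reverse direction. The differences are cosmetic: the paper bounds the Holevo information by $n-\min_\tau S(\cD_p^{\otimes n}(\tau))$ and invokes King's minimum-output-entropy additivity rather than Holevo-capacity additivity. Your optional use of the first inequality in~\eqref{eq:finite-randomness-converse} is a valid and more self-contained way to obtain the converse under the paper's exact cost and fidelity conventions.
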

\begin{proof}
The Devetak--Winter formula expresses the asymptotic one-way rate as the regularization of the measurement correlation $\mathcal J_n$ in~\eqref{eq:measurement-correlation-n}~\cite{DW04}. This is a maximization of the Holevo information generated at Bob by a measurement on Alice's $n$ systems, not an identification of $\mathcal J_n/n$ with the finite-error operational quantity. Let $(M_x)_x$ be such a POVM. Using the Choi identity, the conditional state on Bob is
\begin{equation}
 \cD_p^{\otimes n}\!\left(\frac{M_x^{\mathsf T}}{\Tr M_x}\right),
 \qquad p_x=\frac{\Tr M_x}{2^n},
 \label{eq:isotropic-conditional-output}
\end{equation}
while the average state is $I/2^n$. Therefore the resulting Holevo information is at most
\begin{equation}
 n-\min_{\tau}S(\cD_p^{\otimes n}(\tau)).
 \label{eq:isotropic-holevo-upper}
\end{equation}
King's minimum-output-entropy additivity theorem for the depolarizing channel~\cite{King03} evaluates the minimum as $nh_2(p/2)$. Hence no $n$-copy measurement exceeds $n[1-h_2(p/2)]$. Measuring each share in the computational basis attains this value already on one copy. Regularization proves~\eqref{eq:isotropic-oneway}. Exchange symmetry gives the reverse-direction statement.
\end{proof}

This formula is also the lower curve used for the isotropic example in~\cite{LRWW23}, and it is the first-order limit underlying the corresponding discussion in~\cite{NW24}. The next section shows that allowing a second communication round can do strictly better.

\subsection{Equality of the static and dynamical potentials}
The Choi representation~\eqref{eq:isotropic-Choi} makes $\VF$ a constrained version of the channel optimization. An instrument on Alice's share of $\Phi$ prepares, after transposition, an ensemble of input states for $\cD_p$ whose average is maximally mixed. Alice's retained system purifies each prepared input, so the two terms in~\eqref{eq:VF-definition} become exactly the two terms in the $\UF$ ensemble objective. Figure~\ref{fig:isotropic-bridge} summarizes this state--channel correspondence.

\begin{figure}[t]
 \centering
 \begin{tikzpicture}[font=\small,>=Latex]
  \node[op,minimum width=1.15cm] (bell) at (0,1.25) {$\Phi_{AB_0}$};
  \node[op,minimum width=1.35cm] (dep) at (2.25,0.35) {$\mathcal D_p$};
  \node[op,minimum width=1.45cm] (inst) at (2.25,2.15) {$\mathcal M$};
  \node[op,minimum width=1.4cm] (xa) at (4.65,2.15) {$X A'$};
  \node[op,minimum width=1.35cm] (bout) at (4.65,0.35) {$B$};
  \draw[qwire] (bell.east) -- ++(.55,0) |- (inst.west) node[pos=.72,above] {$A$};
  \draw[qwire] (bell.east) -- ++(.55,0) |- (dep.west) node[pos=.72,below] {$B_0$};
  \draw[qwire] (inst.east) -- (xa.west);
  \draw[qwire] (dep.east) -- (bout.west);
  \node[align=center] at (2.35,-.55) {static picture: $\rho^{(p)}_{AB}=(\id\otimes\mathcal D_p)(\Phi)$};

  \node[draw,dashed,rounded corners,inner sep=8pt,fit=(bell)(inst)(dep)(xa)(bout)] (topbox) {};

  \node[op,minimum width=1.45cm] (ens) at (0,-2.0) {$\{p_x,\rho^x\}$};
  \node[op,minimum width=1.35cm] (dep2) at (2.55,-2.0) {$\mathcal D_p$};
  \node[op,minimum width=1.65cm] (out2) at (5.15,-2.0) {$\{p_x,\mathcal D_p(\rho^x)\}$};
  \draw[qwire] (ens.east) -- (dep2.west);
  \draw[qwire] (dep2.east) -- (out2.west);
  \node[align=center] at (2.55,-2.85) {dynamic picture: an input ensemble with $\sum_xp_x\rho^x=I/2$};

  \draw[<->,thick,densely dashed] (5.75,1.55) -- (5.75,-1.45)
    node[midway,right,align=left] {Choi steering /\\transpose correspondence};
\end{tikzpicture}
 \caption{The state--channel bridge for qubit isotropic states. In the static picture, Alice applies an instrument to her share of the Choi state while Bob's share has passed through $\mathcal D_p$. By the Choi steering correspondence, the instrument prepares an ensemble of channel inputs, up to transposition, whose average is $I/2$. Retaining $A'$ purifies each ensemble member, so the classical-correlation and entanglement-of-formation terms match the two terms in the channel objective $\UF$.}
 \label{fig:isotropic-bridge}
\end{figure}

\begin{proposition}[Isotropic state--channel identity]\label{prop:isotropic-UF}
For every $p\in[0,1]$,
\begin{equation}
 \VF(\rho^{(p)})=\UF(\cD_p).
 \label{eq:isotropic-UF}
\end{equation}
\end{proposition}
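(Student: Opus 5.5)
The plan is to use the fixed-average Choi correspondence of Proposition~\ref{prop:Choi-fixed} for the upper bound, and the Pauli-orbit construction from the proof of Proposition~\ref{prop:dep-UF} for the matching lower bound. Since $\rho^{(p)}=\theta^{\cD_p}$ is the normalized Choi state of $\cD_p$, Proposition~\ref{prop:Choi-fixed} gives
\begin{equation}
 \VF(\rho^{(p)})=S(\cD_p(\pi))-\inf\Big\{\sum_xp_xg_{\cD_p}(\tau^x):\sum_xp_x\tau^x=\pi\Big\}\leq\UF(\cD_p).
\end{equation}
Here $\pi=I_2/2$. So only the reverse inequality needs an argument.

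For the reverse inequality, I would use the fact established in Proposition~\ref{prop:dep-UF}: $\UF(\cD_p)=1-\min_\rho g_p(\rho)$. That proof also shows the value is attained by the Pauli-orbit ensemble $(1/4,P_j\rho_\star P_j^\dagger)_{j=0}^3$ for a minimizing input $\rho_\star$. This ensemble has average exactly $I_2/2$, so it is feasible for the constrained optimization above. Its four penalties all equal $g_p(\rho_\star)$ by unitary covariance. Hence the constrained infimum is at most $\min_\rho g_p(\rho)$. Moreover, $S(\cD_p(\pi))=1$ because $\cD_p$ is unital. Therefore
\begin{equation}
 \VF(\rho^{(p)})\geq1-\min_\rho g_p(\rho)=\UF(\cD_p).
\end{equation}
By the converse construction in Proposition~\ref{prop:Choi-fixed}, this ensemble is realized by the instrument $W_j=\sqrt{2\cdot\tfrac14(P_j\rho_\star P_j^\dagger)^{\mathsf T}}$ on Alice's share, with her retained system purifying each member. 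No separate state-side computation is required.

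The only step needing care is checking that the minimizer exists and that covariance really makes the penalties equal for the transposed ensemble. Both points were already settled: the minimum over $r\in[0,1]$ is attained by continuity, as in Proposition~\ref{prop:dep-UF}, and $g_p$ depends only on the Bloch length, which is invariant under Pauli conjugation. I therefore expect no real obstacle. The only subtlety is making explicit that the average-input constraint, which Proposition~\ref{prop:Choi-fixed} warns cannot be dropped in general, is harmless here because the unconstrained optimizer of $\UF(\cD_p)$ already has the maximally mixed average.
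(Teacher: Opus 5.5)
Your proposal is correct and follows essentially the same route as the paper. The fixed-average Choi correspondence (Proposition~\ref{prop:Choi-fixed}) gives $\VF(\rho^{(p)})\leq\UF(\cD_p)$. The uniform Pauli orbit from Proposition~\ref{prop:dep-UF} already has average $I_2/2$ and preserves every conditional penalty, so an optimal $\UF$ ensemble satisfies the Choi constraint and gives the reverse inequality.
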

\begin{proof}
The preceding Choi-instrument correspondence shows that $\VF(\rho^{(p)})$ is the supremum of the $\UF(\cD_p)$ objective over ensembles with average input $I/2$. It is therefore no larger than $\UF(\cD_p)$. Conversely, Proposition~\ref{prop:dep-UF} shows that for every candidate input state, its uniform Pauli orbit has average $I/2$ and preserves every conditional penalty in the objective. Thus an optimal $\UF$ ensemble can be chosen with the required average, proving the reverse inequality.
\end{proof}

Combining Propositions~\ref{prop:isotropic-oneway} and~\ref{prop:isotropic-UF} gives
\begin{equation}
 1-h_2(p/2)=\Rone(\rho^{(p)})\leq\VF(\rho^{(p)})=\UF(\cD_p).
 \label{eq:isotropic-chain}
\end{equation}
For $0<p<2/3$ the last inequality is strict, as proved in Proposition~\ref{prop:strict-UF}. For $p\geq2/3$, the depolarizing channel is entanglement breaking and equality holds. The equality in that regime will be especially useful as a diagnostic for the two-way problem.

\section{Two-round randomness distillation from isotropic states}\label{sec:twoway-randomness}
The one-way potential tracks one direction of classical extraction together with the entanglement that remains afterward. An unrestricted LOCC protocol can alternate the direction of extraction. We now show that this distinction is not merely technical: on every nontrivial qubit isotropic state, two rounds of communication achieve a strictly larger net randomness rate than the exact one-way optimum.

The construction has a simple operational interpretation. Bob performs a weak measurement, creating a classical variable while retaining a quantum output. He compresses that variable using Alice's quantum system as side information. Alice decodes it gently, so the residual bipartite state is essentially preserved. Only then does Alice make a complementary measurement and compress her new variable using Bob's retained quantum system as side information. Figure~\ref{fig:two-round-randomness} summarizes the order. This order matters: measuring Alice's system before the first decoder would destroy the quantum side information that makes the first message short.

\subsection{Bob's weak measurement and the first message}
Set $q\coloneqq1-p$ and choose a measurement strength $s\in[0,1]$. Bob applies the two-outcome instrument
\begin{equation}
 K_y(s)\coloneqq\sqrt{\frac{I+y s\sigma_z}{2}},\qquad y\in\{-1,+1\}.
 \label{eq:weak-instrument}
\end{equation}
He records $Y=y$ and retains the quantum output $B'$. Since the marginal of $B$ is maximally mixed, $Y$ is an unbiased bit. The normalized conditional state is
\begin{equation}
 \omega^y_{AB'}\coloneqq2(I_A\otimes K_y)\rho^{(p)}_{AB}(I_A\otimes K_y),
 \label{eq:weak-conditional-state}
\end{equation}
with marginals
\begin{equation}
 \omega_A^y=\frac{I+yqs\sigma_z}{2},\qquad
 \omega_{B'}^y=\frac{I+ys\sigma_z}{2}.
 \label{eq:weak-marginals}
\end{equation}
The full classical--quantum state is defined as
$\omega_{YAB'}\coloneqq\frac12\sum_{y=\pm1}\proj y_Y\otimes\omega^y_{AB'}$.

Thus
\begin{equation}
 I(Y;A)_\omega=1-h_2\!\left(\frac{1+qs}{2}\right),
 \qquad
 H(Y|A)_\omega=h_2\!\left(\frac{1+qs}{2}\right).
 \label{eq:first-compression-rate}
\end{equation}

On $n$ copies, classical data compression with quantum side information~\cite{DW03} therefore allows Bob to send a fixed-length description $L_1=f_1(Y^n)$ at any asymptotic rate larger than $H(Y|A)_\omega$. For each received description $\ell_1$, let $(\Lambda_{\widehat y^n|\ell_1})_{\widehat y^n}$ be Alice's decoding POVM on $A^n$. The codes can be chosen so that the average decoding error $\epsilon_1$ tends to zero.

The first decoder must preserve the correlations needed in the second round. Implement its successful effect by the square-root operation. If $\omega^{y^n}_{A^nB'^n}$ denotes the ideal conditional state before decoding, the gentle-measurement lemma, with $B'^n$ treated as an untouched reference, gives
\begin{equation}
 \sum_{y^n}p(y^n)\left\|\omega^{y^n}_{A^nB'^n}
 - \left(\sqrt{\Lambda_{y^n|f_1(y^n)}}\otimes I\right)\omega^{y^n}_{A^nB'^n}
   \left(\sqrt{\Lambda_{y^n|f_1(y^n)}}\otimes I\right)\right\|_1
 \leq 2\sqrt{\epsilon_1}.
 \label{eq:first-decoder-gentle}
\end{equation}
Including the incorrect decoding outcomes changes the corresponding classical--quantum state by only a further vanishing amount. Thus there is an implementation for which Alice obtains a copy of $Y^n$ and the post-decoding joint state is $o(1)$ in trace distance from the ideal state in which $Y^n$ was copied without disturbing $A^nB'^n$; see also the gentle-decoding discussion in~\cite{DW03,Wilde17}. This preservation is the resource used in the second round.

\subsection{Alice's complementary measurement and the second message}
After decoding $Y^n$, Alice measures each retained qubit in the $X$ basis and records $X\in\{-1,+1\}$. In the ideal post-decoding state, $X$ is uniform and independent of $Y$ because the marginal in~\eqref{eq:weak-marginals} is diagonal in the $Z$ basis. Explicitly,
\begin{equation}
 \Pr(X=x,Y=y)=\frac14,\qquad x,y\in\{-1,+1\}.
 \label{eq:XY-uniform}
\end{equation}
To derive Bob's conditional state explicitly, write the isotropic state as
$\rho^{(p)}=\frac14[I\otimes I+q(\sigma_x\otimes\sigma_x-\sigma_y\otimes\sigma_y+\sigma_z\otimes\sigma_z)]$.
Alice's $X$-outcome $x$ selects the unnormalized Bob operator $(I+xq\sigma_x)/4$ before his filter. Applying $K_y$, and dividing by the joint probability $1/4$, therefore gives $K_y(I+xq\sigma_x)K_y$. The identities
$K_y^2=(I+ys\sigma_z)/2$ and
$K_y\sigma_xK_y=\sqrt{1-s^2}\,\sigma_x/2$
give Bob's normalized conditional state
\begin{equation}
 \sigma_{B'}^{x,y}=\frac12\left[I+ys\sigma_z+xq\sqrt{1-s^2}\,\sigma_x\right].
 \label{eq:second-conditional-state}
\end{equation}
The classical--quantum state used below is defined as
$\sigma_{XYB'}\coloneqq\frac14\sum_{x,y=\pm1}\proj x_X\otimes\proj y_Y\otimes\sigma_{B'}^{x,y}$.

Its Bloch-vector length is
\begin{equation}
 v(p,s)\coloneqq\sqrt{q^2+(1-q^2)s^2}.
 \label{eq:vps}
\end{equation}
Therefore
\begin{align}
 I(X;B'|Y)_\sigma
 &=h_2\!\left(\frac{1+s}{2}\right)
   -h_2\!\left(\frac{1+v(p,s)}{2}\right),\label{eq:second-mutual-information}\\
 H(X|YB')_\sigma
 &=1-h_2\!\left(\frac{1+s}{2}\right)
   +h_2\!\left(\frac{1+v(p,s)}{2}\right).
 \label{eq:second-compression-rate}
\end{align}
Alice now applies classical compression with quantum side information in the opposite direction, sending $X^n$ to Bob at any rate larger than~\eqref{eq:second-compression-rate}. The first decoder's vanishing disturbance changes the second decoder's final error only by a vanishing amount, by trace-distance contractivity. At the end, both parties possess the pair $(X^n,Y^n)$, which in the ideal calculation is uniform over $4^n$ possibilities.

\begin{figure}[t]
 \centering
 \begin{tikzpicture}[x=1cm,y=1cm,font=\small]
\draw[-{Latex},thin] (.3,4.15)--node[above,font=\footnotesize]{time}(12.75,4.15);
\node[anchor=east] at (0,2.95) {Alice};
\node[anchor=east] at (0,0) {Bob};
\node[op,minimum width=1.65cm] (weak) at (1.4,0) {Weak $Z$\\instrument};
\node[op,minimum width=1.65cm] (dec1) at (5.0,2.95) {Gentle\\decoder};
\node[op,minimum width=2.35cm] (mx) at (8.45,2.95) {Measure $X^n$;\\compress its result};
\node[op,minimum width=1.8cm] (dec2) at (11.25,0) {Decode $X^n$};
\draw[qwire] (.1,2.95)--node[above]{$A^n$}(dec1.west);
\draw[qwire] (dec1.east)--node[above]{$A^n$}(mx.west);
\draw[cwire] (mx.east)--node[above,font=\footnotesize]{$(X^n,Y^n)$}(12.8,2.95);
\draw[qwire] (.05,0)--(weak.west);
\node[font=\footnotesize,anchor=east] at (.13,-.53) {$B^n$};
\draw[qwire] (weak.east)--node[below=4pt]{$B'^n$ retained until the second decoding}(dec2.west);
\node[op,minimum width=1.8cm,font=\footnotesize] (enc1) at (2.5,1.02) {Compress $Y^n$};
\draw[cwire] (weak.north)--(enc1.south);
\draw[cwire] (enc1.north east)--node[sloped,above=3pt]{$L_1$}(dec1.south west);
\draw[cwire] (mx.south)--node[sloped,above=3pt]{$L_2$}(dec2.north);
\draw[cwire] (dec2.east)--(13.3,0);
\node[font=\footnotesize,anchor=east] at (13.3,.68) {$(\widehat X^n,Y^n)$};
\node[font=\footnotesize,align=center,text width=4.4cm] at (6.5,1.25) {First decode $Y^n$ without appreciably disturbing the joint quantum state.};
\node[font=\footnotesize,align=center] at (6.5,-.95) {$L_1$: Bob to Alice\qquad $L_2$: Alice to Bob\qquad both lengths are subtracted.};
\end{tikzpicture}
 \caption{Two-round randomness distillation from an isotropic state. Bob first applies the weak instrument~\eqref{eq:weak-instrument}, retains $B'$, and compresses the classical outcome $Y^n$ using Alice's $A^n$ systems as quantum side information. Alice's decoder is chosen gentle, preserving the residual state. Alice then measures $X^n$ in the complementary basis and compresses it using Bob's retained systems $B'^n$ and the already shared $Y^n$ as quantum side information. The final common random variable is the pair $(X^n,Y^n)$. Both classical message rates are subtracted from the gross two-bit-per-copy randomness.}
 \label{fig:two-round-randomness}
\end{figure}

\subsection{Achievable rate and strict two-way improvement}
Subtracting the two communication rates from the gross rate of two shared bits per copy gives the following theorem.

\begin{theorem}[Two-round isotropic randomness rate]\label{thm:two-round-randomness}
For every $p,s\in[0,1]$,
\begin{equation}
 \Rtwo(\rho^{(p)})\geq r_2(p,s),
 \label{eq:two-round-achievable}
\end{equation}
where
\begin{equation}
 \begin{aligned}
 r_2(p,s)\coloneqq{}&1-h_2\!\left(\frac{1+(1-p)s}{2}\right)
 +h_2\!\left(\frac{1+s}{2}\right)\\
 &-h_2\!\left(\frac{1+\sqrt{(1-p)^2+[1-(1-p)^2]s^2}}{2}\right).
 \end{aligned}
 \label{eq:two-round-rate}
\end{equation}
Equivalently,
\begin{equation}
 r_2(p,s)=I(Y;A)_\omega+I(X;B'|Y)_\sigma.
 \label{eq:two-round-information-form}
\end{equation}
\end{theorem}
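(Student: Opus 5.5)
The proof will formalize the two-round protocol of Figure~\ref{fig:two-round-randomness}, using the single-copy quantities already computed in \eqref{eq:first-compression-rate} and \eqref{eq:second-mutual-information}--\eqref{eq:second-compression-rate}. First comes the bookkeeping. The final shared variable $(X^n,Y^n)$ is exactly uniform on $4^n$ values in the ideal state, by \eqref{eq:XY-uniform} and independence across copies. Its gross yield is therefore $2n$ bits. The two fixed-length messages cost rates $H(Y|A)_\omega+\delta$ and $H(X|YB')_\sigma+\delta$. The net rate is thus
\[
 2-H(Y|A)_\omega-H(X|YB')_\sigma-2\delta.
\]
Since $H(Y)=1$ and $H(X|Y)=1$, this equals $I(Y;A)_\omega+I(X;B'|Y)_\sigma-2\delta$, which is \eqref{eq:two-round-information-form}. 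Substituting the closed forms gives \eqref{eq:two-round-rate}. Letting $\delta\to0$ then gives the bound, provided that the finite-copy fidelity criterion \eqref{eq:randomness-finite-error} is met with vanishing error for all large $n$.

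The error analysis proceeds in order. For round one, I will invoke Devetak--Winter compression with quantum side information for the i.i.d.\ cq source $\omega_{YA}^{\otimes n}$. This yields an encoder $f_1$ and decoding POVMs with average error $\epsilon_1\to0$. Alice implements the decoder as an instrument with square-root Kraus operators and writes the outcome into a register $\widehat Y^n$. By the gentle-measurement bound \eqref{eq:first-decoder-gentle} together with a triangle-inequality step for incorrect outcomes, the actual state on $Y^n\widehat Y^nA^nB'^n$ lies within $O(\sqrt{\epsilon_1})$ in trace distance of the ideal state $\omega_{YAB'}^{\otimes n}$ with $\widehat Y^n=Y^n$. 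This holds jointly with Bob's untouched $B'^n$.

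For round two, Alice measures $X$ on each qubit. By contractivity, the resulting state is $O(\sqrt{\epsilon_1})$-close to $\sigma_{XYB'}^{\otimes n}$ with an extra copy of $Y^n$ at Alice. I then apply compression with quantum side information to the i.i.d.\ source $\sigma_{X(YB')}^{\otimes n}$, where Bob's side information is $Y^nB'^n$ and Bob holds $Y^n$ exactly. Alice uses her copy $\widehat Y^n$ only where needed, and in the ideal state it equals $Y^n$. This gives a second error $\epsilon_2\to0$ on the ideal state. On the actual state the error is at most $\epsilon_2+O(\sqrt{\epsilon_1})$, again by contractivity. The final classical state $(X^n,\widehat Y^n;\widehat X^n,Y^n)$ is therefore within vanishing trace distance of $\overline\Phi^{4^n}$. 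By \eqref{eq:fidelity-trace-distance}, its fidelity tends to one.

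Two technical points remain. First, the alphabet sizes must satisfy $\log_2 d_{L_i}\le n(\text{rate}+\delta)$ for every large $n$, not merely along a subsequence. This holds because the compression theorems hold for all sufficiently large $n$, and the $\liminf$ convention \eqref{eq:two-way-randomness-limit} is then satisfied. Second, there is the case $s=0$ or trivial $p$, where the formula degenerates to the one-way rate and is covered by the same argument. I expect the main obstacle to be the gentle-decoding step, where Alice must obtain $Y^n$ while the state of $A^nB'^n$ remains almost undisturbed. The cleanest route is to define an isometric implementation of Alice's decoder and bound the purified-state distance using the average-error version of the gentle operator lemma, summed over $y^n$ with weights $p(y^n)$. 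The rest is standard contractivity bookkeeping.
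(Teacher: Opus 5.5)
Your proposal is correct and follows essentially the same route as the paper. Both arguments take the net rate $2-H(Y|A)_\omega-H(X|YB')_\sigma$ from two rounds of Devetak--Winter compression with quantum side information. Both use gentle first-round decoding plus an extra $\epsilon_1$ for incorrect outcomes, contractivity through Alice's $X$ measurement and the second code, and Fuchs--van de Graaf to turn the final trace-distance bound (the paper's $2\sqrt{\epsilon_1}+\epsilon_1+2\epsilon_2$) into the fidelity criterion. Your remark that the compression theorems hold for all sufficiently large $n$, so the $\liminf$ convention is met, makes explicit a detail the paper leaves implicit.
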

\begin{proof}
The first compression uses $H(Y|A)$ bits per copy and the second uses $H(X|YB')$ bits per copy. The ideal pair $(X,Y)$ is uniformly distributed, so its entropy is two bits. Hence the net rate is
$2-H(Y|A)-H(X|YB')$, which is~\eqref{eq:two-round-rate} after substituting~\eqref{eq:first-compression-rate} and~\eqref{eq:second-compression-rate}. The equality~\eqref{eq:two-round-information-form} follows from $H(X)=H(Y)=1$ and independence of $X$ and $Y$ in the ideal state.

For completeness, choose first- and second-stage compression codes with errors $\epsilon_1,\epsilon_2\to0$. Equation~\eqref{eq:first-decoder-gentle} and the probability of an incorrect first decoding show that the state entering the second round differs from its ideal counterpart by at most $2\sqrt{\epsilon_1}+\epsilon_1$ in trace norm. Alice's $X$-measurement, the second compression encoder, and Bob's decoder are quantum channels and therefore cannot increase this distance. The ideal second decoder disagrees with probability at most $\epsilon_2$, so the final pair of classical registers differs from the perfectly shared pair $(X^n,Y^n)$ by at most $2\sqrt{\epsilon_1}+\epsilon_1+2\epsilon_2$ in trace norm. Writing $\zeta_n\coloneqq2\sqrt{\epsilon_1}+\epsilon_1+2\epsilon_2$, inequality~\eqref{eq:fidelity-trace-distance} implies fidelity at least $[\max\{0,1-\zeta_n/2\}]^2$, which tends to one. Thus the protocol satisfies~\eqref{eq:randomness-finite-error} in the capacity limit. Both classical descriptions have fixed asymptotic lengths, and both lengths are subtracted. Thus the rate is achievable.
\end{proof}

The formula strictly improves on one-way extraction throughout the open noise interval. Let $q=1-p$. Expanding~\eqref{eq:two-round-rate} around $s=0$ gives
\begin{equation}
 r_2(p,s)=1-h_2(p/2)
 +\frac{1-q^2}{2\ln2}\left(\frac{\operatorname{atanh}q}{q}-1\right)s^2
 +O(s^4).
 \label{eq:two-round-expansion}
\end{equation}
To see the coefficient directly, use
$h_2((1+a s)/2)=1-a^2s^2/(2\ln2)+O(s^4)$ and
$v(p,s)=q+\left(1-q^2\right)s^2/(2q)+O(s^4)$, and then expand the last entropy in~\eqref{eq:two-round-rate} about $(1+q)/2$. This gives~\eqref{eq:two-round-expansion}. For $0<p<1$, one has $0<q<1$ and $\operatorname{atanh}q>q$, so the quadratic coefficient is strictly positive.

\begin{corollary}[Strict two-way-over-one-way separation]\label{cor:two-way-separation}
For every $0<p<1$,
\begin{equation}
 \Rtwo(\rho^{(p)})>\Rone(\rho^{(p)})=1-h_2(p/2).
 \label{eq:strict-two-way-randomness}
\end{equation}
\end{corollary}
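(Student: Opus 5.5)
The plan is to combine Theorem~\ref{thm:two-round-randomness} with Proposition~\ref{prop:isotropic-oneway}. The latter gives $\Rone(\rho^{(p)})=1-h_2(p/2)$ exactly. So it suffices to exhibit, for each fixed $0<p<1$, a measurement strength $s>0$ with $r_2(p,s)>1-h_2(p/2)$. The two-round theorem then gives $\Rtwo(\rho^{(p)})\geq r_2(p,s)>\Rone(\rho^{(p)})$.

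First I would check the endpoint $s=0$ directly. There $v(p,0)=q$ and $h_2(1/2)=1$, so
\begin{equation}
 r_2(p,0)=1-1+1-h_2\!\left(\frac{1+q}{2}\right)=1-h_2(p/2),
\end{equation}
because $(1+q)/2=1-p/2$ and $h_2$ is symmetric. Thus the trivial protocol reproduces the one-way rate, and the question is whether $s$ can be increased from zero with a strict gain.

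Next I would use the expansion~\eqref{eq:two-round-expansion}, verifying its quadratic coefficient as indicated:
\begin{itemize}
\item substitute $h_2((1+as)/2)=1-a^2s^2/(2\ln2)+O(s^4)$ for $a=q$ and $a=1$;
\item expand $v(p,s)=q+(1-q^2)s^2/(2q)+O(s^4)$;
\item Taylor expand $h_2((1+v)/2)$ about $v=q$, using $\frac{d}{dv}h_2\!\left(\frac{1+v}{2}\right)=-\operatorname{atanh}(v)/\ln2$.
\end{itemize}
Collecting the terms, the $s^2$ coefficient is
\begin{equation}
 \frac{1}{2\ln 2}\Big[q^2-1+(1-q^2)\frac{\operatorname{atanh}q}{q}\Big]=\frac{1-q^2}{2\ln2}\left(\frac{\operatorname{atanh}q}{q}-1\right).
\end{equation}
The $O(s^4)$ remainder needs care. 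For fixed $q\in(0,1)$, all the functions involved are real-analytic in $s$ near $0$: the arguments $(1+qs)/2$, $(1+s)/2$ and $(1+v)/2$ stay inside $(0,1)$ for small $s$, and $v$ stays near $q>0$. Each function is also even in $s$, so the remainder is genuinely $O(s^4)$ with a constant depending only on $p$.

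Finally, for $0<q<1$ the series $\operatorname{atanh}q=q+q^3/3+\cdots$ gives $\operatorname{atanh}q>q$, and $1-q^2>0$. The coefficient is therefore strictly positive. Choosing $s>0$ small enough that the positive $s^2$ term dominates the $O(s^4)$ remainder gives $r_2(p,s)>1-h_2(p/2)$.

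The only delicate point is the uniformity and analyticity justification for the remainder at fixed $p$. Since $p$ is fixed, this is routine, and if desired an explicit example such as $s=4/5$, $p=1/2$ can be evaluated numerically as a sanity check.
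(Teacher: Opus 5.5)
Your proposal is correct and follows the paper's own argument. You combine Theorem~\ref{thm:two-round-randomness} with Proposition~\ref{prop:isotropic-oneway}, then use the small-$s$ expansion~\eqref{eq:two-round-expansion}, whose quadratic coefficient is strictly positive because $\operatorname{atanh}q>q$ for $0<q<1$; your explicit check that $r_2(p,0)=1-h_2(p/2)$ and your justification of the $O(s^4)$ remainder make precise what the paper leaves implicit.
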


A fixed measurement strength already gives substantial gains and avoids relying on numerical optimization. At $s=4/5$,
\begin{equation}
 r_2\!\left(\frac12,\frac45\right)
 =1-h_2(7/10)+h_2(9/10)-h_2\!\left(\frac{10+\sqrt{73}}{20}\right)
 =0.211418501897\ldots,
 \label{eq:p-half-fixed-rate}
\end{equation}
compared with the one-way value $1-h_2(1/4)=0.188721875541\ldots$. Numerical optimization over the one-parameter family gives a slightly larger value, approximately $0.21152695$ at $p=1/2$. The fixed-strength value proves the displayed numerical improvement at $p=1/2$; the small-strength expansion proves the all-parameter separation, without a global numerical optimization.

Figure~\ref{fig:isotropic-randomness-rates} compares the exact one-way rate, the fixed-strength two-round construction, the numerically optimized two-round family, and the state potential. The optimization is only over the displayed protocol family; it does not evaluate the unrestricted two-way rate.

\begin{figure}[t]
 \centering
 \begin{tikzpicture}
\begin{axis}[width=.96\linewidth,height=7.5cm,xmin=0,xmax=1,ymin=0,ymax=1.02,
 xlabel={Isotropic noise $p$},ylabel={Net shared randomness (bits per copy)},grid=major,
 legend style={font=\small,at={(.98,.98)},anchor=north east},legend cell align=left]
\addplot[thick] table[x=p,y=oneway,col sep=comma]{data/isotropic_randomness.csv};\addlegendentry{Exact one-way rate}
\addplot[very thick,dashed] table[x=p,y=two_round_s08,col sep=comma]{data/isotropic_randomness.csv};\addlegendentry{Two-round, $s=4/5$}
\addplot[thick,dashdotted] table[x=p,y=two_round_opt,col sep=comma]{data/isotropic_randomness.csv};\addlegendentry{Two-round family, optimized $s$}
\addplot[thick,dotted] table[x=p,y=VF,col sep=comma]{data/isotropic_randomness.csv};\addlegendentry{$\VF=\UF(\cD_p)$}
\draw[densely dashed] (axis cs:.6666667,0)--(axis cs:.6666667,1.02);
\node[anchor=west,font=\footnotesize,align=left] at (axis cs:.685,.42) {$\VF=$ one-way rate\\for $p\geq2/3$};
\end{axis}
\end{tikzpicture}
 \caption{Randomness rates and the one-way potential for qubit isotropic states. The solid curve is the exact one-way rate $1-h_2(p/2)$ from Proposition~\ref{prop:isotropic-oneway}. The dashed curve is the explicit two-round rate at fixed measurement strength $s=4/5$, and the dash-dotted curve is the numerical maximum of~\eqref{eq:two-round-rate} over $s\in[0,1]$. The dotted curve is $\VF(\rho^{(p)})=\UF(\cD_p)$. For $2/3\leq p<1$ the dotted and one-way curves coincide, while the two-round curve lies strictly above them; hence $\VF$ cannot upper bound unrestricted two-way randomness distillation. The optimized two-round curve is not claimed to be the two-way capacity.}
 \label{fig:isotropic-randomness-rates}
\end{figure}

\subsection{Why the one-way potential cannot be a two-way converse}
The preceding construction provides a direct stress test for the candidate state potential. In the entanglement-breaking depolarizing regime $p\geq2/3$, Proposition~\ref{prop:isotropic-UF} and the channel evaluation give
\begin{equation}
 \VF(\rho^{(p)})=\UF(\cD_p)=1-h_2(p/2)=\Rone(\rho^{(p)}).
 \label{eq:VF-eb-regime}
\end{equation}
Corollary~\ref{cor:two-way-separation} then implies the analytic violation
\begin{equation}
 \Rtwo(\rho^{(p)})>\VF(\rho^{(p)}),\qquad 2/3\leq p<1.
 \label{eq:VF-not-two-way}
\end{equation}
Thus $\VF$ is not an unrestricted LOCC monotone after communication cost is accounted for in the manner needed for a two-way distillation converse.

The reason is conceptual. In feedback-assisted channel coding, a single message exists before the protocol starts, and the potential follows information about that same message as the protocol evolves. In randomness distillation, the parties may create different components of the final shared random variable at different stages. Here Bob creates $Y$, Alice learns it while retaining a useful quantum system, and Alice subsequently creates $X$. The final variable is the pair $(X,Y)$. An accounting quantity centered on one initial direction of extraction need not control this alternation.

The upper bounds $\Gamma$ and their fidelity-smoothed variants in~\cite{LRWW23,NW24} were designed to survive arbitrary LOCC rounds and remain valid. Our result does not improve those general two-way upper bounds. Instead, it improves the explicit isotropic-state achievability benchmark and shows what any sharper two-way converse must accommodate. A non-asymptotic version of the two-round protocol could combine finite-blocklength classical compression with quantum side information and an explicit disturbance budget; we leave that analysis for future work.

\section{Numerical evaluation and reproducibility}\label{sec:numerics}
The general converses and the ordinary amplitude-damping evaluation are analytic. The numerical enclosure for the thermal state additionally uses the global certificates in Appendices~\ref{app:thermal-cert} and~\ref{app:state-upsilon-cert}. Numerical work is used to display the scalar depolarizing bound, evaluate the single-use formula for the unassisted amplitude-damping capacity, and check finite-dimensional identities appearing in the proofs. We describe the computations separately so that a numerical maximization is not mistaken for a theorem about operational capacity.

\subsection{Stable depolarizing maximization}
Proposition~\ref{prop:dep-UF} reduces $\UF(\cD_p)$ to a one-dimensional maximization, but accurately evaluating that formula requires care. As shown in Section~\ref{subsec:strict-UF}, the maximizing input can be extremely close to pure, and the gain above the unassisted capacity can be much smaller than the capacities themselves. A uniform grid in the Bloch-vector length can miss this gain, while subtracting separately rounded capacities can erase it. We therefore describe three complementary checks: a stable coordinate and objective for locating the optimizer, a global subdivision estimate for its value, and high-precision recomputation of the smallest quoted gaps. These computations evaluate the analytical bound; they do not estimate the unknown feedback-assisted capacity.

The smaller input eigenvalue $\delta=(1-r)/2$ is a better numerical coordinate than $r$ near a pure input. The optimization scans a grid in $\log\delta$, supplemented by an ordinary grid away from zero, and refines stationary points and compares endpoints. The gain above capacity is evaluated as
\begin{equation}
 \mathcal E(2\kappa\sqrt{\delta(1-\delta)})-
 \big[h_2(p/2+(1-p)\delta)-h_2(p/2)\big],
 \label{eq:stable-gain}
\end{equation}
using cancellation-resistant expressions. In particular, the small eigenvalue in~\eqref{eq:dep-small-eigen} is evaluated as
\begin{equation}
 z_\delta=\frac{2\kappa^2\delta(1-\delta)}{1+\sqrt{1-4\kappa^2\delta(1-\delta)}}.
 \label{eq:stable-eigenvalue}
\end{equation}
Appendix~\ref{app:optimizer} proves that the optimizer is either the maximally mixed input or the unique interior stationary point, which provides an additional analytic check on the search.

A separate subdivision calculation gives a global upper estimate. On an interval $[\delta_-,\delta_+]\subseteq[0,1/2]$, the subtracted entropy increase in~\eqref{eq:stable-gain} is nondecreasing in $\delta$ and the entanglement term is also nondecreasing. Evaluating the former at $\delta_-$ and the latter at $\delta_+$ therefore bounds every gain in the interval from above. Subdividing intervals with an upper estimate above the best candidate bounds the unresolved absolute optimization error. The supplied data use an absolute tolerance of $2\times10^{-8}$ bits at 502 depolarizing parameter values. These calculations use ordinary floating-point arithmetic, not outward-rounded interval arithmetic, and are not described as formal numerical certificates.

Six representative stationary points were separately recomputed at 110-decimal-digit working precision, including $p=0.55$ and $p=0.6$. The absolute global tolerance is larger than the exceptionally small gap at the latter point. Its positivity is established by Proposition~\ref{prop:strict-UF}; its quoted magnitude is a high-precision numerical evaluation, not a consequence of the coarse global tolerance.

\subsection{Amplitude damping and matrix checks}
The amplitude-damping $\UF$ curve uses~\eqref{eq:AD-UF} directly at 401 parameter values. The unassisted-capacity formula is a scalar maximization in $t$, checked against an independent dense grid including endpoints. Strict concavity, proved after~\eqref{eq:AD-Holevo}, rules out distinct local maxima. Five representative optima are additionally recomputed by high-precision bisection of the derivative. The penalty plot uses~\eqref{eq:AD-g}. The exact $\UF$ proof does not rely on solving a discretized convex-roof problem.

The accompanying scripts additionally compare direct Kraus-matrix calculations with the concurrence and determinant formulas, test the fixed-measurement identity~\eqref{eq:AD-g-relative-entropy}, and check the scalar reduction and Jensen inequality on random ensembles. They also evaluate the actual-laboratory potential before and after random local feedback instruments, and test channel-step inequalities on mixed, potentially entangled two-qubit laboratory states. In these mixed channel-step tests the sender has no remaining quantum subsystem after the transmission; the output entanglement term is consequently zero. The tests cover a finite-dimensional subset of the theorem's inputs and do not replace the proof for arbitrary memories.

The matrix checks include 900 direct channel/concurrence comparisons, 1,200 instances each of the amplitude-damping diagonal reduction, scalar Jensen inequality, and arbitrary-ensemble upper bound, 350 feedback instruments, and 600 mixed channel-step comparisons. An additional 240 matrix calculations check explicit feasible witnesses for the semidefinite reduction in Appendix~\ref{app:upsilon}. The largest discrepancy in the fixed-environment measurement information is below $6\times10^{-16}$ bits. Direct spin-flip diagonalization for the rank-deficient amplitude-damping states has a larger numerical residual, below $2.8\times10^{-8}$ bits in entanglement of formation. No positive violation of the tested potential inequalities was observed. These are numerical consistency checks, not independent proof verification.

An additional revision check uses 240 message- and transcript-controlled Alice channels to verify the unchanged receiver marginal and entanglement monotonicity. It tests the channel-step bound on 180 pure and 180 locally flagged mixed erasure-channel ensembles with both retained memories nontrivial. Locally readable flags reduce the required entanglement values to marginal entropies and two-qubit entanglement of formation. These are restricted test families, not a solver for arbitrary convex roofs. The three dense-coding stages are also checked directly.

\subsection{Randomness-distillation checks}
The isotropic-state figure is generated directly from the closed expressions in Propositions~\ref{prop:isotropic-oneway} and~\ref{prop:isotropic-UF} and Theorem~\ref{thm:two-round-randomness}. For each of 501 equally spaced values of $p$ in $[0,1]$, the scalar maximization of $r_2(p,s)$ over $s\in[0,1]$ is performed by a bounded one-dimensional optimizer. As an independent check, the resulting value is compared with the largest value on a uniform grid of 20,001 measurement strengths. The largest optimizer--grid discrepancy is below $3.9\times10^{-10}$ bits per copy. At $p=1/2$, the optimizing strength is $s\approx0.8218742823$ and the corresponding rate is $0.2115269457$ bits per copy; the fixed choice $s=4/5$ gives $0.2114185019$ bits per copy. The small-$s$ expansion in~\eqref{eq:two-round-expansion} is also checked numerically against symmetric finite differences at representative noise values. These checks verify the plotted implementation of the formulas; the strict separation in Corollary~\ref{cor:two-way-separation} follows analytically from the positive quadratic coefficient and does not depend on numerical optimization.

The source package contains all plot data, editable TikZ/PGFPlots sources, deterministic generation code, fixed random seeds for the matrix checks, and numerical verification records. The figures state which curves are exact formulas, achievable rates, or numerically evaluated bounds. In particular, no plotted curve is labeled as an unknown feedback-assisted capacity.

The finite-blocklength randomness test is checked separately by \texttt{check\_finite\_randomness.py}. It tests 3,500 classical joint distributions, including nonuniform and zero-probability labels, 936 classical--quantum decoder instances, and 2,000 cases of the error-tolerance form of the channel Fano bound. The correlated and independent test probabilities are computed directly; the latter is compared with $d_L/D$. Independent uniformly shared padding of the key and message checks that the test depends on their net size. Boundary square roots are numerically ill-conditioned at unit fidelity, so the probability residuals and square-root residuals are recorded separately. A second implementation recomputes classical-test inequalities at high precision. These tests support the implementation and boundary-case analysis; the proof of Theorem~\ref{thm:VF-oneway} is the argument establishing the converse for arbitrary inputs.

\subsection{Damping-source plots and numerical certificates}
The new damping-source figures separate three kinds of quantities. Figure~\ref{fig:AD-state-rates} uses exact formulas for the state potential and the two one-way rates, with the rates attributed to the support-additivity result as explained in Section~\ref{subsec:AD-exact-rate}. Figure~\ref{fig:thermal-penalty} numerically minimizes over a single determinant at 401 populations, scanning endpoints and a combined logarithmic and uniform grid before local refinement. An independent 20,001-point grid at every population gives a maximum grid-to-refinement discrepancy below $1.4\times10^{-10}$ bits. This comparison checks the displayed curve only. The actual affine lower bound is established on the full continuous domain by the exhaustive interval calculation in Appendix~\ref{app:thermal-cert}.

The thermal certificate has 475 nonempty terminal rectangles and reconstructs the complete subdivision from its root. A separate interval computation checks both state-$\Upsilon$ supporting hyperplanes, all matrix-positivity conditions, and the stated perturbation. Their output files record enclosing intervals, not only midpoint estimates. These computations use outward interval arithmetic at 60-decimal-digit working precision. A separate check verifies the exact rational area and pairwise disjoint interiors of all terminal rectangles, compares the matrix-logarithm derivative with direct 80-digit quadrature, and repeats the interval enclosures at 90-digit working precision with all constants reinitialized. These are computer-assisted certificates in the stated arithmetic implementation, not a formal proof-assistant verification. By contrast, the earlier depolarizing subdivision remains an ordinary floating-point consistency bound; its certification status has not been changed by adding the thermal interval analysis.

Direct source checks cover 500 generalized-amplitude-damping channel/Choi identities and 500 retained-instrument instances. They compare the determinant and concurrence expressions with Kraus-matrix calculations, and verify the projected measurement rate. A further 600 ordinary-damping measurement checks and 600 retained-instrument checks verify the benchmark formulas and the source-potential upper bound on sampled instruments. Sampled instruments only supply consistency tests; the proofs in Section~\ref{sec:damping-states} and the global certificates are what cover every instrument. The source package includes the exact rational certificate matrices, the exhaustive partition, all new plotting data, and scripts that regenerate the tables and checks.

\section{Conclusion}\label{sec:conclusion}

\subsection{Summary}
We established an exact amortization theorem for message mutual information plus conditional entanglement of formation across the communicating laboratories. It gives a classical-feedback capacity converse for arbitrary finite-dimensional channels without initial shared entanglement. The channel quantity is an earlier Winter--Yang mixed-convex-roof expression~\cite[Theorem~10]{WY16}; the operational contribution is the proof that it bounds arbitrary adaptive feedback protocols. The resulting bound improves a published depolarizing-channel converse and is evaluated in closed form for amplitude damping. The same method determines the capacity of a flagged noiseless/entanglement-breaking mixture.

For state distillation, a weighted decoder test gives a finite-blocklength converse directly for the net randomness yield, without an error term proportional to the communication length. The finite-copy entanglement-of-formation inequality connects that test to the mixed-roof state potential, and the capacity limit recovers the one-way randomness bound. The ordinary amplitude-damping Choi family provides exact, separately attributed operational benchmarks in both communication directions. The generalized amplitude-damping Choi state $\rho^\star$ supplies a practical application beyond those solved examples: its one-way rate is at least $0.43226906$ and at most $0.45252360$ bits per copy. Global interval-verified certificates place the values of both earlier computable state-$\Upsilon$ restrictions above $0.66600769$. A small white-noise perturbation retains a strict improvement. These comparisons do not order the new bound against the unrestricted $\Gamma$ optimization or replace its two-way and strong-converse properties.

Finally, a two-round weak-measurement protocol strictly improves on the exact one-way isotropic-state rate throughout the nontrivial noise interval. Gentle first-round decoding preserves the quantum correlations used in the reverse-direction extraction. This protocol exceeds the one-way state potential in part of the parameter range, demonstrating why a fixed-message channel converse and an alternating randomness-extraction protocol cannot be identified. The formulas, examples, and certificates distinguish achievable rates, exact potential evaluations, and rigorous upper bounds on an unresolved operational rate.

\subsection{Directions for further work}
The present channel converse is weak. A hypothesis-testing or R\'enyi analogue of the potential could potentially establish a strong converse at a similarly small threshold, but such a replacement must preserve both free-operation monotonicity and a valid channel-step inequality. The regularized channel bound $\inf_k k^{-1}\UF(\N^{\otimes k})$ is another concrete target.

On the randomness side, the immediate problem is to develop an accounting quantity that remains valid under successive rounds of classical communication in both directions, after subtracting their communication cost, while remaining significantly smaller than the existing $\Gamma$-type strong-converse bounds. Equation~\eqref{eq:VF-not-two-way} is a nontrivial consistency test for any proposal. The two-round construction also suggests a hierarchy of multi-round lower bounds in which weak instruments and gentle decoders alternate between the parties. Determining whether this hierarchy converges to the true two-way distillable randomness, even for qubit isotropic states, appears to be an accessible and potentially revealing problem.

The finite-blocklength analysis is another natural direction. The smooth-divergence framework of~\cite{NW24} provides upper bounds, while non-asymptotic classical compression with quantum side information can be used to quantify the two messages and the disturbance caused by the first decoder. A matching second-order analysis would make it possible to compare the new interactive lower bound directly with the existing one-shot upper bounds.

For the full-rank thermal examples, an exact one-way rate has not been established here. Improving the achievable measurement rate or evaluating the complementary entanglement cost would reduce the remaining interval. The convex-envelope calculation is exact for the state potential, not a proof that the potential equals the rate.

More general flagged extensions may also retain enough noise to yield useful channel converses while exposing enough structure for tractable analysis. The contrast between the depolarizing extension and the amplitude-damping jump flag shows that the choice of revealed information is decisive. None of the results proved here establishes a strict operational feedback advantage for the depolarizing or amplitude-damping channel, nor do they determine the optimal two-way randomness rate of isotropic states.

\section*{Acknowledgments}
The author acknowledges support from the Cornell University School of Electrical and Computer Engineering.

\section*{Statement on AI-assisted preparation}
ChatGPT (OpenAI) Pro 6 (Astra) was used extensively for mathematical exploration, drafting and revision, bibliographic checks, and preparation of LaTeX, TikZ, and numerical code. The research direction and manuscript requirements were supplied by the author. AI-assisted mathematical, editorial, and numerical review passes were followed by revisions. These were internal self-reviews, not independent external peer review or formal verification. Numerical tests were used as consistency checks rather than substitutes for proofs. Responsibility for the mathematical statements, citations, and presentation rests with the author.

\bibliographystyle{alphaurl}
\bibliography{references}

\appendix

\section{Concurrence after a qubit channel}\label{app:concurrence}
This appendix explains the common two-qubit calculation behind the depolarizing and amplitude-damping examples. It also identifies which part is an established concurrence identity and which part is the subsequent channel-bound optimization.

Let $\Phi_2\coloneqq\proj\Phi$ with $\ket\Phi\coloneqq(\ket{00}+\ket{11})/\sqrt2$, and define the normalized Choi state as $\theta_\N\coloneqq(\id\otimes\N)(\Phi_2)$. For a qubit state $\rho$, a purification can be written as
\begin{equation}
 \ket{\phi^\rho}\coloneqq(L_\rho\otimes I)\ket\Phi,\qquad
 L_\rho\coloneqq\sqrt{2\rho^{\mathsf T}}.
 \label{eq:filter-purification}
\end{equation}
The transpose depends on the chosen basis and only ensures that the transmitted marginal is $\rho$. Since $\Tr_B\theta_\N=I/2$, the output $(L_\rho\otimes I)\theta_\N(L_\rho^\dagger\otimes I)$ has trace one.

The homogeneous concurrence of a possibly subnormalized pure two-qubit vector with coefficient matrix $C$ is defined as $2|\det C|$. Applying $L_\rho$ to its first subsystem multiplies this by $|\det L_\rho|$. Taking the convex roof gives the same scaling for mixed states: when $L_\rho$ is invertible, local filtering gives a bijection between decompositions before and after the filter, so both inequalities follow. Singular filters follow by continuity. This is the local-filter form of the concurrence factorization law~\cite{Konrad08}. Therefore
\begin{equation}
 c\big((\id\otimes\N)(\phi^\rho)\big)
 =2\sqrt{\det\rho}\,c(\theta_\N).
 \label{eq:concurrence-factorization}
\end{equation}
Wootters' formula~\cite{Wootters98} then evaluates the entanglement of formation.

For $\cD_p$, the Choi state is Bell diagonal, with eigenvalues $1-3p/4,p/4,p/4,p/4$. Its concurrence is $(1-3p/2)_+$. Since $\det\rho=(1-r^2)/4$, equation~\eqref{eq:concurrence-factorization} gives the expression in Proposition~\ref{prop:dep-UF}. For amplitude damping,
\begin{equation}
 \theta_{\cA_\gamma}=\frac12
 \begin{pmatrix}
 1&0&0&\sqrt{1-\gamma}\\
 0&0&0&0\\
 0&0&\gamma&0\\
 \sqrt{1-\gamma}&0&0&1-\gamma
 \end{pmatrix},
 \label{eq:AD-Choi}
\end{equation}
whose concurrence is $\sqrt{1-\gamma}$. Equations~\eqref{eq:concurrence-factorization} and~\eqref{eq:det-entropy} give $f_{\cA_\gamma}(\rho)=\ent((1-\gamma)\det\rho)$.

\section{The depolarizing \texorpdfstring{$\Upsilon$}{Upsilon} expression}\label{app:upsilon}
We derive the published comparison curve from its semidefinite definition, rather than inferring an analytic formula from a plotted line. The normalization of Choi operators matters in this calculation. We use $\theta_\M\coloneqq(\id\otimes\M)(\Phi_d)$ and the unnormalized Choi operator $J_\M\coloneqq d\theta_\M$, where $\Phi_d$ has trace one.

\subsection{Symmetry reduction of the comparison set}
For a point-to-point completely positive linear map $\M$, the comparison quantity of~\cite{WXD18,WFT19,DKQSW23} has the semidefinite representation
\begin{equation}
 \beta(\M)\coloneqq\inf_{R,S}\left\{\Tr S:
 R\pm J_\M^{\mathsf T_B}\geq0,\quad
 I\otimes S\pm R^{\mathsf T_B}\geq0\right\},
 \label{eq:beta-SDP}
\end{equation}
where $R,S$ are Hermitian and $\mathsf T_B$ denotes partial transpose. Both signs in each constraint are required. The $\Upsilon$ quantity minimizes a channel relative entropy over completely positive $\M$ with $\beta(\M)\leq1$~\cite{WFT19,DKQSW23}. These comparison maps need not preserve trace.

The depolarizing channel is covariant under all input unitaries. Twirling a comparison map and its feasible witnesses preserves the $\beta$ constraint and cannot increase the optimized relative entropy. For covariant channels the optimizing reference input can be taken maximally entangled; see the covariance reduction in~\cite[Section~VI]{DKQSW23}. Equivalently, the covariant channel and comparison map have the same teleportation processing of their Choi operators, so data processing gives the upper bound by their Choi divergence, with equality for a maximally entangled input. In this calculation the Choi operators commute, so the Umegaki and Belavkin--Staszewski relative entropies coincide. This explains the use of the same curve in the published feedback comparison~\cite{FF21,DKQSW23}.

A twirled comparison operator has the form
\begin{equation}
 \theta_\M=u\Phi_d+v(I-\Phi_d),\qquad u,v\geq0.
 \label{eq:isotropic-comparison}
\end{equation}
We claim that
\begin{equation}
 \beta(\M)=d\max\{u+(d-1)v,dv\}.
 \label{eq:isotropic-beta}
\end{equation}
To verify this, write $F$ for the swap, $P_\pm\coloneqq(I\pm F)/2$, and
\begin{equation}
 a_+\coloneqq u+(d-1)v,\qquad a_-\coloneqq|(d+1)v-u|.
 \label{eq:beta-AB}
\end{equation}
The partial transpose of $J_\M$ is $(u-v)F+dvI$. Twirling the witness in~\eqref{eq:beta-SDP} permits $R=r_+P_++r_-P_-$ and $S=sI$. Its first constraints become $r_+\geq a_+$ and $r_-\geq a_-$. Its second constraints require $s$ to dominate the absolute values of the two eigenvalues
\begin{equation}
 \frac{(d+1)r_+-(d-1)r_-}{2},\qquad
 \frac{r_++r_-}{2}
 \label{eq:beta-witness-eigenvalues}
\end{equation}
of $R^{\mathsf T_B}$ on $\Phi_d$ and its orthogonal complement.

If $u\geq v$, then $a_+\geq a_-$, and the choice $r_+=r_-=a_+$ attains $s=a_+$. Conversely, the larger of the two quantities in~\eqref{eq:beta-witness-eigenvalues}, before absolute values, is at least $r_+$: one of $r_-\geq r_+$ or $r_-\leq r_+$ always holds. Thus every feasible $s$ is at least $a_+$. If $u<v$, then $a_-=(d+1)v-u>a_+$, and every feasible $s$ is at least $(a_++a_-)/2=dv$. The choice $r_+=a_+,r_-=a_-$ attains this value, since the first eigenvalue is $du$ and the second is $dv$. This proves~\eqref{eq:isotropic-beta}.

\subsection{Optimizing the commuting divergence}
The normalized Choi state of $\cD_{p,d}$ is
\begin{equation}
 \theta_{p,d}=a\Phi_d+b(I-\Phi_d),\qquad
 a\coloneqq1-p+p/d^2,\quad b\coloneqq p/d^2.
 \label{eq:dep-choi-eigenvalues}
\end{equation}
Equations~\eqref{eq:isotropic-beta} and $\beta\leq1$ give
$u+(d-1)v\leq1/d$ and $v\leq1/d^2$. The divergence to~\eqref{eq:isotropic-comparison} is
\begin{equation}
 a\log_2(a/u)+(d^2-1)b\log_2(b/v).
 \label{eq:choi-divergence}
\end{equation}
For positive $a,b$, the minimum uses $u=1/d-(d-1)v$. Differentiating the remaining scalar objective gives the unconstrained optimum
\begin{equation}
 u=a/d,\qquad v=(d+1)b/d.
 \label{eq:choi-optimizer}
\end{equation}
This is feasible exactly when $p\leq d/(d+1)$. If $p$ is larger, the optimum is the boundary $u=v=1/d^2$. Endpoints follow by continuity. Consequently,
\begin{equation}
 \Upsilon(\cD_{p,d})=
 \begin{cases}
 \log_2d-\dfrac{d^2-1}{d^2}p\log_2(d+1),&p\leq d/(d+1),\\
 2\log_2d-H(a,b,\ldots,b),&p\geq d/(d+1).
 \end{cases}
 \label{eq:upsilon-all-d}
\end{equation}
Specializing to $d=2$ proves~\eqref{eq:dep-upsilon}. This evaluation concerns the limiting $\Upsilon$ expression; it does not identify every finite-order R\'enyi version with the same formula.

\section{Global structure of the depolarizing optimizer}\label{app:optimizer}
The scalar formula~\eqref{eq:dep-UF} suffices to state the bound. We include a derivative argument that identifies all possible maximizers and explains why logarithmic resolution near a pure input is important. This argument is independent of the numerical subdivision check.

Assume $0<p<2/3$, and set $t\coloneqq1-p$, $\kappa\coloneqq1-3p/2$, and
\begin{equation}
 s(r)\coloneqq\sqrt{1-\kappa^2+\kappa^2r^2},\qquad
 G(r)\coloneqq h_2((1+tr)/2)-h_2((1+s(r))/2).
 \label{eq:G-optimizer}
\end{equation}
Maximizing~\eqref{eq:dep-UF} is equivalent to minimizing $G$ on $[0,1]$. For $0<r<1$,
\begin{equation}
 \frac{\ln2}{r}G'(r)=\kappa^2 L(s(r))-t^2 L(tr),\qquad
 L(x)\coloneqq\frac{\operatorname{atanh}x}{x}=\int_0^1\frac{du}{1-x^2u^2},
 \label{eq:G-derivative}
\end{equation}
with $L(0)=1$. Combining the integrands yields
\begin{equation}
 \frac{\ln2}{r}G'(r)=\int_0^1N(u)w_r(u)\,du,
 \label{eq:G-sign-integral}
\end{equation}
where
\begin{align}
 N(u)&\coloneqq\kappa^2-t^2+t^2(1-\kappa^2)u^2,\\
 w_r(u)&\coloneqq\frac1{[1-(1-\kappa^2+\kappa^2r^2)u^2][1-t^2r^2u^2]}.
 \label{eq:G-integrands}
\end{align}
Because $0<\kappa<t<1$, $N(0)<0$ and $N(1)=\kappa^2(1-t^2)>0$. It has exactly one zero $u_0$ in $(0,1)$.

For $r_2>r_1$, the ratio $w_{r_2}(u)/w_{r_1}(u)$ is strictly increasing in $u$. To check this, factor it into terms of the form $(1-c(u)r_1^2)/(1-c(u)r_2^2)$, with
$c(u)=\kappa^2u^2/[1-(1-\kappa^2)u^2]$ and $c(u)=t^2u^2$, respectively. Each $c$ is increasing. Hence
\begin{equation}
 N(u)\left[\frac{w_{r_2}(u)}{w_{r_1}(u)}-
                \frac{w_{r_2}(u_0)}{w_{r_1}(u_0)}\right]>0
 \quad\text{for }u\neq u_0,\quad 0<u<1.
 \label{eq:single-crossing}
\end{equation}
Multiplying by $w_{r_1}$ and integrating shows that, once the integral in~\eqref{eq:G-sign-integral} is nonnegative, it is strictly positive for every larger $r$. As $r\uparrow1$, $L(s(r))$ diverges while $L(tr)$ remains bounded. The derivative sign is therefore eventually positive.

It follows that if the continuous sign expression at $r=0$ is nonnegative, $G$ is minimized at zero. Otherwise it decreases initially, has exactly one interior stationary point, and increases thereafter. This proves the global characterization used to check the scalar optimization. It does not rule out an optimizer extremely close to $r=1$, which is why the implementation uses $\delta=(1-r)/2$ and high-precision arithmetic for selected parameters.

\section{A global certificate for the thermal source potential}\label{app:thermal-cert}
The scalar search in Section~\ref{subsec:GAD-roof} suggests the value of the state potential, but a search cannot by itself exclude better ensembles. This appendix specifies the inequality actually certified, the elementary bounds used on each rectangle, and how the supplied checker verifies coverage. The result is a computer-assisted enclosure using outward interval arithmetic, not a theorem-prover verification.

\subsection{The function to be bounded}
At the parameters of~\eqref{eq:thermal-star}, the penalty in~\eqref{eq:GAD-penalty} is
\begin{equation}
 g_*(t,u)\coloneqq
 \ent\!\left(\frac{u}{2}+\frac9{200}+\frac{(t-1/10)^2}{4}\right)
 -\ent(c_*^2u),
 \qquad c_*\coloneqq\frac1{\sqrt2}-\frac3{20}.
 \label{eq:certificate-g}
\end{equation}
Define the residual
\begin{equation}
 F_*(t,u)\coloneqq g_*(t,u)-\alpha-\beta(t-1/2),
 \qquad \alpha=\frac{4287673}{10^7},\quad
 \beta=\frac{7391729}{10^7}.
 \label{eq:certificate-residual}
\end{equation}
The required claim is $F_*\geq0$ on $0\leq t\leq1$, $0\leq u\leq t(1-t)$. The check proves a strictly positive margin for the deliberately rounded-down affine function. It need not locate a stationary point or prove convexity of the minimized penalty.

For a rectangle $[t_-,t_+]\times[u_-,u_+]$, define
\begin{equation}
 c(t)\coloneqq\frac9{200}+\frac{(t-1/10)^2}{4},\qquad
 c_-\coloneqq\min_{[t_-,t_+]}c(t),\quad
 c_+\coloneqq\max_{[t_-,t_+]}c(t).
\end{equation}
The minimum is obtained at the closest point to $1/10$ and the maximum at an endpoint. A rectangle is discarded as empty only if
\begin{equation}
 u_->\max_{t\in[t_-,t_+]}t(1-t)
 \quad\text{or}\quad u_-/2+c_->1/4.
 \label{eq:certificate-empty}
\end{equation}
The first condition excludes the input-state domain; the second excludes a physically possible qubit output determinant.

\subsection{Two elementary rectangle bounds}
Monotonicity of $\ent$, together with $\beta>0$, gives the following lower bound on every physical point of a nonempty rectangle:
\begin{equation}
 F_*(t,u)\geq
 \ent(u_-/2+c_-)-\ent(c_*^2u_+)-\alpha-\beta(t_+-1/2).
 \label{eq:certificate-basic}
\end{equation}
This bound needs no differentiability at the boundary $u=0$.

Away from that boundary, a mean-value bound is more efficient. Set
$y_-\coloneqq u_-/2+c_-$ and $y_+\coloneqq u_+/2+c_+$. If $u_->0$ and $y_+<1/4$, then the entire rectangle lies inside the differentiability domain of both entropy functions. The derivatives are
\begin{align}
 \partial_tF_*&=\frac{t-1/10}{2}\ent'(u/2+c(t))-\beta,\\
 \partial_uF_*&=\frac12\ent'(u/2+c(t))-c_*^2\ent'(c_*^2u).
 \label{eq:certificate-gradients}
\end{align}
The positive function $\ent'$ decreases, as shown in~\eqref{eq:det-entropy-derivative}. Its range on the first argument is therefore contained in
$[\ent'(y_+),\ent'(y_-)]$, and on the second in
$[\ent'(c_*^2u_+),\ent'(c_*^2u_-)]$.
Interval multiplication in~\eqref{eq:certificate-gradients} bounds each derivative even when $t-1/10$ changes sign. Let $G_t,G_u$ be upper bounds on their absolute values throughout the rectangle. At its midpoint $(t_0,u_0)$, the mean value theorem gives
\begin{equation}
 F_*(t,u)\geq F_*(t_0,u_0)
 -G_t\frac{t_+-t_-}{2}-G_u\frac{u_+-u_-}{2}.
 \label{eq:certificate-Taylor}
\end{equation}
Both~\eqref{eq:certificate-basic} and~\eqref{eq:certificate-Taylor} are evaluated with outward interval arithmetic, so a positive lower interval endpoint proves positivity on the entire rectangle, not just at its midpoint.

\subsection{Coverage, margins, and resulting enclosure}
The root rectangle is $[0,1]\times[0,1/4]$. The supplied file \texttt{minorant\_boxes.json} records a finite dyadic partition. Its checker reconstructs the tree from the root, bisecting the $t$ interval when its width is more than twice the $u$ width and bisecting the $u$ interval otherwise. A terminal rectangle must either satisfy~\eqref{eq:certificate-empty} or have a positive lower bound from one of the preceding two estimates. A missing terminal record causes failure rather than being treated as verified.

The reconstructed tree has 983 nodes, with 475 verified nonempty terminal rectangles and 17 discarded empty rectangles. At 60-decimal-digit interval working precision, the smallest verified residual margin is greater than $8.3\times10^{-9}$. Every defining decimal constant is supplied as an exact rational decimal, and all rectangle endpoints are exact dyadic numbers. Transcendental quantities are enclosed, rather than evaluated only at floating-point midpoints. The code, partition, and complete output are included in the accompanying numerical materials under \texttt{code/thermal/}.

The resulting upper estimate is
\begin{equation}
 \VF(\rho^\star)\leq h_2(3/10)-\alpha
 =0.4525235992306926182\ldots.
\end{equation}
For the feasible ensemble in~\eqref{eq:thermal-VF-lower}, the same interval arithmetic encloses its objective around
$0.4525235098589629611\ldots$.
Their outward-rounded interval endpoints prove~\eqref{eq:thermal-VF-interval}. The proof of the operational upper bound uses the former estimate; the latter only quantifies the tightness of the evaluation of the potential.

\section{Lower certificates for the earlier state converses}\label{app:state-upsilon-cert}
To establish improvement over an optimized converse, one needs a lower bound on its value. This appendix constructs such bounds for the two state-$\Upsilon$ restrictions defined in~\eqref{eq:state-Upsilon-definition}. The certificates are global: the selected reference operator need not be an optimizer, or even a feasible point of the older minimization.

\subsection{A relative-entropy supporting hyperplane}
Fix a full-rank target $\rho_{AB}$ and a positive definite reference operator $\sigma_0$. Define
\begin{equation}
 W\coloneqq D\ln(\sigma_0)[\rho]
 =\int_0^\infty(\sigma_0+tI)^{-1}\rho(\sigma_0+tI)^{-1}\,dt.
 \label{eq:old-certificate-W}
\end{equation}
The Fr\'echet derivative is with respect to the matrix argument, and the logarithm here is natural. The integral follows by differentiating the logarithm representation
\begin{equation}
 \ln S=\int_0^\infty\left[\frac{I}{1+t}-(S+tI)^{-1}\right]dt;
\end{equation}
the derivative of the inverse in direction $R$ is $-(S+tI)^{-1}R(S+tI)^{-1}$.
 The integral shows that $W\geq0$. The derivative is self-adjoint with respect to the trace pairing, and
\begin{equation}
 \Tr[W\sigma_0]=\Tr[\rho\,D\ln(\sigma_0)[\sigma_0]]=\Tr\rho=1.
\end{equation}
Convexity of $D(\rho\|\sigma)$ in $\sigma$ therefore gives
\begin{align}
 D(\rho\|\sigma)
 &\geq D(\rho\|\sigma_0)-\frac{\Tr[W(\sigma-\sigma_0)]}{\ln2}\\
 &=D(\rho\|\sigma_0)+\frac{1-\Tr[W\sigma]}{\ln2}.
 \label{eq:old-certificate-tangent}
\end{align}
A singular $\sigma$ gives infinite relative entropy for these full-rank targets, so it does not obstruct this bound.

Choose an operator $Q$ satisfying
\begin{equation}
 Q\geq W,\qquad Q\geq0,\qquad T_B(Q)\geq0.
 \label{eq:old-certificate-Q}
\end{equation}
For every $\sigma\in\mathfrak F_A(\rho)$, with witnesses $Z,L$, the feasible constraints imply
\begin{align}
 \Tr[W\sigma]
 &\leq\Tr[Q\sigma]\\
 &\leq\Tr[QZ]\\
 &\leq\Tr[Q(\rho_A\otimes L)]\\
 &=\Tr\!\left[L\,\Tr_A[(\rho_A\otimes I)Q]\right]
 \leq t_A,
 \label{eq:old-certificate-linear-bound}
\end{align}
where
\begin{equation}
 t_A\coloneqq\left\|\Tr_A[(\rho_A\otimes I)Q]\right\|.
 \label{eq:old-certificate-tA}
\end{equation}
The norm is the operator norm. The first step uses $\sigma\geq0$ and $Q-W\geq0$. The second uses $T_B(Z-\sigma)\geq0$ and $T_B(Q)\geq0$. The third uses $\rho_A\otimes L-Z\geq0$ and $Q\geq0$. Summing the two constraints $\rho_A\otimes L\pm Z\geq0$ shows $L\geq0$, since $\rho_A$ has full rank, and $\Tr L\leq1$ gives the final step. The contracted operator in~\eqref{eq:old-certificate-tA} is positive: under the partial trace it can equivalently be written with $\sqrt{\rho_A}$ on both sides of $Q$.

It follows that
\begin{equation}
 \Upsilon_A(\rho)\geq D(\rho\|\sigma_0)+\frac{1-t_A}{\ln2}.
 \label{eq:old-certificate-A}
\end{equation}
For $\mathfrak F_B(\rho)$, the same argument uses
\begin{equation}
 t_B\coloneqq\left\|\Tr_B[(I\otimes\rho_B)Q]\right\|,
 \qquad
 \Upsilon_B(\rho)\geq D(\rho\|\sigma_0)+\frac{1-t_B}{\ln2}.
 \label{eq:old-certificate-B}
\end{equation}
The operators $\sigma_0,Q$ can be chosen separately for the two certificates.

\subsection{Explicit certificates for the thermal state}
All matrices used here are real and have nonzero entries only on the diagonal and at positions $00,11$ and $11,00$. Denote such a matrix by the tuple
\begin{equation}
 (a,b,c,d;w)\quad\longleftrightarrow\quad
 \begin{pmatrix}a&0&0&w\\0&b&0&0\\0&0&c&0\\w&0&0&d\end{pmatrix}.
\end{equation}
Use the following exact rational-decimal data:
\begin{center}
\begin{tabular}{@{}cl@{}}\toprule
 Certificate & Matrix data\\\midrule
 $A:\ \sigma_0$ & $(0.333,0.0302,0.2181,0.1451;0.1371)$\\
 $A:\ Q$ & $(0.978,0.834,1.033,1.178;0.927)$\\\addlinespace[2pt]
 $B:\ \sigma_0$ & $(0.331,0.02596,0.23978,0.152;0.12788)$\\
 $B:\ Q$ & $(1.013,0.966,0.941,1.135;0.951)$\\\bottomrule
\end{tabular}
\end{center}
Positive definiteness of $\sigma_0$ follows by checking $b,c>0$, $a,d>0$, and $ad>w^2$. To compute~\eqref{eq:old-certificate-W}, diagonalize only its $00,11$ block $S$. If $\lambda_+>\lambda_->0$ and $P_+,P_-$ are its two spectral projectors, and $R$ is the target's corresponding block, then
\begin{equation}
 D\ln(S)[R]
 =\frac{P_+RP_+}{\lambda_+}+\frac{P_-RP_-}{\lambda_-}
 +\frac{\ln(\lambda_+/\lambda_-)}{\lambda_+-\lambda_-}
 (P_+RP_-+P_-RP_+).
 \label{eq:old-certificate-two-by-two}
\end{equation}
The other two diagonal entries are $\rho_{01,01}/b$ and $\rho_{10,10}/c$. This formula avoids numerical integration or a matrix-logarithm derivative approximation.

The supplied interval checker verifies the diagonal entries and determinant of each $2\times2$ block in $Q-W$, as well as $Q\geq0$ and $T_B(Q)\geq0$. For a tuple $(q_0,q_1,q_2,q_3;q_z)$, the last two tests reduce to
\begin{equation}
 q_0q_3\geq q_z^2,\qquad q_1q_2\geq q_z^2,
 \qquad q_i\geq0.
\end{equation}
For $\rho^\star_A=I/2$, the value of $t_A$ is
$\max\{(q_0+q_2)/2,(q_1+q_3)/2\}$.
For $\rho^\star_B=\diag(7/10,3/10)$, the value of $t_B$ is
$\max\{7q_0/10+3q_1/10,7q_2/10+3q_3/10\}$.
Outward evaluation of~\eqref{eq:old-certificate-A}--\eqref{eq:old-certificate-B} gives
\begin{align}
 \Upsilon_A(\rho^\star)&\geq0.6660076961830062256\ldots>0.66600769,\\
 \Upsilon_B(\rho^\star)&\geq0.6668816989645803520\ldots>0.66688169.
 \label{eq:old-certificate-results}
\end{align}
The long decimals indicate the evaluated supporting-hyperplane values; the final strict decimal inequalities are the enclosures used in the main text. They do not claim exact minimizers of the old programs.

\subsection{Certificates under an additional small perturbation}
For $\rho^\star_\eta$ in~\eqref{eq:thermal-perturbed-state}, retain each $\sigma_0$ above. Linearity of the derivative in its direction gives
\begin{equation}
 W_\eta=(1-\eta)W+\frac\eta4\sigma_0^{-1}.
 \label{eq:old-certificate-perturbed-W}
\end{equation}
Set $k_A\coloneqq9$ for the first certificate and $k_B\coloneqq10$ for the second. Their reference matrices satisfy
$\sigma_0\geq I/(4k_A)$ and $\sigma_0\geq I/(4k_B)$, respectively. Consequently the operators
\begin{equation}
 Q_\eta\coloneqq(1-\eta)Q+\eta k I
 \label{eq:old-certificate-perturbed-Q}
\end{equation}
satisfy~\eqref{eq:old-certificate-Q} with $W$ replaced by $W_\eta$. When contracting to calculate $t_B$, use the perturbed marginal
$\diag(7/10-\eta/5,3/10+\eta/5)$, not the unperturbed one.
At $\eta=1/100$, the same supporting-hyperplane argument yields
\begin{equation}
 \Upsilon_A(\rho^\star_{1/100})>0.53805672,
 \qquad
 \Upsilon_B(\rho^\star_{1/100})>0.52363972.
\end{equation}
The script \texttt{perturbation\_check.py} checks these inequalities using outward intervals. This establishes robustness of the stated comparison for a specified nearby state without assuming continuity of an optimizer of either semidefinite restriction.

\section{A self-contained proof of the formation marginal inequality}\label{app:formation-marginal}
The one-way converse in Section~\ref{subsec:finite-randomness} depends on the inequality
\begin{equation}
 E_F(B_1B_2;E_1E_2)_\sigma
 \geq E_F(B_1;E_1)_\sigma+G_{E_2}(\sigma_{B_2E_2}),
 \label{eq:app-formation-marginal}
\end{equation}
which is~\eqref{eq:formation-marginal-bound}. This is the directional form of the formation inequality of~\cite{YHHS05}; see also~\cite[Eq.~(17)]{BEHY11}. We give a proof from the definitions so that the finite-blocklength argument does not require this result as an additional assumption. Throughout, $J(E_2;B_2)$ means that the measurement is on $E_2$, and $G_{E_2}$ is its mixed convex roof, as in~\eqref{eq:J-env} and~\eqref{eq:mixed-roof}. The state $\sigma$ is arbitrary; it need not be a product across the two pairs or have full rank.

The idea is to measure only $E_2$. For a pure four-partite state, this measurement divides the entropy of $B_1B_2$ into two contributions: the classical information in its outcome and the average entropy that remains. The latter is large enough to cover the entanglement of formation of the unmeasured pair $B_1E_1$. Averaging this pure-state argument then produces the mixed convex roof on the second pair.

\subsection{The pure-state inequality}
Let $\phi_{B_1B_2E_1E_2}$ be pure. It suffices when optimizing $J(E_2;B_2)_\phi$ to consider rank-one POVMs on $E_2$. Indeed, a spectral decomposition of every POVM effect gives a rank-one refinement. Forgetting the refinement label is a classical channel, so data processing implies that refinement cannot decrease the outcome--$B_2$ mutual information.

Fix such a POVM, with effects $M_x=\ket{v_x}\bra{v_x}$ and $\sum_xM_x=I_{E_2}$; the vectors $\ket{v_x}$ need not be normalized. Define
\begin{equation}
 \ket{\xi_x}_{B_1B_2E_1}
 \coloneqq (I_{B_1B_2E_1}\otimes\bra{v_x}_{E_2})\ket\phi,
 \qquad q_x\coloneqq\langle\xi_x|\xi_x\rangle.
 \label{eq:formation-measurement-vectors}
\end{equation}
For $q_x>0$, let $\phi^x\coloneqq\ket{\xi_x}\bra{\xi_x}/q_x$. Terms with zero probability are omitted. Completeness of the POVM implies
\begin{equation}
 \sum_xq_x\phi^x_{B_1B_2E_1}=\phi_{B_1B_2E_1},
 \qquad
 \sum_xq_x\phi^x_{B_1E_1}=\phi_{B_1E_1}.
 \label{eq:formation-measurement-averages}
\end{equation}
Thus the measurement leaves the unconditional state of the unmeasured systems unchanged.

Convexity of entanglement of formation, followed by its monotonicity under discarding $B_2$, gives
\begin{align}
 E_F(B_1;E_1)_\phi
 &\leq\sum_xq_x E_F(B_1;E_1)_{\phi^x}\\
 &\leq\sum_xq_x E_F(B_1B_2;E_1)_{\phi^x}\\
 &=\sum_xq_x S(B_1B_2)_{\phi^x}.
 \label{eq:formation-residual-entanglement}
\end{align}
The final equality uses purity of each $\phi^x_{B_1B_2E_1}$. In the second line, $B_1B_2$ is one side of the entanglement cut, so discarding $B_2$ is a local channel.

Record the outcome in the classical register $X$ and define
\begin{equation}
 \theta_{XB_1B_2E_1}\coloneqq
 \sum_xq_x\proj x_X\otimes\phi^x_{B_1B_2E_1}.
 \label{eq:formation-postmeasurement-state}
\end{equation}
Using~\eqref{eq:formation-measurement-averages}, the classical--quantum entropy identity, and data processing under discarding $B_1$, we obtain
\begin{align}
 S(B_1B_2)_\phi-E_F(B_1;E_1)_\phi
 &\geq S(B_1B_2)_\phi-\sum_xq_xS(B_1B_2)_{\phi^x}\\
 &=I(X;B_1B_2)_\theta\\
 &\geq I(X;B_2)_\theta.
 \label{eq:formation-pure-information}
\end{align}
This holds for every rank-one POVM on $E_2$. Taking the supremum over those POVMs therefore gives
\begin{equation}
 S(B_1B_2)_\phi
 \geq E_F(B_1;E_1)_\phi+J(E_2;B_2)_{\phi_{B_2E_2}}.
 \label{eq:formation-pure-bound}
\end{equation}
No existence assumption for an optimizing measurement is needed: a supremum of quantities bounded by the same left side is bounded by that left side.

\subsection{Passing to an arbitrary mixed state}
Consider an arbitrary pure-state ensemble decomposition
\begin{equation}
 \sigma_{B_1B_2E_1E_2}
 =\sum_zp_z\phi^z_{B_1B_2E_1E_2}.
 \label{eq:formation-global-decomposition}
\end{equation}
Apply~\eqref{eq:formation-pure-bound} to each component and average:
\begin{equation}
 \sum_zp_zS(B_1B_2)_{\phi^z}
 \geq\sum_zp_zE_F(B_1;E_1)_{\phi^z}
       +\sum_zp_zJ(E_2;B_2)_{\phi^z_{B_2E_2}}.
 \label{eq:formation-average-pure-bound}
\end{equation}
The two sums on the right are controlled separately. Convexity of entanglement of formation gives
\begin{equation}
 \sum_zp_zE_F(B_1;E_1)_{\phi^z}
 \geq E_F(B_1;E_1)_\sigma.
 \label{eq:formation-first-marginal}
\end{equation}
Also, $\sigma_{B_2E_2}=\sum_zp_z\phi^z_{B_2E_2}$ is a permissible decomposition in the definition of the mixed convex roof. Its component marginals need not be pure, which is precisely why that definition is appropriate. Consequently,
\begin{equation}
 \sum_zp_zJ(E_2;B_2)_{\phi^z_{B_2E_2}}
 \geq G_{E_2}(\sigma_{B_2E_2}).
 \label{eq:formation-second-marginal}
\end{equation}
Combining these inequalities bounds the average entropy on the left of~\eqref{eq:formation-average-pure-bound} from below by the right side of~\eqref{eq:app-formation-marginal}, independently of the chosen ensemble. Taking the infimum over all pure-state decompositions~\eqref{eq:formation-global-decomposition} gives $E_F(B_1B_2;E_1E_2)_\sigma$ and proves~\eqref{eq:app-formation-marginal}.

This proof uses neither additivity nor strong superadditivity of entanglement of formation. The second marginal contributes $G_{E_2}$, not $E_F(B_2;E_2)$. With $B_1=B^{n-1}$, $E_1=E^{n-1}$, and $(B_2,E_2)$ the last pair, it gives exactly the iteration in~\eqref{eq:G-lower-EC} and hence the finite-copy upper bound~\eqref{eq:measurement-correlation-VF}.

\end{document}